\title[Seidel Minor]{Seidel Minor, Permutation Graphs \\and Combinatorial Properties
}
\author[V. Limouzy]{Vincent Limouzy}
\address{Limos - Univ. Blaise Pascal, Clermont-Ferrand, France}
\email{limouzy@isima.fr}
\begin{document}
\maketitle
\begin{abstract}
A permutation graph is an intersection graph of segments lying between 
two parallel lines. A Seidel complementation of a finite graph at 
a vertex $v$ consists in complementing the edges between 
the neighborhood  and the non-neighborhood  of $v$. Two graphs 
are \textit{Seidel complement} equivalent if one can be obtained 
from the other by a sequence  of \textit{Seidel complementations}.

In this paper we introduce the new concept of \textit{Seidel complementation} and 
\textit{Seidel minor}. We 
show that this operation preserves cographs and the structure of modular 
decomposition. 
\\The main contribution of this paper is  to provide 
a new and succinct characterization of permutation graphs 
namely, a graph is a permutation graph \Iff it does not contain any of  
 the following graphs: $C_5$, $C_7$, $XF_{6}^{2}$, $XF_{5}^{2n+3}$, $C_{2n}, n\geqslant6$
and their complements as a Seidel minor. 
This characterization is in a sense similar to Kuratowski's characterization 
\cite{Kuratowski30} of planar graphs by forbidden topological  minors. 
\\[12pt]
\noindent
\textbf{Keywords:} Graph; Permutation graph; Seidel complementation; Seidel minor; 
Modular decomposition; Cograph;  Local complementation; Well Quasi Order.
\end{abstract}
\section{Introduction}

A lot of graph classes  are frequently characterized by a list of forbidden 
induced subgraphs. For instance 
such characterization is known for  cographs, interval graphs, chordal graphs...
  However, it is not always convenient to deal with this 
kind of characterizations and  the list of forbidden subgraphs can be quite 
large. Some characterizations rely on the use of local operators such 
as minors, local complementation or Seidel switch. 

Certainly, Kuratowksi's characterization of planar graphs by forbidden 
topological minors is one of the most famous \cite{Kuratowski30}.

A nice characterization of circle graphs, \ie the intersection graphs  
 of chords in a circle, was given by Bouchet \cite{Bouchet94}, using 
an operation called \textit{local complementation}. This operation consists 
in complementing the graph induced by the neighborhood of a vertex. His 
characterization states that a graph is a circle graph \Iff  it does not contain 
 $W_5$,$W_7$ and $BW_3$\footnote{$W_5$ (\resp $W_7$) is the wheel on five  (\resp seven) vertices, 
\ie a chordless cycle  vertices plus a dominating vertex, and
$BW_{3}$ is a wheel on three vertices where the cycle is subdivided.} 
as vertex minor. This operation has strong connections with a graph 
decomposition called \textit{rank-width}, this relationship 
is presented in the work of Oum  \cite{Oum05,Oum05a}.

Another example of local operator is the Seidel switch.
The Seidel switch is a graph operator  introduced by Seidel in his seminal paper \cite{Seidel76}. 
 A Seidel switch in a graph consists in complementing the edges between a subset 
of vertices $S$ and its complement $V \setminus S$.  
 
Seidel switch 
 has been intensively studied since its introduction; 
Colbourn \etal \cite{ColbournC80} proved that  deciding whether two graphs 
are Seidel switch equivalent is 
\Iso-Complete. The Seidel switch has also applications 
in graph coloring \cite{Kratochvil03}. Other interesting applications 
of Seidel switch concerns structural graph properties \cite{Hayward96,Hertz99}.
It has also been used by Rotem and Urrutia \cite{RotemU82} to show 
that the recognition of circular permutation graphs (CPG for short) can be
 polynomially reduced to the recognition of permutation graphs. Years later, 
Sritharan  \cite{Sritharan96} presented a nice and efficient algorithm to 
recognize CPGs in linear time. Once again it is a reduction to permutation 
graph recognition, and it relies on the  use of a Seidel switch. 
 Montgolfier \etal \cite{MontgolfierR05,MontgolfierR05a} used it 
to characterize graphs completely decomposable  \wrt Bi-join decomposition. 
Seidel switch is not only relevant to the study of graphs.  
Ehrenfeucht \etal \cite{EhrenfeuchtHR99}
showed the interest of this operation for the study of $2$-structures and recently, 
Bui-Xuan \etal extended these results to broader structures called Homogeneous 
relations \cite{BuiXuanHLM07,BuiXuanHLM08,BuiXuanHLM08a}.

We present in this paper a novel characterization of the well known 
class of permutation graphs, \ie, the intersection graphs of segments lying between 
two parallel lines. 
Permutation graphs were introduced by Even, Lempel and Pnueli 
\cite{PnueliLE71,EvenPL72}. They established that a graph is  a permutation 
graph if and only if the graph and its complement are \textit{transitively orientable}.
They also gave a polynomial time procedure to find a transitive orientation 
when it is possible. A linear time algorithm recognition algorithm 
is presented in \cite{McConnellS99}.

This results 
constitutes, in a sense, an improvement  compared to Gallai's characterization of permutation 
graphs by forbidden induced subgraphs which counts no less than
18 finite graphs, and 14 infinite families \cite{Gallai67}.

For that we introduce a new local operator called Seidel complementation. 
In few words, the Seidel complementation on an undirected graph at a vertex $v$ consists 
in complementing the edges between the neighborhood and the non-neighborhood 
of $v$. A schema of Seidel complementation is depicted in 
Figure \ref{fig:SeidelComplement}.
Thanks to this operator and the corresponding minor, we obtain a 
compact list of Seidel minor obstructions for permutation graphs.

The main result of this paper is a new characterization of  permutation graphs.
We show that a graph is a permutation graph \Iff it does not contain any of
 the following graphs  $C_5$, $C_7$, $XF_{6}^{2}$, $XF_{5}^{2n+3}$, $C_{2n}, n\geqslant6$
or their complements as Seidel minors.

The proof is based on  a  study of the relationships between Seidel 
complementation and modular decomposition.
We show that any Seidel complementation of a prime graph \wrt modular decomposition 
is a prime graph. As a consequence we get that cographs are stable 
under Seidel complementation. We also present a complete characterization 
of equivalent cographs, which leads to a linear time algorithm for verifying 
Seidel complement equivalence of cographs.

Our notion of Seidel 
complementation is a combination of local complementation and Seidel switch. 
The use of a vertex  as \emph{pivot} comes from local 
complementation, and the transformation from Seidel switch.

The paper is organized as follows. In section \ref{sec:Intro} we present 
the definitions of Seidel complementation and  Seidel minor. 
Then  we show  some structural properties of 
Seidel complementation and we introduce the definitions and notations used
 in the sequel of the paper. In section \ref{sec:MDSeidel} we show  the 
relationships between Seidel complementation and modular decomposition, 
namely we prove that Seidel complementation preserves the structure of modular 
decomposition of a graph. Finally we show that cographs are closed under 
this relation. 
Section \ref{sec:PermutationGraph} is devoted to prove 
the main theorem, namely  a graph is a permutation graph 
\Iff it does not contain any of the forbidden Seidel minors. 
We also prove that permutation graphs are not  well quasi ordered under the  Seidel minor relation.
In section \ref{sec:equivalence} we show that any Seidel complement equivalent graphs are at distance                                                                                                                                      
at most one from each other, we show that to decide when two graphs are                                                                                                                                                                    
equivalent under the Seidel complement relation is \Iso-Complete and we provide                                                                                                                                                            
two polynomial algorithms to solve this problem on cographs and on permutation                                                                                                                                                             
graphs.                                                                                                                                                                                                                                    
And finally in section \ref{sec:tournaments} we propose a definition of Seidel complementation                                                                                                                                              
for tournaments, and we show, with that definition, we have same property                                                                                                                                                                  
\wrt modular decomposition as for undirected graphs.              

\section{Definitions and notations}
\label{sec:Intro}

In this paper only undirected, finite, loop-less and simple graphs are considered.
We present here  some notations used in the paper.
The graph induced by a subset of vertices $X$ is noted $G[X]$.
For a vertex $v$, $N(v)$ denotes the neighborhood of $v$, and $\noN{v}$ 
represents  the non-neighborhood.
Sometimes we need to use a refinement of the neighborhood on 
a subset of vertices $X$, noted $N_{X}(v) = N(v) \cap X$.
Let $A$ and $B$ be two disjoint subsets of $V$, and 
let $E[A,B] = \{ab \in E : a \in A \text{~and~} b \in B\}$ be the set of edges between 
$A$ and $B$. For two sets $A$ and $B$, let $A\Delta B = (A \setminus B) \cup (B\setminus A)$.

\begin{definition}[Seidel complement]
\label{def:SeidelSwitchN}
Let $G=(V,E)$ be a  graph, and let $v$ be 
a vertex of $V$, and the Seidel complement  at $v$ on $G$, denoted $G*v$ is defined as 
follows:
\\
Swap the edges and the non-edges between $G[N(v)]$ and $G[\noN{v}]$, namely 
\\
$$G * v  =(V, E \Delta \{xy : vx \in E, vy \notin E\})$$
\end{definition}

From the previous definition it is straightforward to notice that $G*v*v = G$.
\begin{proposition}
\label{prop:pivot}
Let $G$ be a graph. If  $vw$  is an edge of $G$,
then $G * v*w*v = G *w* v* w$. This operation is denoted $G \star vw$.
\end{proposition}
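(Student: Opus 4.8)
The plan is to encode adjacency over $\mathbb{F}_2$ and thereby reduce the whole identity to a short symbolic computation. For distinct $x,y$ write $a_{xy}\in\{0,1\}$ for the indicator of the edge $xy$ (so $a_{xy}=a_{yx}$ and $a_{xx}=0$). First I would translate Definition~\ref{def:SeidelSwitchN} into an update rule. The set toggled by $*u$ is the cut between $N(u)$ and $\overline{N}(u)=V\setminus N[u]$, so an edge $xy$ with $x,y\neq u$ is switched precisely when exactly one of $x,y$ lies in $N(u)$, i.e. when $a_{ux}+a_{uy}=1$; and no edge incident to $u$ is affected. Over $\mathbb{F}_2$ this reads
\[
(G*u)_{xy}=a_{xy}+a_{ux}+a_{uy}\quad(x,y\neq u),\qquad (G*u)_{ux}=a_{ux},
\]
the second identity (row $u$ is frozen) being exactly what makes $G*u*u=G$.

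My main claim would be a closed form: if $vw\in E$, then $G*v*w*v$ is obtained from $G$ by \emph{exchanging the adjacencies of $v$ and $w$ to all other vertices}, leaving the edge $vw$ and every edge inside $V\setminus\{v,w\}$ untouched. Since this description is manifestly symmetric in $v$ and $w$, the same statement read with $v$ and $w$ interchanged identifies $G*w*v*w$ with the \emph{same} graph, and the proposition follows at once. Hence it suffices to establish the closed form for one of the two words.

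To prove the closed form I would fix distinct $x,y\in V\setminus\{v,w\}$ and carry the six relevant entries $a_{xy},a_{vx},a_{vy},a_{wx},a_{wy}$, together with $a_{vw}=1$, through the three applications of the update rule, taking care that the frozen row changes at each step ($v$, then $w$, then $v$). Substituting and collecting over $\mathbb{F}_2$, so that the repeated contributions cancel in pairs, should yield
\[
(G*v*w*v)_{xy}=a_{xy},\qquad (G*v*w*v)_{vx}=a_{wx},\qquad (G*v*w*v)_{wx}=a_{vx},
\]
while $a_{vw}$ remains $1$ at every stage because it is incident to the pivot vertex throughout. These four families of pairs — namely $\{v,w\}$, $\{v,x\}$, $\{w,x\}$ and $\{x,y\}$ — exhaust all vertex pairs, so they determine the graph completely and reproduce exactly the neighbourhood exchange described above.

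The genuinely delicate point is the bookkeeping in the last step: because each Seidel complement freezes a different row, one cannot iterate the additive formula uniformly, and the entries $a_{vx},a_{wx}$ must be updated or held fixed according to which vertex is the current pivot. Keeping the frozen row straight across the three stages — and treating the cross pairs (one endpoint in $\{v,w\}$) separately from the internal pairs — is where all the care is needed; the cancellations themselves are immediate once every term is written out, with the hypothesis $vw\in E$ entering only through the value $a_{vw}=1$ fed into these updates.
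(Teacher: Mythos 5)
Your proof is correct, and it takes a genuinely different route from the paper's. The paper argues by a pictorial case analysis: it splits the vertices outside $\{v,w\}$ into the four classes $A=N(v)\setminus N(w)$, $B=N(v)\cap N(w)$, $C=N(w)\setminus N(v)$ and the common non-neighbourhood $D$, then tracks through Figure \ref{fig:WellDefined} how the cuts between these classes get complemented in each of the six intermediate graphs, concluding that both words land on the same graph; the closed form you prove (that the operation merely exchanges $v$ and $w$ and fixes everything else) appears in the paper only as a remark read off the figure afterwards. Your $\mathbb{F}_2$ bookkeeping replaces the figure by a short calculation, and your symmetry observation genuinely halves the work, since only the word $*v*w*v$ needs to be expanded. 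For the record, the expansion does come out as you predict: with your update rule, after the first two steps one gets, for $x,y\notin\{v,w\}$,
\[
(G*v*w)_{vx}=a_{wx},\qquad (G*v*w)_{wx}=a_{wx}+a_{vw}+a_{vx},\qquad (G*v*w)_{xy}=a_{xy}+a_{wx}+a_{wy},
\]
and the final $*v$ freezes row $v$, restores all internal pairs, and turns row $w$ into $a_{vx}$, giving exactly the exchange map. One small improvement you missed: if you leave $a_{vw}$ as an indeterminate rather than setting it to $1$, every occurrence of $a_{vw}$ above cancels in pairs, so the identity holds whether or not $vw$ is an edge --- this is precisely Remark \ref{rem:pivotnn} of the paper, which your method yields for free. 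Your closing comment that the hypothesis $vw\in E$ enters ``only through the value $a_{vw}=1$'' is therefore slightly misleading: in fact it never enters at all.
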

\begin{proof}
Let us consider the neighborhood of $v$ and $w$. Let $N(v) = A \cup B$ 
and let $N(w) = B \cup C$, where $B$ is obviously $N(v) \cap N(w)$,
and let $D$ be $\noN{v} \cap \noN{w}$.
We know that $v$ is connected to $A \cup B \cup \{w\}$ and 
that $w$ is connected to $B \cup C \cup \{v\}$. But we do not 
know how  the sets $A,~B,~C$ and $D$ are connected. We just 
say there are mixed edges between each set.
See Figure \ref{fig:WellDefined}.
\end{proof}

\begin{figure}[h!]
	\begin{center}
		\subfigure[G][]{\includegraphics[scale=.85]{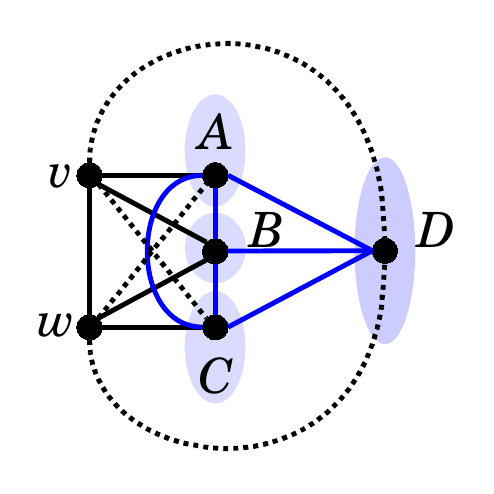}}
		\\
		\subfigure[][$G*v$]{
		\includegraphics[scale=.85]{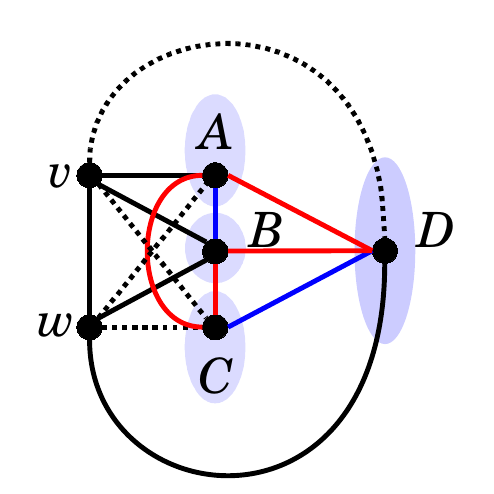}}
		\subfigure[][$G*v*w$]
		{\includegraphics[scale=.85]{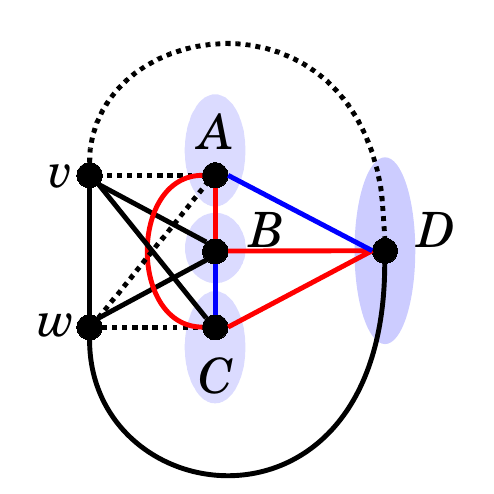}}
		\subfigure[][$G*v*w*v$]
		{\includegraphics[scale=.85]{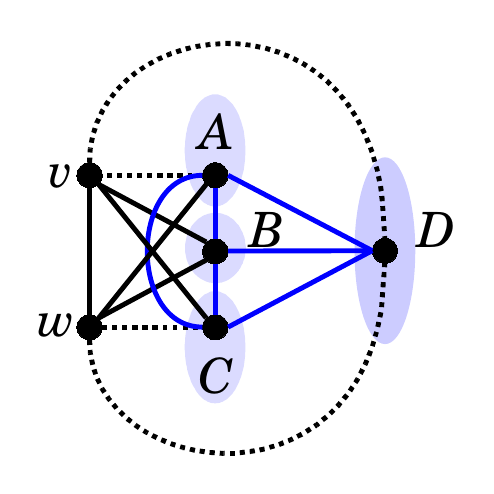}}
		\\
		\subfigure[][$G*w$]{\includegraphics[scale=.85]{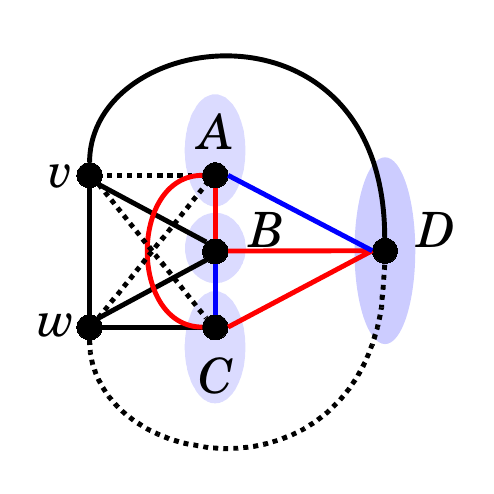}}
		\subfigure[][$G*w*v$]{\includegraphics[scale=.85]{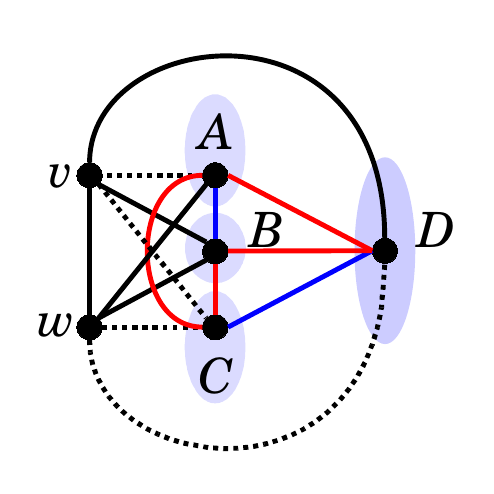}}
		\subfigure[][$G*w*v*w$]{\includegraphics[scale=.85]{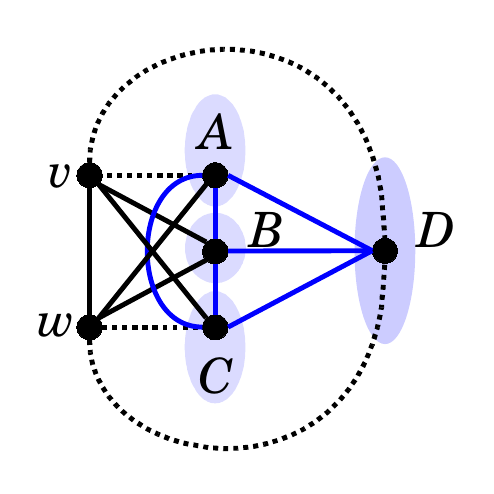}}
	\caption{$G*w*v*w = G*v*w*v = G \star vw$}
	\label{fig:WellDefined}
	\end{center}
\end{figure}
\begin{rem}[]
One can remark from Figure \ref{fig:WellDefined} that the $\star$ operation 
merely exchanges the vertices $v$ and $w$ without modifying the graph $G[V\setminus \{v\}\cup\{w\}]$.
\end{rem}

\begin{rem}[]
\label{rem:pivotnn}
 Proposition \ref{prop:pivot} remains true even if $v,w$ is not an edge of $G$. The proof 
 is similar to the proof of the Propositions \ref{prop:pivot}.
\end{rem}

\begin{definition}[Seidel Minor]
\label{def:SeidelMinor}
Let $G=(V,E)$ and $H=(V',E')$ be two graphs. $H$ is a Seidel minor of $G$
(noted $H \leqslant_{S} G$) if $H$ can be obtained from $G$ by a 
sequence of the following operations:
\begin{itemize}
	\item Perform a Seidel complementation at a vertex $v$ of $G$, 
	\item Delete a vertex  of $G$.
\end{itemize}
\end{definition}
\begin{definition}[Seidel Equivalent Graphs]
\label{def:SeidelEquivalent}
 Let $G=(V,E)$ and $H=(V,F)$ be two  graphs. $G$ and $H$ are said to be Seidel 
equivalent \Iff there exists a word $\omega$ defined on $V^{*}$ such 
that $G * \omega \cong H$.
\end{definition}
\begin{figure}[h!]
	\begin{center}
		\includegraphics[scale=1.5]{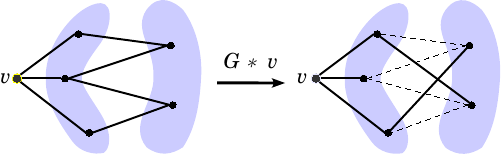}
	\caption{An illustration of the Seidel complement concept.}
	\label{fig:SeidelComplement}
	\end{center}
\end{figure}
At first glance Seidel complementation seems to be just a particular case                                                                                                                                                                  
of Seidel switch, but after a careful examination, one can see that they are                                                                                                                                                                     
not comparable.                                                                                                                                                                                                                            
                                                                                                                                                                                                                                            
Actually two graphs $G$ and $H$ that belong to the same Seidel switch                                                                                                                                                                     
equivalence class share a common combinatorial structure called a $2$-graph                                                                                                                                                                
\cite{Seidel76,EhrenfeuchtHR99}. A $2$-graph $\Omega=(V,D)$ is $3$-regular                                                                                                                                                                 
hypergraph where $V$ is the ground set and $D$ is the set of hyperedges, and                                                                                                                                                               
for each subset $S$ of $V$ of size $4$, we have $|D \cap S|\equiv 0~ mod~2$. A                                                                                                                                                                     
$2$-graph can be obtained from a graph by taking in $D$ all the triples of                                                                                                                                                                 
vertices with an odd number of edges.                                                                                                                                                                                                        
And from that definition we can see that the Seidel complementation applied                                                                                                                                                                
on a graph does not preserve the underlying $2$-graphs.                                                                                                                                                                                    
We can also see that with the Seidel complementation and with the Seidel                                                                                                                                                                   
switch starting from a same graph, the graphs we obtain with each operator                                                                                                                                                                 
are different. The reader can convince himself by looking at the house,                                                                                                                                                                    
\ie a cycle $C$ on five vertices plus a short chord connecting two vertices                                                                                                                                                                
at distance two in $C$.                

\section{Modular decomposition and cographs}
\label{sec:MDSeidel}
In this section we investigate the relationships between Seidel complementation 
and modular decomposition. This study is relevant in order to prove 
the main result. Actually a permutation graph is uniquely representable 
\Iff it is prime \wrt to modular decomposition. And one of the results 
of this section is to prove that if a graph is prime \wrt modular 
decomposition this property is preserved by Seidel complementation.
As a consequence for  permutation graphs, it means that 
if the graph is uniquely representable so are their Seidel complement 
equivalent graphs. 

Let us now briefly recall the definition of module.
A module in a graph is subset of vertices $M$  such that any vertex outside 
$M$ is either completely connected to $M$ or is completely disjoint from $M$. 
Modular decomposition is a decomposition of graph introduced by 
Gallai \cite{Gallai67}. The modular decomposition of a graph $G$  is the 
decomposition of $G$ into its modules. Without 
going too deeply into the details, there exists for each graph a unique modular 
decomposition tree, and it is possible to compute it in linear time (\cf 
\cite{TedderCHP08}).

In the sequel of this section we show that if $G$ is prime, \ie not 
decomposable, \wrt modular decomposition, then applying a Seidel complementation 
at any vertex of the graph preserves this property. 
Then we prove that the family of cographs is closed under Seidel minor. 
And finally show how the modular decomposition tree of a graph 
 is modified by a Seidel complementation.
\subsection{Modular decomposition}
\begin{theorem}[]
\label{th:SeidelMDPrime}
Let $G=(V,E)$  be graph, and let $v$ be an arbitrary vertex of $G$. 
$G$ is  prime \wrt modular decomposition \Iff $G*v$ is prime \wrt to modular 
decomposition.
\end{theorem}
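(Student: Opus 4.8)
The plan is to exploit the involution $G * v * v = G$ noted right after Definition~\ref{def:SeidelSwitchN}: it suffices to prove the single implication that a nontrivial module of $G*v$ yields a nontrivial module of $G$, since applying this to $H = G*v$ (for which $H*v = G$) immediately gives the converse and hence the stated equivalence. So I would reformulate the goal as ``if $G*v$ is not prime then $G$ is not prime''. Writing $H = G*v$ and $N$ for $N(v)$, I would first record the two facts everything rests on: $v$ has the same neighbourhood in $G$ and in $H$, and the only adjacencies that differ between $G$ and $H$ are those between $N$ and $\noN{v}$, which are exactly complemented. Accordingly, for two vertices $x,y \neq v$ the $H$-edge agrees with the $G$-edge when $x,y$ lie on the same side of the partition $(N,\noN{v})$, and is its negation when they lie on opposite sides.

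Now take a nontrivial module $M$ of $H$, set $M_1 = M \cap N$ and $M_2 = M \cap \noN{v}$, and split on whether $v \in M$. When $v \notin M$, applying the module condition to the external vertex $v$ forces $M \subseteq N$ or $M \subseteq \noN{v}$, since otherwise $v$ would be $H$-adjacent to part of $M$ and $H$-non-adjacent to another part. Once $M$ lies entirely on one side, every external vertex sees $M$ through a relation that is globally either identical to, or the complement of, its $G$-relation to $M$; an ``adjacent to all / adjacent to none'' pattern in $H$ therefore translates into the same pattern in $G$. Hence $M$ is itself a module of $G$, nontrivial because it has the same vertex set in both graphs.

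The main obstacle is the case $v \in M$, where $M$ may straddle both $N$ and $\noN{v}$ and where $M$ is in general \emph{not} a module of $G$. The key computation I would carry out is to pin down, for each $u \notin M$, its $G$-adjacencies to $M_1$ and to $M_2$. The module condition in $H$ forces $u$ to be $H$-adjacent to all of $M$ when $u \in N$ and to none of $M$ when $u \in \noN{v}$, the two cases being decided by the $H$-adjacency of $u$ to $v$; undoing the $N$/$\noN{v}$ complementation then collapses both cases to the \emph{same} conclusion, namely that in $G$ every $u \notin M$ is adjacent to all of $M_1$ and to no vertex of $M_2$. Since $v$ trivially shares this property ($v$ is adjacent to all of $N \supseteq M_1$ and to none of $\noN{v} \supseteq M_2$), the set $W = \{v\} \cup (V \setminus M)$ has the feature that every vertex of $M_1$ is completely joined to $W$ while every vertex of $M_2$ is completely non-adjacent to $W$. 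As $M_1 \cup M_2 = M \setminus \{v\}$ is exactly the complement of $W$, this says precisely that $W$ is a module of $G$, and it is nontrivial because $1 \leqslant |M \setminus \{v\}| \leqslant |V| - 2$.

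Finally I would close by symmetry: applying the implication just proved to $H = G*v$ shows that a nontrivial module of $G$ forces one in $G*v$, so $G$ is prime if and only if $G*v$ is. I expect essentially all the difficulty to sit in the $v \in M$ case, and specifically in realising that the module to exhibit in $G$ is not $M$ itself but the \emph{complemented} set $\{v\} \cup (V \setminus M)$; the two sub-computations (for $u \in N \setminus M_1$ and for $u \in \noN{v} \setminus M_2$) are routine once the complementation rule is written down, but the real content is that they converge to a single uniform description of how $W$ attaches to $M_1$ and $M_2$.
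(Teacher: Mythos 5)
Your proposal is correct and takes essentially the same approach as the paper: the same case split on whether $v \in M$, with $M$ itself surviving as a module of $G$ when $v \notin M$, and your set $W = \{v\} \cup (V \setminus M)$ being exactly the paper's module $\overline{M} \cup \{v\}$ in the case $v \in M$. The only difference is presentational: you spell out the adjacency computations that the paper delegates to representative vertices and figures.
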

\begin{proof}
Let us proceed by contradiction. Let us assume that $G$ is prime and 
$G * v$ has a module $M$. 

We have to consider two cases: 
(1) $v \in M$ and 
(2) $v \notin M$.
\\
\textbf{(1)}
Since $M$ is not trivial: we have $|M| \geqslant 2$ and 
$|\overline{M}|\geqslant 1$.
\\
We can identify four representative vertices of $G$: let $A$ 
be a vertex of $\noN{v}\cap M$, let $B$ be a vertex of 
$\noN{v} \cap \overline{M}$, let $C$ be a vertex of 
$N(v) \cap \overline{M}$ 
and let $D$ be a vertex of $N(v) \cap M$. Since $M$ is a module 
we have the following edges: $CA$ and $CD$ and the following non-edges:
$BA$ and $BD$ (\cf Figure \ref{fig:VinM-a}). 

By definition of Seidel complementation at  a vertex, it is equivalent 
to swap the edges and non-edges between the neighborhood and 
the non-neighborhood of $v$. We obtain the result 
depicted in Figure \ref{fig:VinM-b}. Now we can clearly see 
that $\overline{M} \cup \{v\}$ is a module in $G$, and 
since $|\overline{M}|\geqslant 1$ we obtain a non-trivial module. 
Thus a contradiction.
\\
\noindent
\textbf{(2)}
Let us consider the case where $v$ does not belong to $M$. 
We can assume, \Wlog, that $M \subseteq N(v)$. We can partition 
$N(v)$ into $A_1,A_2$ such that $N_M(A_1) = M$ and $N_M(A_2) = \varnothing$. 
And similarly we can partition $\noN{v}$ into $B_1,B_2$ such 
that $N_M(B_1) = M$ and $N_M(B_2)= \varnothing$. 
(\cf Figure \ref{fig:vNotinM-a}-\subref{fig:vNotinM-b})

Since we have proceeded to a Seidel complement on $v$, the original configuration 
in $G$ is such that $N_M(B_1) = \varnothing$ and $N_M(B_2) = M$. This is the 
only modification \wrt $M$. Thus $M$ is also a module in $G$. Contradiction.
\end{proof}
%
From the second case of the proof of Theorem \ref{th:SeidelMDPrime} we can deduce 
the following corollary:

\begin{corollary}
\label{cor:BackboneModule}
Let $G=(V,E)$ and let $v$ be a vertex. And let $M$ be a module 
of $G$ such that $v$ does not belong to $M$, then $M$ is also a module 
in $G * v$.
\end{corollary}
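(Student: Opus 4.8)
The plan is to observe that Corollary \ref{cor:BackboneModule} is essentially a restatement of case (2) in the proof of Theorem \ref{th:SeidelMDPrime}, but stripped of the primality hypothesis, so the argument should go through almost verbatim without ever invoking that $G$ is prime. Concretely, I would fix a module $M$ of $G$ with $v \notin M$ and show directly that $M$ remains a module after Seidel complementation at $v$.

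First I would recall what it means for $M$ to be a module in $G$: every vertex $x \notin M$ is either adjacent to all of $M$ or to none of $M$. The key structural observation is that Seidel complementation at $v$ only toggles edges of the form $xy$ with $vx \in E$ and $vy \notin E$, i.e.\ edges running between $N(v)$ and $\noN{v}$. So I would split the outside vertices according to the partition induced by $v$: write $N(v) = A_1 \cup A_2$ where $A_1$ consists of the vertices of $N(v)$ fully adjacent to $M$ and $A_2$ of those with no edge to $M$, and similarly $\noN{v} = B_1 \cup B_2$. (Since $v \notin M$, the vertex $v$ itself lies outside $M$ and is handled the same way; as $M$ is a module, $v$ is either complete to or anticomplete to $M$, which just places $v$ in one of these blocks for the purpose of the toggling.) The point is that this four-way partition is exactly the data appearing in Figure \ref{fig:vNotinM-a}--\subref{fig:vNotinM-b}.

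Next I would track how each block interacts with $M$ under the operation. A vertex $x$ outside $M$ sees its edges to $M$ toggled precisely when $x$ lies on the opposite side of the $v$-partition from the corresponding endpoint in $M$; but because every vertex of $M$ is treated uniformly by $x$ (that is the defining property of a module), the toggling acts \emph{uniformly} on all pairs $xm$ with $m \in M$. Hence a vertex complete to $M$ either stays complete or becomes anticomplete, and likewise an anticomplete vertex stays anticomplete or becomes complete — in every case it remains either complete or anticomplete to $M$. This is the whole content: the module property is preserved because Seidel complementation cannot create a ``mixed'' adjacency between an outside vertex and $M$ when $M$ was uniform and $v \notin M$.

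The only genuine subtlety, and the step I would treat most carefully, is the role of $v$ and the interaction of $M$ with the partition $N(v)/\noN{v}$ itself. One must check that no vertex of $M$ changes its status relative to the $v$-partition in a way that splits $M$; but since the complementation never toggles edges incident to $v$ and treats all of $M$ as a single block with respect to any fixed outside vertex, $M$ stays internally coherent. I would finish by remarking that this is exactly the computation already carried out in case (2) of Theorem \ref{th:SeidelMDPrime}, where the configuration $N_M(B_1)=\varnothing,\ N_M(B_2)=M$ in $G$ becomes $N_M(B_1)=M,\ N_M(B_2)=\varnothing$ in $G*v$ and nothing else relevant to $M$ changes; dropping the primality assumption removes the derivation of a contradiction but leaves the module-preservation conclusion intact.
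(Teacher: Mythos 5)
Your proposal is correct and matches the paper's own treatment: the paper derives Corollary \ref{cor:BackboneModule} directly from case (2) of the proof of Theorem \ref{th:SeidelMDPrime}, which is exactly the computation you replay without the primality hypothesis, the key fact being that $v \notin M$ forces $M \subseteq N(v)$ or $M \subseteq \noN{v}$, so the toggling of edges between $N(v)$ and $\noN{v}$ acts uniformly on each outside vertex's adjacency to $M$. One small caution: your middle paragraph attributes this uniformity to the module property of $M$ with respect to $x$, whereas the correct justification is the parenthetical fact you stated earlier, namely that $v$ itself is complete or anticomplete to $M$, which places all of $M$ on one side of the $v$-partition.
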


%
\begin{figure}[h!]
	\begin{center}
		\subfigure[][$G*v$]
		{\label{fig:VinM-a}
		\includegraphics[scale=.80]{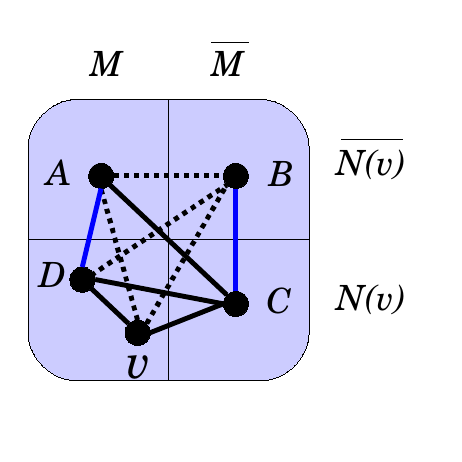}}
		\subfigure[][$G$]
		{\label{fig:VinM-b}
		\includegraphics[scale=.80]{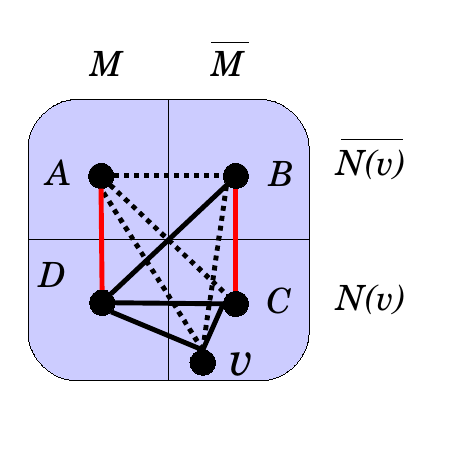}}
\hfill
		\subfigure[][Configuration in $G * v$ \label{fig:vNotinM-a}]
		{\includegraphics[scale=.80]{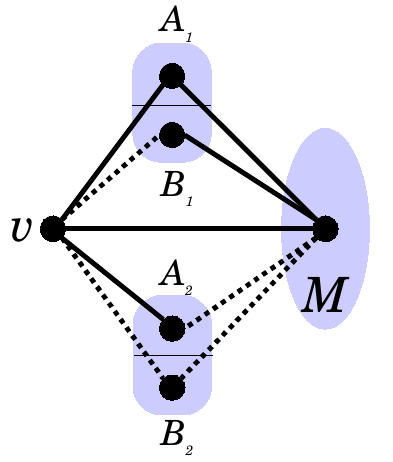}}
		\subfigure[][Configuration in $G$ \label{fig:vNotinM-b}]
		{\includegraphics[scale=.80]{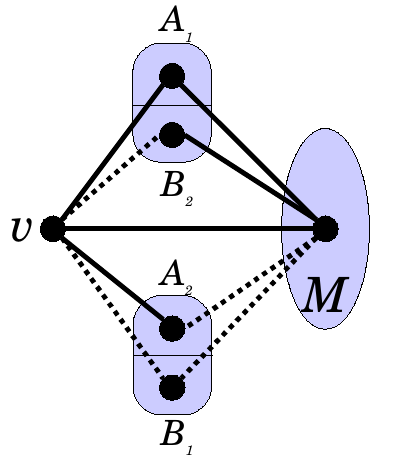}}
	\caption{Details of theorem \ref{th:SeidelMDPrime}. The Figures 
\subref{fig:VinM-a}-\subref{fig:VinM-b} correspond to the case where $v$ belongs 
to $M$. And the Figures \subref{fig:vNotinM-a}-\subref{fig:vNotinM-b} correspond 
to the other case.}
	\label{fig:VinM}
	\label{fig:vNotinM}
	\end{center}
\end{figure}
%
\subsection{Cographs}

Cographs are the graphs which are completely decomposable \wrt modular decomposition. 
There exist several characterizations of cographs (see \cite{CorneilPS85}), one of them  
is given by a forbidden induced subgraph, \ie cographs are the graphs without $P_4$ --a 
chordless  path on four vertices-- as induced subgraph. Another fundamental property 
of cograph is the fact that its modular decomposition tree --called 
its co-tree-- has only series (1) and parallel (0) nodes as internal nodes. 
An example of a cograph and its associated co-tree is given in Figure \ref{fig:PetitExemple}.
A co-tree is a rooted tree, where the leaves represent the vertices of the graph, 
and the internal nodes of the co-tree encode the adjacency of the vertices of the graph.
Two vertices are adjacent iff their Least Common Ancestor\footnote{The \LCA of two leaves 
$x$ and $y$ is first node in common on the paths from the leaves to the root.} 
(\LCA) is a series node (1). 
Conversely two vertices are disconnected iff their \LCA is a parallel node (0).
The following theorem shows that the class of cographs is closed under Seidel 
complementation.
\begin{theorem}[]
\label{th:SeidelCograph}
Let $G=(V,E)$ be a cograph, and $v$ a vertex of $G$, then 
$G * v$ is also a cograph.
\end{theorem}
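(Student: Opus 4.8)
The plan is to argue by induction on $|V|$, using the characterization of cographs recalled just above (no prime node in the modular decomposition, equivalently a co-tree with only series and parallel nodes), together with two elementary commutation facts. The first fact is that Seidel complementation commutes with complementation: the set of pairs flipped at $v$ is exactly the set of pairs separating $N(v)$ from $\noN{v}$, and since this is the same unordered partition in $G$ and in $\overline{G}$, a pairwise check on a pair $p$ (comparing whether $p$ is flipped and whether it is complemented) gives $\overline{G*v}=\overline{G}*v$. The second fact is that Seidel complementation commutes with taking induced subgraphs that contain the pivot: if $v\in W$ then $(G*v)[W]=(G[W])*v$, because inside $W$ the relevant partition is just $N(v)\cap W$ versus $\noN{v}\cap W$. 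Since cographs are closed under complementation, the first fact lets me assume \Wlog that the root of the co-tree of $G$ is a parallel node.

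Under that assumption $G$ is the disjoint union of the modules $M_1,\dots,M_k$ ($k\geqslant 2$) attached to the root, with each $G[M_i]$ a smaller cograph and no edges between distinct parts. Say $v\in M_1$, and set $A=N(v)$ (so $A\subseteq M_1$), $B=M_1\setminus(A\cup\{v\})$ and $R=M_2\cup\dots\cup M_k$. I would then read $G*v$ off directly from the definition. Edges inside $M_1$ transform exactly as a Seidel complement inside $M_1$, so $(G*v)[M_1]=G[M_1]*v$, which is a cograph by the induction hypothesis because $|M_1|<|V|$. Edges incident to $v$ are untouched, so $v$ keeps neighborhood $A$. The pairs between $A\subseteq N(v)$ and $R\subseteq\noN{v}$ cross the partition and are flipped from non-edges (parallel root) into edges, whereas the pairs between $B\cup\{v\}$ and $R$ lie on one side of the partition and stay non-edges; and $R$ retains its internal structure $G[R]$.

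The payoff of this computation is that in $G*v$ the set $R$ is joined completely to $A$ and to nothing else of $M_1$, while $v\not\sim R$ — that is, $R$ copies the external adjacency of $v$ inside the cograph $H_1:=G[M_1]*v$ and is non-adjacent to $v$ itself. Equivalently, $\{v\}\cup R$ is a module of $G*v$ (which Corollary \ref{cor:BackboneModule} also confirms for $R$, a convenient check), and $G*v$ is obtained from $H_1$ by substituting for the vertex $v$ the graph $\{v\}\sqcup G[R]$. Since $H_1$ is a cograph, $\{v\}\sqcup G[R]$ is a cograph (a disjoint union of cographs), and cographs are closed under modular substitution (immediate from the co-tree description, as one simply grafts one co-tree at the leaf $v$), it follows that $G*v$ is a cograph, completing the induction; the base case $|V|=1$ is trivial.

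The main obstacle is the bookkeeping of the middle paragraph: one must pin down precisely which pairs are flipped so that the resulting graph matches a clean substitution rather than an irregular attachment — in particular that $A$, and only $A$, becomes joined to $R$, so that $\{v\}\cup R$ really behaves as a false twin of $v$ blown up by $G[R]$. The commutation with complementation, which reduces the series case to the parallel one, is the other delicate point; once both are in hand, everything else is routine induction and the standard closure properties of cographs. (The same local reduction $(G*v)[W]=(G[W])*v$ also underlies a more computational alternative proof: a $P_4$ in $G*v$ yields, on at most five vertices including $v$, a cograph whose Seidel complement contains a $P_4$, which one rules out by inspection; I prefer the structural induction as it is self-contained.)
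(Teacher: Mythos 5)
Your proof is correct, but it takes a genuinely different route from the paper's. The paper argues directly on the co-tree: performing the Seidel complement at $v$ reverses the path from $P(v)$, the parent of $v$, up to the root $R(T)$, making $P(v)$ the new root and reattaching $v$ under the former root; the key observation is that the least common ancestor of a neighbor and a non-neighbor of $v$ lies on this path and so changes type (series becomes parallel and conversely), while the subgraphs induced by $N(v)$ and by the non-neighborhood are untouched, so the transformed tree is a co-tree of $G*v$. You instead induct on $|V|$ using the recursive characterization of cographs: the commutation $\overline{G*v}=\overline{G}*v$ (valid because both operations flip exactly the pairs separating $N(v)$ from the non-neighborhood) reduces to the case of a parallel root with no circularity, since complementation preserves the vertex count while the inductive call is on the strictly smaller component $M_1$; the commutation $(G*v)[W]=(G[W])*v$ for $v\in W$ identifies $(G*v)[M_1]$ with $G[M_1]*v$; and your bookkeeping of flipped pairs ($A$--$R$ pairs cross the partition and become edges, $B$--$R$ and intra-$R$ pairs do not cross and stay non-edges, edges at $v$ are untouched) correctly exhibits $G*v$ as the substitution of $\{v\}\sqcup G[R]$ for the vertex $v$ in the cograph $G[M_1]*v$, whence closure under substitution finishes. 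What each approach buys: yours is self-contained and needs only standard closure properties of cographs, never describing the resulting co-tree; the paper's explicit tree transformation is a stronger conclusion, and it is precisely what gets reused later for the exchange property (Remark \ref{rem:exchange}), the $O(1)$-time update, the generalization to arbitrary modular decomposition trees (Theorem \ref{th:MDtreeSeidel}), and the linear-time equivalence test for cographs (Lemma \ref{lem:SeidelCographEquivLinear}); your induction, as stated, would not directly yield those consequences.
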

\begin{proof}
Let $T$ be the co-tree of $G$. The Seidel complementation at a vertex $v$ 
is obtained as follows: 
Let $T'$ be the tree obtained by $T*v$.
$P(v)$, the former parent node of $v$,  becomes the new root of $T'$, 
and now the parent of $v$ in $T'$ is the former root, namely $R(T)$. In other words 
by performing a Seidel complementation we have reversed the path from
 $P(v)$ to $R(T)$. 
\\
It is easy to see that $G[N(v)]$ and $G[\noN{v}]$ are not modified.
Now to see that  the adjacency between $G[N(v)]$ and $G[\noN{v}]$ 
is reversed, it is sufficient to remark that for two vertices $u$ and $w$, $u$ belonging 
to the neighborhood of $v$ and $w$ belonging to the non-neighborhood of $v$. 
If $u$ and $w$  are adjacent in $G$ 
it means that their \LCA is a series node. 
We note that this node lies on the path from $v$ to $R(T)$.
After proceeding to a 
Seidel complementation their \LCA is modified and it is now a parallel 
node, consequently reversing the adjacency between the neighborhood and 
the non-neighborhood.
\end{proof}
An example of the Seidel complement of the co-tree is given in Figure 
\ref{fig:ExCographeSeidelComp}.
\begin{rem}[Exchange property]
\label{rem:exchange}
Actually a Seidel complementation on a cograph, or more precisely on 
its co-tree is equivalent of exchanging the root of the co-tree with 
the vertex $v$ used to proceed to the Seidel complement, \ie 
the vertex $v$ is attached to the former root of the co-tree and 
the new root is the former parent of the vertex $v$. 

Except for this transformation, the other parts of the co-tree remain 
unchanged, \ie the number and the types of internal nodes 
are preserved, and no internal nodes are merged. 
\end{rem}
\begin{figure}[h!]
	\begin{center}
		\subfigure[][]
		{\label{fig:PetitExemple}
		\includegraphics[scale=.80]{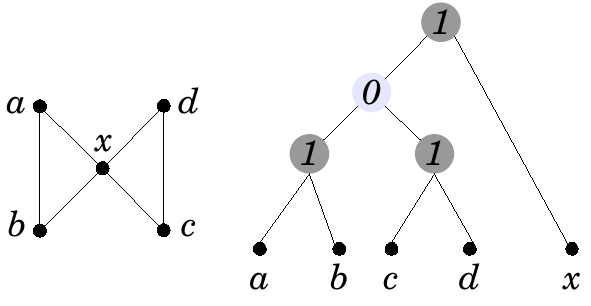}}
\hfill
		\subfigure[][]
		{\label{fig:ExCographeSeidelComp}
		\includegraphics[scale=1.50]{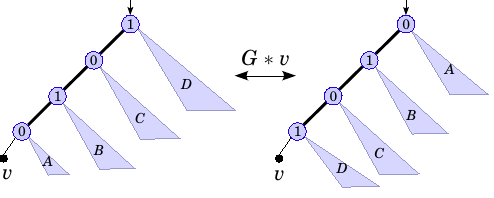}}
	\caption{\subref{fig:PetitExemple} An example of a cograph on 5 vertices and its 
respective co-tree. \subref{fig:ExCographeSeidelComp} A schema of a Seidel 
	complement at a vertex $v$ on a co-tree.}
	\label{fig:2}
	\end{center}
\end{figure}
\begin{proposition}[]
The Seidel complementation of a cograph on its co-tree 
can be performed in $O(1)$-time.
\end{proposition}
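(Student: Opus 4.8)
The plan is to read the cost off directly from the structural description of the new co-tree supplied by Remark~\ref{rem:exchange}. That remark already tells us that the co-tree $T'$ of $G*v$ is obtained from $T$ without altering the node set or the labels of the internal nodes, and without merging any nodes: the only changes are that $v$ is detached from its former parent $P(v)$ and reattached to the former root $R(T)$, while $P(v)$ becomes the new root. So the entire task reduces to bounding the number of elementary pointer updates that these three modifications require, and the proof is essentially an implementation argument resting on the exchange property.

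First I would isolate the observation that, viewed as \emph{unrooted} labelled trees, $T$ and $T'$ differ by a single edge swap. Reversing the path from $P(v)$ to $R(T)$ leaves every edge of that path in place; only its orientation (i.e.\ the direction one would read the parent pointers) changes. Hence, as unrooted trees, the sole difference is that the edge $\{v,P(v)\}$ is replaced by the edge $\{v,R(T)\}$, the designated root simultaneously moves from $R(T)$ to $P(v)$, and all labels are preserved. This reformulation is what makes a constant-time bound plausible.

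Accordingly, I would store the co-tree as an unrooted tree — each node carrying a doubly linked incidence list of its tree-neighbours, together with a single global pointer to the current root — rather than with a fixed orientation of parent pointers. Under this representation the operation proceeds as follows: $R(T)$ is obtained from the root pointer in $O(1)$; $P(v)$ is the unique tree-neighbour of the leaf $v$ and is therefore also found in $O(1)$; the edge $\{v,P(v)\}$ is deleted and the edge $\{v,R(T)\}$ inserted by a constant number of splices in the relevant incidence lists (which is $O(1)$ provided each leaf keeps a direct pointer to its incidence cell); and finally the root pointer is reset to $P(v)$. Since no label is changed and no node is created, merged, or destroyed, the total work is a constant number of pointer assignments, which yields the claimed bound.

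The single point that must be handled with care — and the only place a naive implementation breaks — is the re-rooting. If one insisted on maintaining explicitly oriented parent pointers, then promoting $P(v)$ to the root would force the reversal of all parent pointers along the path from $P(v)$ to $R(T)$, at cost proportional to its length and hence not $O(1)$. The unrooted representation above is exactly what circumvents this obstacle: re-rooting collapses to a single update of the global root pointer, so the whole Seidel complementation on the co-tree costs only $O(1)$ time.
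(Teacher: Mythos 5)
Your proof is correct and rests on the same foundation as the paper's: Remark~\ref{rem:exchange} (the exchange property), followed by an implementation argument counting pointer updates. The difference is the choice of data structure, and here your route is genuinely distinct and more careful. The paper stores a lookup table giving each vertex its parent node plus a pointer to the root, and simply asserts that this can be updated in constant time; you instead observe that, viewed as \emph{unrooted} labelled trees, $T$ and $T*v$ differ only in swapping the edge $\{v,P(v)\}$ for $\{v,R(T)\}$ and in which node is designated as root, and you store the co-tree accordingly (incidence lists plus a global root pointer). The point you raise about re-rooting is exactly the weak spot of the paper's own argument: in any representation that maintains explicitly oriented parent pointers, promoting $P(v)$ to the root forces a reversal of the pointers along the whole path from $P(v)$ to $R(T)$, which is proportional to the path length and not $O(1)$, so the paper's one-line update claim is rigorous only under an interpretation close to yours. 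What your approach buys is the elimination of this hidden cost, making the constant-time bound airtight; what the paper's buys is brevity, and nothing that yours loses, since $P(v)$ is still recovered in $O(1)$ time as the unique tree-neighbour of the leaf $v$.
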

\begin{proof}
It suffices to consider the co-tree of $G$. As noticed in remark \ref{rem:exchange} 
to perform a Seidel complementation at a vertex $v$ is equivalent to 
exchange a vertex -- \ie a leaf -- with the root of the tree.  
We  need to store, in a lookup table, for each vertex its parent node in 
the tree   and the root of the tree. Updating  the structure can easily be 
done in constant time.
\end{proof}
%
\subsection{Modular decomposition tree}
In this section we will show how the modular decomposition tree of 
a graphs is modified. Using Theorems \ref{th:SeidelMDPrime},  
\ref{th:SeidelCograph} and \ref{cor:BackboneModule}

Let $G=(V,E)$ be a graph, and let $T(G)$ ($T$ for short) be its modular 
decomposition tree. Modular decomposition tree is a generalization of the co-tree 
for cographs. The only difference with co-tree is that the modular decomposition 
tree can contain prime nodes. Prime nodes corresponds to graphs that are 
not decomposable \wrt modular decomposition.

We generalize the operation on the co-tree, described in Theorem \ref{th:SeidelCograph},                                                                                                                                                   
 to arbitrary modular decomposition tree.

\begin{theorem}
\label{th:MDtreeSeidel}
Let $G=(V,E)$ be a graph, and let $T$ be its modular decomposition tree. 
Let $v$ be a vertex of $G$. By applying a Seidel complement at $v$ 
the modular decomposition tree of $T*v$ of $G * v$ is obtained 
by: 
\begin{itemize}
  \item performing a Seidel complement in every prime node  lying on the path 
from $v$ to $R(T)$.
  \item making $P(v)$ the root of $T*v$. 
  \item Reverse the path from $P(v)$ to $R(T)$: 
 if $\alpha$ and $\beta$ are prime node in $T$ with $\beta=P(\alpha)$ 
then $\alpha = P(\beta)$ and $\beta$ is connected in place of the subtree 
coming from $v$.
\end{itemize}
\end{theorem}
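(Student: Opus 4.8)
The plan is to build on the structural results already established, especially Theorems~\ref{th:SeidelMDPrime} and~\ref{th:SeidelCograph} together with Corollary~\ref{cor:BackboneModule}, and to track how the strong modules of $G$ (the nodes of $T$) transform under a single Seidel complementation at $v$. The key observation guiding everything is the Exchange Property (Remark~\ref{rem:exchange}) for cographs: a Seidel complement reverses the path from $P(v)$ to $R(T)$ while leaving every subtree hanging off that path untouched. I would argue that the same phenomenon governs the general case, with the only added subtlety being that prime nodes along the path must themselves undergo a Seidel complement.

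First I would fix the root-to-$v$ path $R(T)=\alpha_0,\alpha_1,\dots,\alpha_k=P(v)$ in $T$ and classify the strong modules of $G$ into three families: (i) modules $M$ with $v\notin M$, (ii) modules $M$ with $v\in M$ that correspond to nodes \emph{off} the path (there are none, since every module containing $v$ sits on the path), and (iii) the modules $M_i$ associated to the path nodes $\alpha_i$, each of which contains $v$. For family (i), Corollary~\ref{cor:BackboneModule} immediately gives that $M$ remains a module in $G*v$, and since deleting $v$ from the picture leaves $G[V\setminus\{v\}]$ structurally intact (Remark after Proposition~\ref{prop:pivot}), the induced labelling on these nodes is unchanged. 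This handles the entire part of the tree that hangs off the path.

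Next I would treat the path nodes. For a node $\alpha_i$ that is a series (1) or parallel (0) node, the cograph analysis of Theorem~\ref{th:SeidelCograph} applies verbatim to the quotient at that node: the Seidel complement swaps adjacency between $N(v)$ and $\noN{v}$ precisely across the \LCA, which reverses the $0$/$1$ labels along the path and thereby reverses the parent-child order, sending $\beta=P(\alpha)$ to $\alpha=P(\beta)$. For a node $\alpha_i$ that is prime, I would invoke Theorem~\ref{th:SeidelMDPrime}: the quotient graph at $\alpha_i$ is prime, and performing the induced Seidel complement at the child representing the direction of $v$ yields again a prime graph, so the node stays prime but its internal structure is Seidel-complemented. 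Finally I would verify that $P(v)$ becomes the new root and $v$ is reattached to the former root $R(T)$, exactly as in the cograph case, and that the bijection between strong modules of $G$ and of $G*v$ just described is consistent with the adjacencies, which confirms that the resulting tree is genuinely the modular decomposition tree of $G*v$ and not merely a candidate.

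The main obstacle I anticipate is justifying that the tree produced by these local operations is \emph{the} modular decomposition tree of $G*v$ — i.e.\ that no new modules are created and no existing ones are destroyed or merged beyond the prescribed path reversal. The off-path direction is secured by Corollary~\ref{cor:BackboneModule}, but I must also rule out modules of $G*v$ that \emph{contain} $v$ yet do not come from path nodes of $G$; here I would run the argument of case~(1) of Theorem~\ref{th:SeidelMDPrime} in reverse, showing that any such module would pull back to a nontrivial module of $G$ off the expected structure, contradicting the uniqueness of the modular decomposition. Care is also needed at the interfaces between a prime node and its path-neighbours, to check that reversing the path does not force two internal nodes of the same type to merge; the prime-preservation theorem and the fact that adjacent nodes in a modular decomposition tree alternate appropriately should close this gap.
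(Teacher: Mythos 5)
Your proposal follows essentially the same route as the paper's proof: both rest on the same three pillars --- Corollary~\ref{cor:BackboneModule} to show that only the nodes on the path from $v$ to $R(T)$ can change, Theorem~\ref{th:SeidelCograph} for the series/parallel part of the path, and Theorem~\ref{th:SeidelMDPrime} (applied at the quotient level) for the prime path nodes --- assembled into the same path-reversal conclusion. Your explicit completeness check at the end (pulling any unexpected module of $G*v$ containing $v$ back to a module of $G$ via case~(1) of Theorem~\ref{th:SeidelMDPrime} and the involution $G*v*v=G$) is something the paper passes over silently, and it is a genuine strengthening; the paper instead verifies the cross-adjacency conditions concretely for a path with two prime nodes and then asserts the generalization.

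One justification in your plan is wrong and must be replaced. You claim the off-path subtrees keep their internal structure because ``deleting $v$ from the picture leaves $G[V\setminus\{v\}]$ structurally intact,'' citing the remark after Proposition~\ref{prop:pivot}. That remark concerns the three-fold operation $G \star vw$, not a single complementation, and the claim itself is false: $(G*v)[V\setminus\{v\}]$ differs from $G[V\setminus\{v\}]$ exactly on the edges between $N(v)$ and $\noN{v}$. The correct (one-line) argument is that a module $M$ of $G$ with $v\notin M$ satisfies $M\subseteq N(v)$ or $M\subseteq \noN{v}$, since $v$ must be uniform to $M$; hence no edge with both endpoints in $M$ is touched, so $(G*v)[M]=G[M]$ and the whole subtree below that node, labels included, is unchanged. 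Relatedly, your phrase that the complementation ``reverses the $0$/$1$ labels along the path'' should be read as reversing their \emph{sequence} (the order of the nodes), not as complementing series into parallel: by Remark~\ref{rem:exchange} the types of the internal nodes are preserved, and what changes is which node serves as the least common ancestor of a given pair. With these two repairs your plan is sound and coincides with the paper's argument.
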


\begin{proof}
If $G$ is prime \wrt modular decomposition, its modular decomposition tree has 
only one internal node labeled prime, and the leaves represent the vertices. 
And we know by theorem \ref{th:SeidelMDPrime} that $G*v$ is also prime. 

When  the graph is not prime and is not a cograph, it admits a modular 
decomposition tree with more than one internal node. We have seen in Theorem 
\ref{th:SeidelCograph} that when all the internal nodes are of type parallel 
or series, the statement holds.

It remains to deal with the case of prime nodes. We have to notice, as a 
consequence of Corollary \ref{cor:BackboneModule}, that any module that do not 
contain $v$ are not impacted by the Seidel complement at $v$. It means that 
only the modules that contain $v$  are modified by the Seidel complement. 
And all the module that contain $v$ are precisely the nodes lying on the path 
from $v$ to $R(T)$.

The next part of the proof is illustrated in Figure \ref{fig:MDTAlphaBeta}. \\
Let us consider the case when the path from $v$ to the root of $T$ is 
constituted of two prime nodes $\alpha$ and $\beta$, with $\beta=P(\alpha)$ and 
$\alpha$ is connected to $\beta$ on the vertex $b$ and $v$ is connected to $\alpha$ 
on the vertex $a$.
Since $\alpha$ is a module, all the vertices connected to $\alpha$ are also connected 
to $N_\beta(b)$ the neighbors of $b$ in $\beta$. 
By performing a Seidel complementation at $v$ we must remove the edges 
between $N(v)$ and $\overline{N(v)}$, in particular 
we must disconnect $\overline{N_\alpha(a)}$ from $N_\beta(b)$  and 
we must connect $\overline{N_\beta(b))}$ to $N_\alpha(a)$. 

By performing the Seidel complement in $\alpha$ (resp. $\beta$) at $a$ (resp. $b$), 
we  satisfy the condition of Seidel complementation in each prime node. 
Now we need to realize the conditions above mentioned. 
By making now $\alpha$ the root and by connecting $\beta$ to $a$ and 
connecting $v$ to $b$. The condition is realized. Now since $\beta$ is 
a module attached under $\alpha$ every vertex contained in $\beta$ 
is connected to the the neighbors in $\alpha$. And every vertex 
connected of $\overline{N_\alpha(a)}$ are no longer connected to 
$N_\beta(b)$. We still have $N_\alpha(a)$ completely connected to 
$N_\beta(b)$ and $\overline{N_\alpha(a)}$ disconnected from $\overline{N_\beta(b)}$.

We can easily generalize this  to paths of height greater  than 2.
\end{proof}

\begin{figure}[h!]
	\begin{center}
		\includegraphics[scale=1.5]{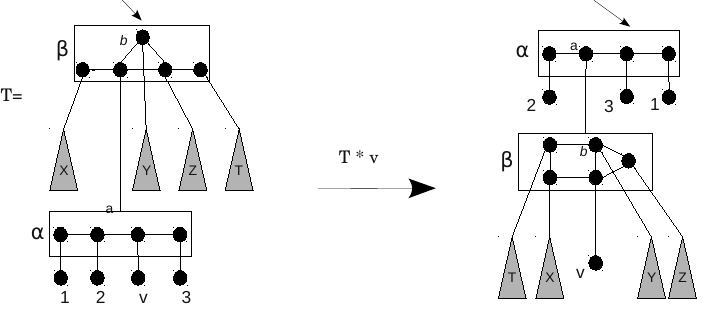}
	\caption{Effects of Seidel complementation on a modular decomposition tree.}
	\label{fig:MDTAlphaBeta}
	\end{center}
\end{figure}
\section{Permutation graphs}
\label{sec:PermutationGraph}
In this section we show that the class of permutation graphs is closed under 
Seidel minor, and we prove the main theorem that states 
that a graph is a permutation graph \Iff it does not 
contain any of the following graphs: $C_5$, $C_7$, $XF_{6}^{2}$, 
$XF_{5}^{2n+3}$, $C_{2n}, n\geqslant6$ or their complements as Seidel minor.

\begin{definition}[Permutation graph]
\label{def:PermutationGraph}
A graph $G=(V,E)$ is a permutation graph if there exist 
two permutations $\sigma_{1},\sigma_{2}$ on $V=\{1,\ldots,n\}$, 
such that two vertices $u,v$ of $V$ are adjacent iff $\sigma_{1}(u) < \sigma_{1}(v)$ 
and  $\sigma_{2}(v) < \sigma_{2}(u)$. 
$R = \{\sigma_1,\sigma_2\}$ is called a representation of $G$, and $G(R)$ is the permutation 
graph represented by $R$.
\end{definition}

More properties of permutation can be found in \cite{Golumbic04}.
An example of a permutation graph is presented in Figure \ref{fig:PermGraph}.
\begin{figure}[h!]
	\begin{center}
		\includegraphics[scale=1.0]{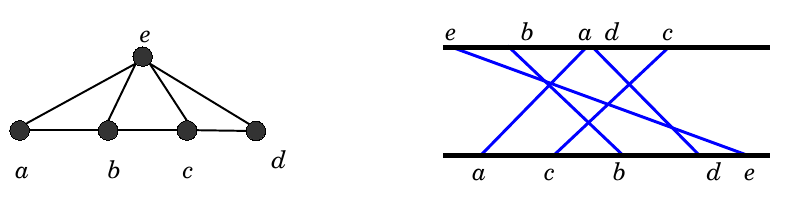}
	\caption{A permutation graph and its representation.}
	\label{fig:PermGraph}
	\end{center}
\end{figure}
\begin{theorem}[Gallai'67 \cite{Gallai67}]
A permutation graph is uniquely representable iff it is prime 
\wrt modular decomposition.
\end{theorem}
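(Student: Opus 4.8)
The plan is to prove both directions of this characterization of unique representability for permutation graphs, using modular decomposition as the central tool. A permutation graph $G$ with representation $R = \{\sigma_1, \sigma_2\}$ corresponds geometrically to a set of line segments between two parallel lines, one endpoint ordered by $\sigma_1$ and the other by $\sigma_2$. Two representations are regarded as the same when one is obtained from the other by the trivial symmetries — reversing both permutations simultaneously (which reflects the whole diagram) or reversing the roles of the two lines (which corresponds to passing to the complement). The claim is that, up to these symmetries, the representation is unique precisely when $G$ is prime with respect to modular decomposition.

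First I would establish the contrapositive of the forward direction: if $G$ is \emph{not} prime, then it admits at least two genuinely distinct representations. Suppose $M$ is a nontrivial module of $G$. The key observation is that a module $M$ occupies, in any representation $R$, a contiguous block of segments — that is, the segments corresponding to vertices of $M$ sweep out an interval in $\sigma_1$ and an interval in $\sigma_2$, and moreover the whole block $M$ behaves as a single ``fat segment'' with respect to the outside vertices. Because every vertex outside $M$ is either completely adjacent to all of $M$ or completely nonadjacent to all of $M$, I can reorder or reflect the segments \emph{internally within the block $M$} without disturbing any adjacency to the outside. Concretely, I would reverse the internal order of the block corresponding to $M$ (reflecting just that sub-diagram), which yields a valid representation $R'$ of the same graph $G$ that is not related to $R$ by the global symmetries, provided $M$ is a proper nontrivial module. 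This produces two inequivalent representations, so $G$ is not uniquely representable.

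For the converse — that primeness implies unique representability — I would argue that when $G$ is prime, any two representations $R$ and $R'$ must be related by the global symmetries. The strategy is to show that the relative geometric position of any two vertices is forced. The natural approach is to appeal to the theory of transitive orientations: a permutation representation of $G$ is equivalent to a simultaneous transitive orientation of $G$ and its complement $\overline{G}$, and the segment $u$ crosses segment $v$ exactly when the orientations disagree in a prescribed way. When $G$ is prime, its comparability graph admits essentially only one transitive orientation up to reversal (this is the classical rigidity of the modular/implication structure for prime graphs), and the same holds for $\overline{G}$. Combining the forced orientation of $G$ with the forced orientation of $\overline{G}$ pins down both $\sigma_1$ and $\sigma_2$ up to the allowed global operations, so $R$ and $R'$ coincide up to symmetry.

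The main obstacle I anticipate is the converse direction, specifically making rigorous the claim that a prime graph has a unique transitive orientation up to reversal and then transferring that uniqueness cleanly through to the geometric representation. The delicate point is bookkeeping the interaction between the orientation of $G$ and the orientation of $\overline{G}$: one must check that no additional freedom sneaks in at the level of $\sigma_1$ and $\sigma_2$ beyond the global reflection and complementation symmetries already accounted for. Since this theorem is attributed to Gallai and is being quoted as a known result rather than reproved from scratch, I would, in the interest of the present paper, either cite Gallai's original argument \cite{Gallai67} directly or sketch only the module-reflection half in full, as that is the half actually used in the sequel when we track how Seidel complementation acts on uniquely representable (prime) permutation graphs.
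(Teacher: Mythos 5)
The paper offers no proof of this statement: it is quoted as a classical theorem of Gallai \cite{Gallai67} and used as a black box in the sequel, so there is no internal argument to compare yours against; your closing remark that one should simply cite Gallai is exactly what the paper does. Evaluated on its own terms, however, your sketch contains two genuine gaps, both in the forward direction. The ``key observation'' that a nontrivial module occupies a contiguous block in \emph{any} representation is false: in the edgeless graph on $\{a,b,c\}$ with $\sigma_1=\sigma_2=(a,b,c)$, the set $\{a,c\}$ is a module but is not an interval of either permutation (only \emph{strong} modules can be expected to appear as common intervals of $(\sigma_1,\sigma_2)$). This part is repairable: instead of reflecting a block inside a given representation, build one representation in which $M$ is contiguous, by substituting a diagram of $G[M]$ into the slot of the corresponding vertex in a diagram of the quotient, and reflect that block. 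But the argument as written starts from a false premise.

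The second gap is deeper. Even after that repair, your claim that the block-reflected representation $R'$ is ``not related to $R$ by the global symmetries'' can fail. Take $G=K_2\cup K_1$ with $M$ the edge $\{u,v\}$ and $R=(uvx,\,vux)$: reflecting the block of $M$ yields $R'=(vux,\,uvx)$, which is precisely $R$ with the two lines exchanged, i.e.\ a global symmetry of $R$. In fact $K_2\cup K_1$ and $P_3$ each have exactly four labeled representations forming a single orbit under the group generated by reversing both orders and exchanging the two orders, so they are non-prime yet ``uniquely representable'' under that equivalence; any correct proof must therefore pin down the intended equivalence relation (and handle such degenerate quotients) before the counting can work, for instance via the correspondence between representations and ordered pairs consisting of a transitive orientation of $G$ and one of $\overline{G}$. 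Your converse direction --- primeness of $G$ (hence of $\overline{G}$) gives exactly two transitive orientations of each, and the four resulting representations form one orbit --- is the right route and essentially sound. One further slip: exchanging the roles of the two parallel lines preserves the graph, since crossing is symmetric in the two orders; it is reversing exactly \emph{one} of the two orders that yields the complement.
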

\begin{theorem}
[\cite{Gallai67}\footnote{\url{http://wwwteo.informatik.uni-rostock.de/isgci/classes/AUTO_3080.html}}]
\label{th:GallaiPermutation}A graph is a permutation graph \Iff it does not 
contain one of the finite graphs as induced subgraphs 
$T_2$, $X_2$, $X_3$, $X_{30}$, $ X_{31}$, $X_{32}$, $X_{33}$, $X_{34}$, $X_{36}$ 
 nor their complements  and does not contain the graphs 
given by the infinite families:
$XF_{1}^{2n+3}$, $XF_{5}^{2n+3}$, $XF_{6}^{2n+2}$, $XF_{2}^{n+1}$, $XF_{3}^{n}$, $XF_{4}^{n}$, 
the Holes, and their complements.
\end{theorem}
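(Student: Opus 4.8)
The plan is to reduce the statement to the analogous characterization of comparability graphs. The starting point is the theorem of Even, Lempel and Pnueli \cite{EvenPL72} quoted in the introduction: a graph $G$ is a permutation graph \Iff both $G$ and its complement $\overline{G}$ are comparability graphs, \ie transitively orientable. Thus the class of permutation graphs is exactly $\mathcal{C} \cap \overline{\mathcal{C}}$, where $\mathcal{C}$ denotes the hereditary class of comparability graphs and $\overline{\mathcal{C}} = \{G : \overline{G} \in \mathcal{C}\}$ the class of co-comparability graphs. Both are closed under taking induced subgraphs, so their intersection is hereditary and is therefore characterized by a (possibly infinite) list of minimal forbidden induced subgraphs; the entire task is to identify that list explicitly.

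First I would invoke Gallai's characterization of comparability graphs \cite{Gallai67}, which supplies the complete list $\mathcal{F}$ of minimal non-comparability graphs. A graph $H$ is a minimal forbidden subgraph for permutation graphs \Iff $H \notin \mathcal{C} \cap \overline{\mathcal{C}}$ while every proper induced subgraph lies in $\mathcal{C} \cap \overline{\mathcal{C}}$. I would split this into three cases according to whether the obstruction arises from $H$ failing to be a comparability graph, from $\overline{H}$ failing to be one, or from both simultaneously. In the first case minimality forces $H \in \mathcal{F}$ together with the side condition $\overline{H} \in \mathcal{C}$, so that no smaller co-comparability violation is hidden inside $H$; the second case is precisely the complement of the first, which is the source of the phrase ``and their complements'' in the statement; the third case collects the self-paired obstructions, those $H$ for which both $H$ and $\overline{H}$ are minimal non-comparability graphs.

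Concretely I would then traverse Gallai's list $\mathcal{F}$ member by member and, for each, decide whether the member or its complement is co-comparability, discarding those that already contain a strictly smaller permutation obstruction. The finite graphs $T_2, X_2, X_3, X_{30}, X_{31}, X_{32}, X_{33}, X_{34}, X_{36}$ and the infinite families $XF_{1}^{2n+3}, XF_{5}^{2n+3}, XF_{6}^{2n+2}, XF_{2}^{n+1}, XF_{3}^{n}, XF_{4}^{n}$ together with the Holes should emerge exactly as the survivors. For each survivor one must exhibit a transitive orientation of every proper induced subgraph (certifying minimality, hence membership of every proper subgraph in $\mathcal{C} \cap \overline{\mathcal{C}}$) and prove non-orientability of $H$ or of $\overline{H}$ (certifying that it really is an obstruction).

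The hard part will be this last piece of bookkeeping: matching Gallai's comparability obstructions against the co-comparability condition and pruning redundancies so that precisely the advertised finite graphs and infinite families remain, with no omissions and no duplicates up to complementation. Establishing genuine minimality of the infinite families --- that deleting any single vertex restores transitive orientability of both the graph and its complement --- is where the bulk of the casework lies, and it is the step most likely to require the detailed structural arguments of \cite{Gallai67,Golumbic04} rather than a short self-contained verification.
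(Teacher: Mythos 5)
The first thing to note is that the paper contains no proof of this statement: Theorem~\ref{th:GallaiPermutation} is imported as a black box, attributed to Gallai \cite{Gallai67} and the ISGCI database, and then used as the starting point for the paper's own main result (Theorem~\ref{th:SeidelMinorPermutation}). So there is no internal argument to compare yours against; what you are attempting is a reconstruction of the cited result itself. Within that framing, your skeleton is the standard and correct one: permutation graphs are exactly $\mathcal{C}\cap\overline{\mathcal{C}}$ by Even--Lempel--Pnueli \cite{EvenPL72}, both classes are hereditary, and the minimality argument you sketch is sound --- if $H$ is a minimal non-permutation graph and $H\notin\mathcal{C}$, then any minimal non-comparability graph inside $H$ must equal $H$ (a proper one would be a permutation graph, hence comparability), so $H\in\mathcal{F}$; dually for $\overline{H}$. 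Your three-case split with the side conditions ($H\in\mathcal{F}$ and $\overline{H}\in\mathcal{C}$, its complementary case, and the doubly-minimal case) does correctly characterize the minimal obstructions.

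The genuine gap is everything after that point. The assertion that the survivors of your pruning are \emph{exactly} $T_2$, $X_2$, $X_3$, $X_{30}$, $X_{31}$, $X_{32}$, $X_{33}$, $X_{34}$, $X_{36}$, the families $XF_{1}^{2n+3}$, $XF_{5}^{2n+3}$, $XF_{6}^{2n+2}$, $XF_{2}^{n+1}$, $XF_{3}^{n}$, $XF_{4}^{n}$, the holes, and their complements \emph{is} the content of the theorem, and your proposal executes none of it: you never write down Gallai's list $\mathcal{F}$ of minimal non-comparability graphs, never perform a single co-comparability check on a member or its complement, and never account for the parity constraints hidden in the superscripts (for instance why only $XF_{5}^{2n+3}$ appears --- the even-indexed $XF_{5}^{2n}$ is a permutation graph, a fact the paper itself exploits later when constructing an infinite antichain). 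You explicitly defer this ``bookkeeping'' and the minimality verification for the infinite families to the structural arguments of \cite{Gallai67} and \cite{Golumbic04}; but since the pruned list is precisely what those references establish, the deferral reduces your argument to the same citation the paper already makes. As written, you have a correct proof plan whose mathematical substance is entirely outsourced, not a proof.
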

\begin{figure}[h!]
	\begin{center}
		\framebox{\subfigure[][\label{fig:FiniteFPerm-T2}$T_2$]
		{\includegraphics[scale=1.0]{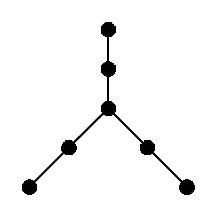}}}
		\framebox{\subfigure[][\label{fig:FiniteFPerm-X31}$X_{31}$]
		{\includegraphics[scale=1.0]{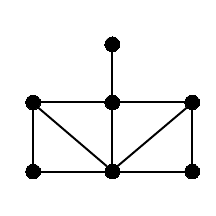}}}
		\framebox{
		\subfigure[][$X_2$]{\label{fig:X2}\includegraphics[scale=1.0]{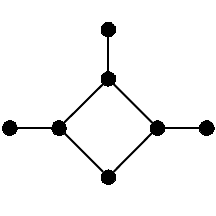}}
		\subfigure[][$X_3$]{\label{fig:X3}\includegraphics[scale=1.0]{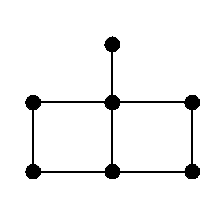}}
		\subfigure[][$X_{36}$]{\label{fig:X36}\includegraphics[scale=1.0]{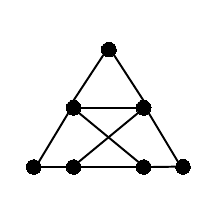}}}
\\[9pt]		
		\framebox{
		\subfigure[][$X_{30}$]
		{\label{fig:X30}\includegraphics[scale=1.0]{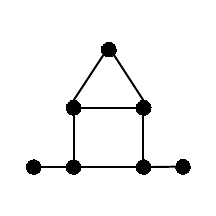}}
		\subfigure[][$X_{32}$]
		{\label{fig:X32}\includegraphics[scale=1.0]{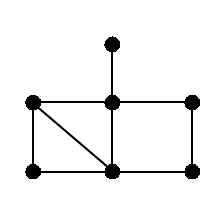}}
		\subfigure[][$X_{33}$]
		{\label{fig:X33}\includegraphics[scale=1.0]{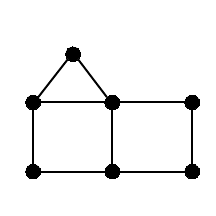}}
		\subfigure[][$X_{34}$]
		{\label{fig:X34}\includegraphics[scale=1.0]{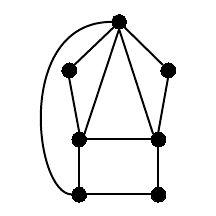}}
		}
	\caption{Finite forbidden induced subgraphs for permutation graphs.
	Those in the same box are Seidel complement equivalent.}
	\label{fig:FiniteFPerm}
	\end{center}
\end{figure}
\begin{figure}[h!]
	\begin{center}
		\framebox{
		\subfigure[][Hole]{\label{fig:Hole}\includegraphics[scale=1.0]{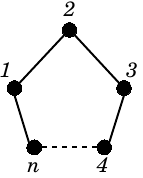}}
		\subfigure[][$XF_{2}^{n+1}$~]
		{\label{fig:XF2}\includegraphics[scale=1.0]{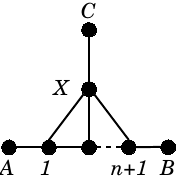}}
		\subfigure[][$XF_{3}^{n}$]
		{\label{fig:XF3}\includegraphics[scale=1.0]{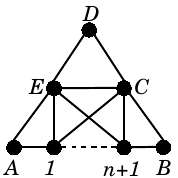}}
		\subfigure[][\label{fig:InfinitePerm-XF4}$XF_{4}^{n}$]
		{\includegraphics[scale=1.0]{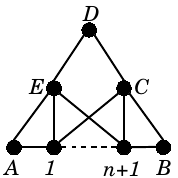}}
		}
		\framebox{
		\subfigure[][$XF_1^{n}$]{\label{fig:XF1}\includegraphics[scale=1.0]{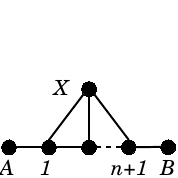}}
		\subfigure[][$XF_{5}^{n}$]{\label{fig:XF5}\includegraphics[scale=1.0]{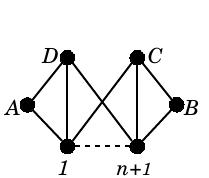}}
		\subfigure[][$XF_{6}^{n}$]{\label{fig:XF6}\includegraphics[scale=1.0]{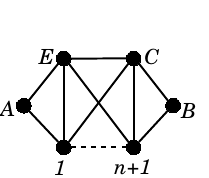}}
		}
	\caption{Forbidden infinite families for permutation graphs.
	The families in the left box \subref{fig:Hole}-\subref{fig:InfinitePerm-XF4} 
	contains asteroidal triples. The families in the right box 
\subref{fig:XF1}-\subref{fig:XF6} do not 
contain asteroidal triple, the key point is the parity of the dashed path.}
	\label{fig:InfinitePerm}
	\end{center}
\end{figure}
\noindent
{\bf Operation S:}
Let $\sigma = A~.~v~.~B$  be a permutations on $[n]$. Let $v$ be 
an element of $[n]$. The operation $S$ at an element $v$ of $[n]$ noted $\sigma * v$ is done 
Let $\sigma * v = B ~ .~ v~ .~ A$.

\begin{rem}[]
Let $R=\{\sigma_1,\sigma_2\}$ be a permutation representation of a permutation graph $G$ 
then $R * v = \{\sigma_1*v,\sigma_2*v\}$ is a permutation representation of a graph $H$. 
\end{rem}

\begin{theorem}[]
\label{th:SeidelPermutation}
Let $G=(V,E)$  be a permutation graph, and  let $v$ be a vertex of $G$, and 
let $R=\{\sigma_1,\sigma_2\}$ be the permutation representation of $G$. 
We have $G(R * v) = G * v$.
\end{theorem}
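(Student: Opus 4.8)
The plan is to verify directly, pair by pair, that the adjacency relation of the permutation graph $G(R*v)$ coincides with that of the Seidel complement $G*v$. The tool is the standard adjacency criterion extracted from Definition \ref{def:PermutationGraph}: two distinct vertices $u,w$ are adjacent in $G(R)$ exactly when $\sigma_1$ and $\sigma_2$ disagree on their relative order, \ie when $\sigma_1(u)-\sigma_1(w)$ and $\sigma_2(u)-\sigma_2(w)$ have opposite signs. Writing each permutation as $\sigma_i = A_i . v . B_i$, I would attach to every vertex $u \neq v$ and every $i \in \{1,2\}$ a sign $s_i(u) \in \{0,1\}$ recording whether $u$ lies before $v$ (in $A_i$) or after $v$ (in $B_i$) in $\sigma_i$.

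First I would record the effect of Operation $S$ on orders. Since $\sigma_i * v = B_i . v . A_i$ merely swaps the two blocks flanking $v$ while keeping each block's internal order intact, the relative order of a pair $u,w$ in $\sigma_i * v$ equals their order in $\sigma_i$ when both lie on the same side of $v$ (both in $A_i$, or both in $B_i$) and is reversed when they lie on opposite sides. Thus $S$ flips the sign of $\sigma_i(u)-\sigma_i(w)$ precisely when $s_i(u)\neq s_i(w)$. Consequently the product of the two order-signs, which governs adjacency, is toggled exactly when the number of indices $i$ with $s_i(u)\neq s_i(w)$ is odd.

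The crux is to translate this ``odd number of side-crossings'' condition into the neighborhood dichotomy defining $G*v$. Applying the adjacency criterion to the pair $\{u,v\}$ shows that $u \in N(v)$ \Iff $s_1(u)\neq s_2(u)$, and $u\in\noN{v}$ \Iff $s_1(u)=s_2(u)$; that is, membership in $N(v)$ is exactly the parity $s_1(u)\oplus s_2(u)$. A one-line parity computation then gives that the number of positions in which $(s_1(u),s_2(u))$ and $(s_1(w),s_2(w))$ differ is odd \Iff these vectors have different parities, \ie \Iff exactly one of $u,w$ lies in $N(v)$ and the other in $\noN{v}$. Hence $S$ toggles the adjacency of $u,w$ precisely for pairs with one endpoint in $N(v)$ and one in $\noN{v}$, and preserves it for pairs inside $N(v)$ or inside $\noN{v}$ — which is exactly the rule of Definition \ref{def:SeidelSwitchN}.

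Finally I would dispose of the pairs containing $v$ and conclude. A vertex on the same side of $v$ in both permutations ($u\in\noN{v}$) moves to the opposite side in both, hence stays a non-neighbor of $v$, while a vertex with $s_1(u)\neq s_2(u)$ ($u\in N(v)$) keeps opposite sides and stays adjacent to $v$; so $N(v)$ is unchanged, matching the fact that $G*v$ fixes the edges at $v$. Combining the three regimes yields that $G(R*v)$ and $G*v$ have identical edge sets. The only delicate point, and the step I would be most careful about, is the bookkeeping of the third paragraph — correctly matching ``exactly one side-crossing'' with the $N(v)/\noN{v}$ split; the remainder is a routine sign chase that uses only the combinatorial order structure of $R$, so it works for any representation of $G$.
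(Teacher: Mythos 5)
Your proposal is correct and takes essentially the same route as the paper: both arguments identify membership in $N(v)$ with lying on opposite sides of $v$ in the two permutations, and both check that the block swap $A\cdot v\cdot B \mapsto B\cdot v\cdot A$ reverses the relative order of a pair exactly when $v$ separates it, so that adjacency toggles precisely across the $N(v)$/non-neighborhood cut. The only difference is presentational: your parity bookkeeping with the sign vectors $(s_1,s_2)$ treats all region pairs uniformly, whereas the paper verifies one representative case (a pair in the regions $(A,C)$ and $(A,D)$ of the diagram) and appeals to symmetry for the remaining cases.
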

\begin{proof}
Let $G=(V,E)$ be  a permutation graph and $v$ a vertex of $G$. Let us prove 
that $G*v$ remains a permutation graph.                                                                                                                                                                                                    
Operation $S$ applied simultaneously on $R=\{\sigma_1,\sigma_2\}$ 
 is depicted in Figure \ref{fig:PermScheam-A}.
Let $\sigma_{1}$ be $A ~ .~ v~ .~ B$ and   $\sigma_{2}$ be $C ~ .~ v~ .~ D$, 
\\
The operation $S$ on $R=\{\sigma_1,\sigma_2\}$                                                                                                                                                                                             
corresponds to a Seidel complementation at $v$.
We have to prove that the graphs induced by the neighborhood $G[N(v)]$ and 
$G[\noN{v}]$ are unchanged. Let us begin with the non-neighborhood of 
$v$. It is easy to notice on Figure \ref{fig:PermScheam-A} that the 
non-neighborhood of $v$ is contained in the two vertical rectangles, 
one on the left of $v$ and the other one on their right, $(A,C)$ and 
$(B,D)$. By proceeding to the transformation described above, and by 
keeping the order of the words, it is easy to notice that 
first of all, these vertices remain disconnected from $v$ and since 
the order of vertices in the words are preserved then this subgraph 
remain unchanged. In a similar manner for the subgraph induced by 
the neighborhood of $v$, now the vertices of their neighborhood 
are contained in the gray crosses $(A,D)$ and $(B,C)$ and 
for the same reason as for the non-neighborhood, the subgraph 
remains unchanged and it is still connected to $v$.
\\
Now let us consider the less obvious part which is to swap 
the adjacency between $G[N(v)]$ and $G[\noN{v}]$.
Let $w$ be a neighbor of $v$ and let $u$ be a non-neighbor of $v$.
Let us assume, \Wlog, that $w$ and $u$ are connected. Let us consider 
the case where $u$ belongs to the $(A,C)$ rectangle and $w \in (A,D)$, if $uw \in E$ 
it means that $\sigma_{1}(w)< \sigma_{1}(u)$ and $\sigma_{2}(u)< \sigma_{2}(w)$, 
after proceeding to a Seidel complement at $v$ we obtain $\sigma'_{1} = \sigma_{1} * v$ 
and $\sigma'_{2 }=\sigma_{2} * v$ but now according to the transformation we have 
$\sigma'_{1}(w)< \sigma'_{1}(u)$ and $\sigma'_{2}(w)< \sigma'_{2}(u)$. 
And according to the definition \ref{def:PermutationGraph} now 
$u$ and $w$ are no longer connected. The proof is similar for the other 
cases. 
\end{proof}
\begin{figure}[h!]
	\begin{center}
		\subfigure[][\label{fig:PermScheam-A}]
		{\includegraphics[scale=1]{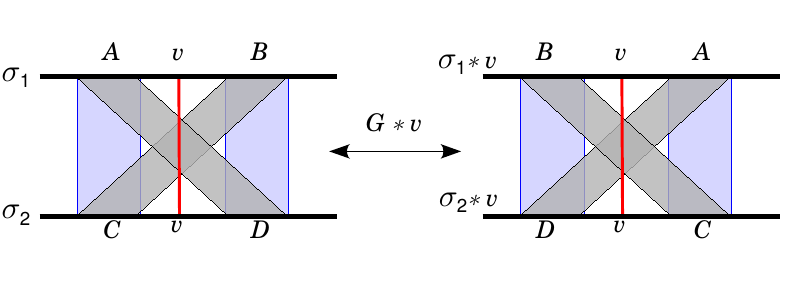}}
		\hfill
		\subfigure[][\label{fig:PermScheam-B}]
		{\includegraphics[scale=.80]{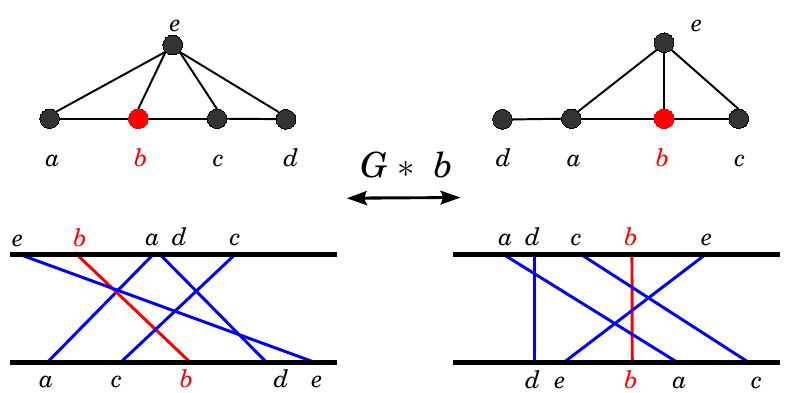}}
	\caption{\subref{fig:PermScheam-A} Schematic view of Seidel 
complementation on a permutation diagram. 
	 \subref{fig:PermScheam-B} An example of a permutation graph and a Seidel complementation 
		at a vertex $b$.}
	\label{fig:PermSchema}
	\end{center}
\end{figure}
\begin{corollary}[]
\label{cor:SeidelPermConstantTime} 
The Seidel complementation at a vertex $v$ of a permutation graph 
can be achieved in $O(1)$-time.
\end{corollary}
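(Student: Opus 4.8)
The plan is to reduce the graph-level operation to a manipulation of the permutation representation and then to exhibit a data structure under which that manipulation costs only constant time, in direct analogy with the constant-time Seidel complementation on co-trees established earlier. First I would invoke Theorem~\ref{th:SeidelPermutation}: if $R=\{\sigma_1,\sigma_2\}$ is a representation of the permutation graph $G$, then $G*v=G(R*v)$ with $R*v=\{\sigma_1*v,\sigma_2*v\}$, each $\sigma_i*v$ being obtained by Operation~$S$. Hence it suffices to show that Operation~$S$, which turns $\sigma=A\,.\,v\,.\,B$ into $B\,.\,v\,.\,A$, can be carried out in $O(1)$-time on a single permutation; applying it to the two permutations $\sigma_1$ and $\sigma_2$ then keeps the total cost $O(1)$.

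Next I would reinterpret Operation~$S$ as an exchange, mirroring Remark~\ref{rem:exchange}. I represent each permutation $\sigma$ as a circular doubly-linked list on $V$ together with a single sentinel node $s$ marking the cut between the last and the first element, so that reading the list forward from just after $s$ returns $\sigma$; the list for $\sigma=A\,.\,v\,.\,B$ thus reads $s,A,v,B$ cyclically. The key claim is that exchanging the two nodes $v$ and $s$ in the list yields a list whose forward reading from just after $s$ is exactly $B\,.\,v\,.\,A=\sigma*v$. In other words, Operation~$S$ is nothing but the exchange of the pivot $v$ with the boundary sentinel $s$, the precise counterpart of exchanging a leaf with the root of the co-tree in Remark~\ref{rem:exchange}.

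Finally I would bound the cost and isolate the delicate point. Keeping a lookup table that maps each vertex to its node in each of the two lists, together with a pointer to each sentinel, the exchange of $v$ and $s$ updates only a bounded number of next/prev pointers and is therefore $O(1)$; doing it in both lists leaves the total cost $O(1)$, which proves the corollary. The main obstacle is to verify carefully that this single node-exchange really realizes $A\,.\,v\,.\,B\mapsto B\,.\,v\,.\,A$, and in particular the degenerate cases in which $A$ or $B$ is empty, so that $s$ and $v$ are adjacent in the circular list, where one must check that the adjacent-node exchange is still performed correctly. The generic case and these boundary cases are all routine pointer bookkeeping once the exchange interpretation is established.
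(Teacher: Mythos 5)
Your proposal is correct and takes essentially the same approach as the paper: both reduce the claim to Theorem~\ref{th:SeidelPermutation} and then implement Operation~$S$ ($A\cdot v\cdot B \mapsto B\cdot v\cdot A$) on the two permutations stored as doubly linked lists, using a constant number of pointer updates. Your circular-list-with-sentinel formulation, where one swaps $v$ with the sentinel in the spirit of Remark~\ref{rem:exchange}, is merely a cleaner packaging of the paper's step of re-pointing the successor/predecessor of $v$ and updating the first and last elements of each list.
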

\begin{proof}
It is sufficient to consider the permutation representation of $G$ 
as two doubly linked lists. Then the Seidel complementation consists 
of applying the pattern described in the proof of Theorem \ref{th:SeidelPermutation}.
It consists \Wlog on $\sigma_1$ to exchange $A$ and $B$: 
$A \cdot v \cdot B$ becomes $B \cdot v \cdot A$. So it suffices 
to change the successor of $v$ in the list as the first element of $A$ 
and the predecessor of $v$ as the last element of $B$. Then update 
the first and last element of the new list. We 
proceed similarly for $\sigma_{2}$. All these operations 
can obviously be done in constant time.
\end{proof}

An arbitrary remark. To perform a Seidel complementation at a vertex on a graph 
can require in the worst case $O(n^{2})$-time. It suffices to 
consider  the graph consisting  of a star $K_{1,n}$ and a stable $S_{n}$,
whose size is $2n+1$ with $n+1$ connected components. 
Applying a Seidel complementation 
on the vertex of degree $n$ results in a connected graph with $O(n^{2})$ edges. 
\subsection{Finite Families}
In this section we show that it is possible to reduce the 
list of forbidden induced subgraphs by using Seidel Complementation. 
Actually a lot of forbidden subgraphs are Seidel equivalent. 
The graphs that are Seidel complement equivalent are in the same 
box in Figure \ref{fig:FiniteFPerm}. 
Thus, the list of finite forbidden graphs is reduced from 18 induced subgraphs 
to only 6 finite Seidel minors. The forbidden Seidel minors are 
\FListc and their complements.
\begin{proposition}[]
\label{prop:X3X2X36}
The graphs $X_3$, $X_2$, $X_{36}$ (\cf Figure \ref{fig:X2}-\subref{fig:X36})
 are Seidel complement equivalent.
\end{proposition}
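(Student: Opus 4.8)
The plan is to verify that $X_3$, $X_2$, and $X_{36}$ lie in a single Seidel complement equivalence class by exhibiting explicit vertices at which to apply the Seidel complementation, and checking that the resulting graph is isomorphic to the next one in the chain. Since Seidel complementation is an involution at each vertex ($G*v*v = G$), it suffices to find a vertex $v$ such that $X_2 * v \cong X_3$ and a vertex $w$ such that $X_2 * w \cong X_{36}$ (or any connecting chain among the three); equivalence is then immediate by transitivity. First I would fix a concrete labelling of the vertex sets and adjacency relations of all three graphs from Figure~\ref{fig:FiniteFPerm}, so that the computations are unambiguous.

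The core of the argument is a direct computation. For a chosen pivot $v$, recall that $G*v$ keeps $G[N(v)]$ and $G[\noN{v}]$ intact and only toggles the edges of $E[N(v),\noN{v}]$. So for each candidate pivot I would partition the remaining vertices into $N(v)$ and $\noN{v}$, leave the two induced subgraphs untouched, complement the bipartite adjacency between them, and then read off the resulting graph. The task then reduces to producing an explicit isomorphism (a bijection of vertex labels preserving adjacency) between the computed graph $G*v$ and the target graph. I would present this as a short table or an explicit vertex correspondence for each of the one or two Seidel complementations needed to link the three graphs.

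The main obstacle is essentially bookkeeping rather than conceptual: one must pick the \emph{right} pivot vertices so that the bipartite complementation actually produces the target graph, and these are not obvious a priori. The natural strategy to find them is to exploit invariants preserved by the operation — for instance the degree sequence is generally \emph{not} preserved (only the internal degrees within $N(v)$ and within $\noN{v}$ are, up to the bipartite toggle), but the number of vertices and certain structural features are — and to use the symmetries of $X_2$, $X_3$, $X_{36}$ to cut down the candidate pivots. Because all three graphs have the same small order, an exhaustive check over the few vertices (up to automorphism) is feasible and certain to succeed once the graphs are correctly drawn.

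Finally, I would remark that since the diagrams group these three graphs in the same box of Figure~\ref{fig:FiniteFPerm}, the statement is really an assertion that a specific small computation closes up; the proof is complete once the explicit pivots and the accompanying isomorphisms are displayed, with transitivity of the Seidel equivalence relation supplying the conclusion that all three are mutually equivalent.
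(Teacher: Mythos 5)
Your proposal is correct and takes essentially the same approach as the paper: the paper in fact omits the explicit computation, stating only that the proof ``essentially consist[s] for each graph to find which vertex allows us to transform one graph into another,'' which is precisely the pivot-search plus isomorphism-check (closed up by transitivity) that you describe. Like the paper, you stop short of exhibiting the actual pivot vertices, so the argument is complete only once that small finite check is carried out and displayed.
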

\begin{proposition}[]
\label{prop:X30X32X33X34}
The graphs $X_{30}$, $X_{32}$, $X_{33}$ and $X_{34}$ 
(\cf Figure \ref{fig:X30}-\subref{fig:X34}) are Seidel complement equivalent.
\end{proposition}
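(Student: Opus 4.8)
The plan is to verify Proposition \ref{prop:X30X32X33X34} by direct computation, exhibiting an explicit sequence of Seidel complementations that transforms one named graph into another. Since Seidel complement equivalence is an equivalence relation (the operation is an involution by the observation that $G*v*v = G$, and we may compose), it suffices to show that each of $X_{30}$, $X_{32}$, $X_{33}$ is Seidel equivalent to $X_{34}$ (say), chaining the four graphs together with three individual equivalences. Concretely, for each target pair I would fix a labeling of the vertices of the source graph as drawn in Figure \ref{fig:X30}--\subref{fig:X34}, read off its edge set, pick a candidate pivot vertex $v$, and compute $G*v$ from Definition \ref{def:SeidelSwitchN} by taking the symmetric difference of $E$ with the complete bipartite set $\{xy : vx\in E,\ vy\notin E\}$. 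I would then check that the resulting graph is isomorphic to the next graph in the chain, producing the explicit vertex bijection.

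The key steps, in order, are as follows. First I would record the four adjacency structures from the figures in a common vertex set, so that the bookkeeping of neighborhoods $N(v)$ and non-neighborhoods $\noN{v}$ is unambiguous. Second, for a chosen pivot $v$ I would partition $V\setminus\{v\}$ into $N(v)$ and $\noN{v}$ and rewrite all edges between these two parts into non-edges and vice versa, leaving $G[N(v)]$, $G[\noN{v}]$, and the edges incident to $v$ untouched (this is exactly the content of the figures in Theorem \ref{th:SeidelPermutation} and its proof). Third, I would search over the few candidate pivots — there are only a handful of vertices, and the symmetry of these small graphs sharply limits the distinct outcomes — until the image matches the intended target up to isomorphism. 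Because the previous proposition \ref{prop:X3X2X36} establishes the analogous statement for the other box, I would mirror whatever pivot-selection heuristic worked there, since the graphs $X_{30}, X_{32}, X_{33}, X_{34}$ are structurally similar members of the same forbidden list.

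The main obstacle I anticipate is not conceptual but combinatorial: identifying the correct pivot at each step and certifying the isomorphism of the output with the next named graph. A single Seidel complementation can change the graph quite drastically (as the closing remark of the previous section emphasizes, degrees can jump and the edge count can change substantially), so a naive guess for $v$ may land on a graph isomorphic to none of the four. The remedy is systematic: since each graph has only a small number of vertices, I would, if needed, enumerate all pivots and, for ambiguous cases, use a cheap isomorphism invariant (degree sequence, then number of triangles, then the neighborhood-degree multiset) to rule out non-matches quickly before committing to an explicit labeled bijection. I would present the final argument as three short displayed computations, one per equivalence, each exhibiting a pivot $v$ (or a short word $\omega$ of pivots, in the sense of Definition \ref{def:SeidelEquivalent}) together with the witnessing isomorphism, and then invoke transitivity of Seidel equivalence to conclude that all four graphs lie in one class.
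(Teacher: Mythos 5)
Your proposal takes essentially the same approach as the paper: the paper omits the explicit computations for this proposition, stating only that the proof ``essentially consist[s] for each graph to find which vertex allows us to transform one graph into another,'' which is exactly the pivot-search, Seidel-complement computation, and isomorphism check (chained by transitivity) that you describe. Note that, like the paper, you stop short of exhibiting the actual pivots and witnessing bijections, so a self-contained write-up would still require carrying out the three displayed computations you promise.
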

Proofs of propositions \ref{prop:X3X2X36} and \ref{prop:X30X32X33X34} are 
not presented here, they essentially consist for each graph to find 
which vertex allows us to transform one graph into another.

The following proposition show that two forbidden finite graphs 
contain actually an instance of a member of an infinite family as 
Seidel minor. Thus it is no longer necessary to keep them 
in the list of forbidden Seidel minors. 
\begin{proposition}[]
\label{prop:XF40-SM-T2X31}
The graph $XF_{4}^{0}$ is a Seidel minor of $T_2$ and $X_{31}$.
\end{proposition}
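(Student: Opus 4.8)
The plan is to exhibit an explicit sequence of Seidel complementations (and possibly vertex deletions) transforming $T_2$ into a copy of $XF_4^0$, and likewise for $X_{31}$. First I would fix concrete vertex labelings for $T_2$, $X_{31}$, and the target $XF_4^0$ from Figures \ref{fig:FiniteFPerm} and \ref{fig:InfinitePerm}, reading off the edge sets directly from the pictures so that the adjacency relations are unambiguous. Since $XF_4^0$ is the smallest member of the infinite family $XF_4^n$, I would first determine its exact structure (the $n=0$ instance of the family depicted in Figure \subref{fig:InfinitePerm-XF4}), as this is the concrete graph I must produce as a Seidel minor.

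The core of the argument is a search for the right pivot vertices. For each of $T_2$ and $X_{31}$, I would look for a vertex $v$ such that $G * v$ moves the graph closer to $XF_4^0$; recall that by Definition \ref{def:SeidelSwitchN} the operation $G*v$ only toggles edges between $N(v)$ and $\noN{v}$, leaving $G[N(v)]$ and $G[\noN{v}]$ intact, so a good heuristic is to pick $v$ so that the neighborhood/non-neighborhood split separates the edges that need flipping from those that must be preserved. Because $T_2$ and $X_{31}$ each have more vertices than $XF_4^0$, I expect the transformation to combine one or two Seidel complementations with the deletion of a small number of vertices, legitimate under Definition \ref{def:SeidelMinor}. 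I would verify each candidate sequence by recomputing the edge set after every pivot (using $G*v*v=G$ to backtrack when a choice fails) until the resulting induced subgraph is isomorphic to $XF_4^0$.

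The main obstacle I anticipate is purely combinatorial rather than conceptual: identifying, among the few vertices of $T_2$ and $X_{31}$, precisely which pivots and deletions yield the target, since the effect of a Seidel complementation is global across the $N(v)/\noN{v}$ cut and an unlucky pivot can scramble the graph badly. To keep the search tractable I would exploit symmetry of $T_2$ and $X_{31}$ to cut down the number of essentially distinct pivot choices, and I would track a small invariant (for instance the degree sequence, or the number of triangles) to quickly reject sequences that cannot reach $XF_4^0$. Once a working sequence is found for each graph, the proof reduces to displaying the intermediate graphs and checking isomorphism with $XF_4^0$ at the end, which is routine.
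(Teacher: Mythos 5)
Your plan follows the same route as the paper: the published proof of this proposition consists of nothing more than two explicit chains of operations (displayed as figures, one for each graph) exhibiting $XF_4^0$ as a Seidel minor of $T_2$ and of $X_{31}$. There is no further idea in the paper's argument --- exhibiting the concrete sequences \emph{is} the proof.

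That, however, is exactly what your proposal is missing, and here it is a genuine gap rather than a presentational one. For a finite, concrete, existential statement like this, the entire mathematical content is the witness: which vertices you pivot on, which vertices you delete, in what order, and a check that the resulting graph is isomorphic to $XF_4^0$. Your text names no pivot and no deletion in either $T_2$ or $X_{31}$; everything you write (that $G*v$ only toggles edges across the $N(v)$ versus $\overline{N(v)}$ cut, that deletions are permitted by Definition \ref{def:SeidelMinor}, that one should prune a search using symmetry and invariants) is either a restatement of the definitions or a description of how one would \emph{look for} a proof. The part you set aside as ``purely combinatorial rather than conceptual'' is the whole proof, and it is left unexecuted, so nothing in the proposal can actually be verified. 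To close the gap you must produce the two sequences explicitly. One concrete shortcut available from the paper itself: by Proposition \ref{prop:C6XF40} and Lemma \ref{lem:HoleXF4}, the graphs $XF_4^0$ and $C_6$ are one Seidel complementation apart on the same six vertices, so by transitivity of the Seidel minor relation it suffices to exhibit $C_6$ as a Seidel minor of $T_2$ and of $X_{31}$, a slightly simpler target for the search you describe.
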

\begin{proof}
$XF_{4}^0 <_{S} T_{2}$ 
\begin{center}
	\includegraphics[scale=1]{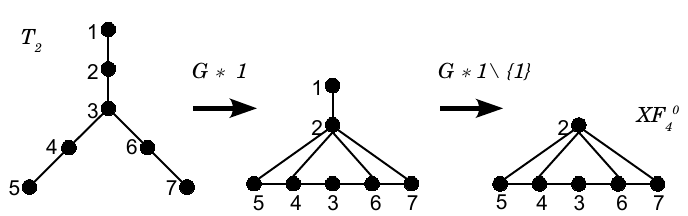}
\end{center}
and $XF_{4}^0 <_{S} X_{31}$
\begin{center}
	\includegraphics[scale=1]{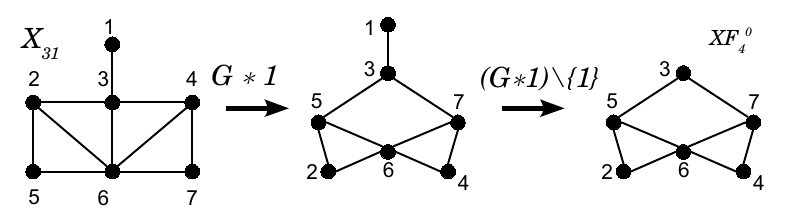}
\end{center}
\end{proof}
\begin{proposition}[]
\label{prop:C6XF40}
The graph $C_6$ is a Seidel minor of $XF_{4}^{0}$.
\end{proposition}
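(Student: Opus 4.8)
The plan is to produce, in the explicit style of Proposition~\ref{prop:XF40-SM-T2X31}, a short word $\omega$ of pivot vertices together with a picture of the intermediate graphs, and to verify that $XF_4^0 * \omega \cong C_6$. Since $XF_4^0$ is the co-antenna it has six vertices, exactly as many as $C_6$; as the number of vertices can only drop through a deletion, the reduction will in fact use no deletion at all, so the statement is really that $XF_4^0$ and $C_6$ are Seidel equivalent. I would therefore start by fixing an explicit labelling $v_1,\dots,v_6$ of $XF_4^0$ and writing down its adjacency list.

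The heart of the argument is choosing the pivots correctly. Because $C_6$ is triangle-free and $2$-regular whereas the co-antenna has triangles and vertices of degree different from $2$, a pivot $v$ is useful only when swapping the pairs of $E[N(v),\noN{v}]$ simultaneously removes every chord and creates the missing cycle edges. I expect this to single out only one or two candidate vertices, so finding $\omega$ is a small finite search: concretely I would compute $XF_4^0 * v_i$ for each $i$ straight from Definition~\ref{def:SeidelSwitchN}, iterate one further Seidel complementation if a single pivot does not already yield a $2$-regular graph, and then check that the resulting $2$-regular graph is connected, so that it is a single $6$-cycle and hence isomorphic to $C_6$.

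The difficulty is bookkeeping rather than ideas: one must split the six vertices into $N(v)$ and $\noN{v}$ and flip exactly the pairs of $E[N(v),\noN{v}]$, and a single mishandled pair leaves a triangle or a degree-$3$ vertex and breaks the isomorphism. As an independent check I would use the identity $\overline{G * v} = \overline{G} * v$, which is immediate from Definition~\ref{def:SeidelSwitchN} because the pairs flipped at $v$ are the same in $G$ and in $\overline{G}$; extended to words it turns the target $XF_4^0 * \omega \cong C_6$ into the equivalent statement that the antenna, after applying $\omega$, becomes the prism $\overline{C_6}$, giving two cross-validating routes to the same conclusion.
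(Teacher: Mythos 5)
Your framing is sound: both $XF_4^0$ and $C_6$ have six vertices, so no deletion can occur and the claim is indeed a Seidel equivalence; your acceptance test (the result must be $2$-regular and connected) is a correct certificate for $C_6$; and the identity $\overline{G*v}=\overline{G}*v$ you invoke does hold, since complementing $G$ merely exchanges the roles of $N(v)$ and $\noN{v}$, so the set of pairs flipped at $v$ is the same in both graphs. The genuine gap is that the proposal stops exactly where the proof must begin: you never write down the adjacency of $XF_4^0$, never exhibit the word $\omega$, and never carry out a single flip. The proposition is an existential statement, and its proof consists precisely of the witness together with the verification; announcing that ``a small finite search'' will produce $\omega$ is not an argument that the search succeeds --- if the search failed, the proposition would simply be false, and nothing in your text rules that out. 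Remarks such as ``I expect this to single out only one or two candidate vertices'' are heuristics, not proof steps.

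The missing content is one line, and it is the paper's own proof: pivot \emph{once}, at the degree-$2$ vertex of the $C_4$ in $XF_4^0$. Concretely, $XF_4^0$ (the co-antenna) contains an induced $C_4$, and the vertex $v$ of that $C_4$ having degree $2$ in the whole graph is the unique degree-$2$ vertex both of whose neighbors have degree $3$; thus $N(v)$ is the pair of degree-$3$ vertices and $\noN{v}$ consists of the remaining three vertices. Of the six pairs between $N(v)$ and $\noN{v}$, four are edges and two are not, so the Seidel complement at $v$ deletes four edges and creates two, leaving a graph with six edges in which every vertex has degree $2$; checking connectivity gives a single $6$-cycle, hence $C_6$, with no second pivot and no deletion needed. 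Had you actually executed your own search you would have hit this vertex at once; as submitted, however, the proposal is a plan for a proof rather than a proof.
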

\begin{proof}
Applying a Seidel complementation on the degree 2 vertex of 
the $C_4$ in $XF_{4}^{0}$ we obtain $C_6$.
\end{proof}
\subsection{Infinite Families}
We show in this section that actually forbidden infinite families under 
the relation on induced subgraphs are redundant when the Seidel minor 
operation is considered. Consequently the following propositions 
allows us to reduce from 14 infinite families with the induced subgraph 
relation to only 4 infinite families under Seidel minor relation. 
The forbidden families are \IFList and their complements.
\begin{proposition}[]
\label{prop:Hole-SM-XF3XF4XF2}
The $Hole$  is a Seidel minor of $XF_{3}^{n}$, $XF_{4}^{n}$  and $XF_{2}^{n+1}$.
\end{proposition}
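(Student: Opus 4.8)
The plan is to show, for each of the three families, that after deleting a bounded number of vertices one is left with a graph that is Seidel equivalent to a chordless cycle, so that a Hole appears as a Seidel minor in the sense of Definition~\ref{def:SeidelMinor}. All three families share the same skeleton: a long induced path --- the dashed path of Figure~\ref{fig:InfinitePerm}, whose length grows with $n$ --- whose two endpoints are attached to a small fixed gadget, together with a few extra vertices whose only role is to create the asteroidal triple that makes the graph a minimal non-permutation graph. Since those extra vertices are irrelevant once we only aim at a hole, the first step is simply to delete them, reducing each $XF$ graph to a path-plus-gadget.

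The second step is a single Seidel complementation that ``closes'' the remaining structure into a chordless cycle, exactly as in Proposition~\ref{prop:C6XF40}, where a Seidel complementation at the degree-$2$ vertex of the $C_4$ gadget of $XF_4^0$ produces $C_6$. For general $n$ I would identify in each reduced graph a vertex $u$ of degree $2$ sitting on the gadget, so that $N(u)$ consists of the two vertices through which the gadget is glued to the dashed path. Because a Seidel complementation at $u$ alters only the edges between $N(u)$ and $\overline{N(u)}$ (Definition~\ref{def:SeidelSwitchN}), all the interior vertices of the dashed path lie in $\overline{N(u)}$ and their mutual adjacencies are left untouched, while the operation swaps precisely the adjacencies at the two neighbors of $u$ that are needed to splice the path endpoints together; the gadget is thereby ``unfolded'' and the path becomes the missing arc of a cycle. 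Tracking the length of this cycle as a function of $n$ shows that it is a hole (of length growing with $n$), which is what we want.

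I would carry this out first for $XF_4^n$, where the gadget is the $C_4$ already analysed in Proposition~\ref{prop:C6XF40}, and then transcribe the argument to $XF_3^n$ and $XF_2^{n+1}$, whose gadgets differ only by the attachment of one extra pendant or apex vertex; in each case the same degree-$2$ pivot does the job after the preliminary deletions. Since Definition~\ref{def:SeidelMinor} lets deletions and Seidel complementations be interleaved freely, I perform all the deletions first and the single Seidel complementation last, which keeps the local picture at the gadget clean.

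The step I expect to be the main obstacle is the verification that the cycle produced by the Seidel complementation is genuinely \emph{induced}, that is, chordless. One must check that no spurious chord is created: since only the two neighbors of the pivot $u$ can acquire new edges, the danger is confined to a constant-size neighborhood of the gadget, but it must be ruled out carefully for each of the three families, and the endpoint adjacencies must be shown to close up into a single long cycle rather than a shorter cycle carrying a chord or a disconnected remainder. The accompanying bookkeeping of the cycle length as $n$ varies is the other point requiring attention, though it becomes routine once the local configuration at the gadget is fixed.
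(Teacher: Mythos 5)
Your overall strategy---turn each family into a hole by vertex deletions plus one well-chosen Seidel complementation, generalizing Proposition~\ref{prop:C6XF40}---is the right kind of argument, and it is what the paper's (purely pictorial) proof does. However, your prescribed order of operations, ``all the deletions first and the single Seidel complementation last,'' provably cannot work for the two families where you actually plan deletions, namely $XF_3^n$ and $XF_2^{n+1}$. The graphs in Gallai's list (Theorem~\ref{th:GallaiPermutation}) are \emph{minimal} non-permutation graphs: deleting any vertex from $XF_3^n$ or $XF_2^{n+1}$ leaves a permutation graph. By Theorem~\ref{th:SeidelPermutation}, together with heredity under vertex deletion, permutation graphs are closed under Seidel minors---this is the whole point of Section~\ref{sec:PermutationGraph}---so no hole (all holes $C_k$, $k\geqslant 5$, are non-permutation graphs) can ever be extracted from your reduced ``path-plus-gadget'' graph. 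Put differently: since Seidel complementation is an involution preserving permutation graphs, the graph to which you apply the final complementation must already be non-permutation, contradicting minimality. Note moreover that deletion commutes with complementation at a surviving vertex, $(G-w)*v=(G*v)-w$; hence the only reductions that escape this obstruction are those in which every deleted vertex is first \emph{used as a pivot} and discarded only afterwards. That is the shape of the paper's reductions: the Seidel complementation is applied to the intact graph, and whatever vertices are deleted are deleted after it. Indeed for $XF_4^n$ no deletion is needed at all: by Lemma~\ref{lem:HoleXF4}, $XF_4^n$ is Seidel complement equivalent to $C_{n+6}$, the pivot being the degree-$2$ vertex of the $C_4$, exactly as in Proposition~\ref{prop:C6XF40}.

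A second, related misreading: the ``extra vertices whose only role is to create the asteroidal triple'' are not irrelevant, and the gadget is not glued only at the path's endpoints. In $XF_4^n$ (which by Lemma~\ref{lem:HoleXF4} is $C_{n+6}*v$ for a cycle vertex $v$) the two neighbours $a,b$ of the degree-$2$ pivot are adjacent to \emph{all but one} vertex of the dashed path, not just to its ends---that is precisely why complementing at the pivot flips them to ``adjacent to exactly one end'' and splices the path into a chordless cycle---and the degree-$2$ pivot itself is the apex of the asteroidal triple. So your heuristic of deleting the AT-creating vertices would delete the very vertex at which you must pivot, breaking even the one case you anchor to Proposition~\ref{prop:C6XF40}. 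The correct reductions exist, but they require the opposite order from the one you chose, and the vertices you proposed to discard first are exactly the ones that must do the pivoting before they may be discarded.
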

\begin{proof}
\begin{center}
\includegraphics[scale=1.0]{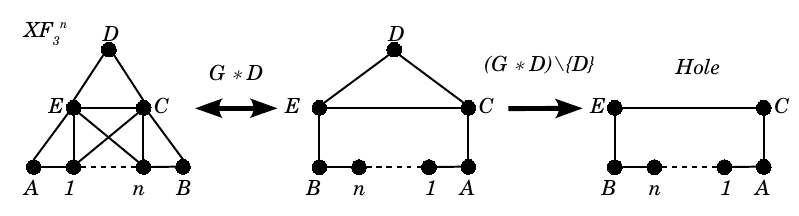}
\end{center}
\begin{center}
\includegraphics[scale=1.0]{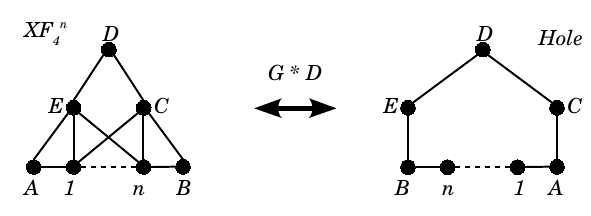}
\end{center}
\begin{center}
	\includegraphics[scale=1]{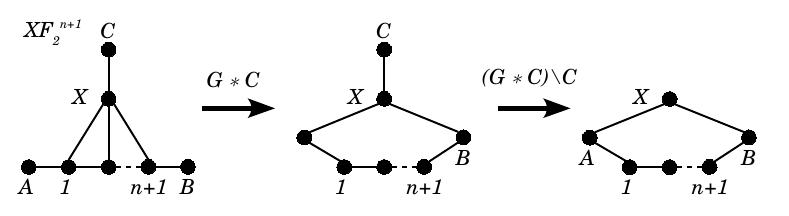}
\end{center}
\end{proof}
\begin{proposition}[]
\label{prop:XF6n}
$XF_{5}^{2n+1}$ is a Seidel minor of $XF_{6}^{2n+2}$.
\end{proposition}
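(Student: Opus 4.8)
The plan is to exhibit a single well-chosen vertex of $XF_6^{2n+2}$ at which a Seidel complementation (together with at most one vertex deletion) produces $XF_5^{2n+1}$, exactly in the spirit of Propositions \ref{prop:C6XF40} and \ref{prop:Hole-SM-XF3XF4XF2}, where the reduction is witnessed by a labelled picture. Both families are built on a common backbone, namely the long dashed path whose length carries the superscript, closed up by a small fixed gadget at each end. As the caption of Figure \ref{fig:InfinitePerm} stresses, the only essential difference between $XF_5$ and $XF_6$ is the \emph{parity of the dashed path}, and the near-end gadget of $XF_6$ is what forces the extra unit in the superscript. So the whole game is local: I want to keep the backbone and the far end intact while collapsing the $XF_6$ gadget onto that of $XF_5$, thereby flipping the parity and dropping the index from $2n+2$ to $2n+1$.

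First I would fix explicit labels for the vertices of $XF_6^{2n+2}$ -- the path vertices $p_0,p_1,\ldots$ and the few gadget vertices at either end -- so that every adjacency is written down unambiguously. Then I would take the pivot $v$ to be the gadget vertex that governs the parity at the near end, and compute $XF_6^{2n+2}*v$. The main tool is the defining property of the operation: $G*v$ leaves $G[N(v)]$ and $G[\noN{v}]$ untouched and only toggles the edges between $N(v)$ and $\noN{v}$. The reason for choosing $v$ inside the near gadget is precisely that the whole backbone and the far gadget then lie on one side of the partition $N(v)\cup\noN{v}$, or split across it in such a way that no two consecutive path vertices form a toggled cross pair; hence those parts survive the complementation unchanged, while the near gadget gets rewired. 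After the complementation the controlling gadget vertex becomes redundant (a twin or a pendant), and deleting it, if necessary, yields a graph on the correct number of vertices.

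The verification step is then to exhibit an explicit isomorphism between the resulting graph and $XF_5^{2n+1}$, matching backbone to backbone and gadget to gadget. I expect the only delicate point -- the main obstacle -- to be the bookkeeping along the dashed path: one must check that toggling the cross edges creates and destroys \emph{no} chord among the path vertices, so that the backbone reappears as a clean induced path of the correct length, and that the parity flip ($2n+2\mapsto 2n+1$) is accounted for exactly by the collapsed gadget. This is a finite, local check once the two sides of the $N(v)/\noN{v}$ partition are written out explicitly, and, as with the preceding reductions, it is most transparently presented as a labelled figure.
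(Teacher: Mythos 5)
You have identified the right \emph{kind} of argument --- the paper itself states this proposition without any proof, and its companions (Propositions \ref{prop:XF40-SM-T2X31}, \ref{prop:Hole-SM-XF3XF4XF2} and \ref{prop:C6XF40}) are proved precisely by naming a pivot vertex and displaying the resulting graph --- but what you have written is a plan for such an argument, not the argument itself. A Seidel-minor claim is certified by concrete data: the adjacencies of $XF_6^{2n+2}$ written out, an explicit pivot (or short sequence of pivots), the exact set of deleted vertices, and an isomorphism from the resulting graph onto $XF_5^{2n+1}$. You supply none of these. The definition of $XF_6^{2n+2}$ is never stated; ``the gadget vertex that governs the parity at the near end'' does not single out a vertex; the complemented graph is never computed; the isomorphism is never exhibited; even the number of deletions is left as ``at most one'', although Seidel complementation preserves the vertex set, so the exact number is forced by comparing $|V(XF_6^{2n+2})|$ with $|V(XF_5^{2n+1})|$ and must be pinned down. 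Every sentence that would carry mathematical weight is in the conditional (``I would\ldots'', ``I expect\ldots''), so there is nothing to verify.

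Moreover, the one structural claim that is supposed to make the verification routine cannot hold. You assert that with a pivot $v$ in the near gadget, ``the whole backbone and the far gadget\ldots survive the complementation unchanged.'' But these families are AT-free precisely because the long path is dominated by gadget vertices --- in $XF_5^{n}$ by the two non-adjacent vertices $C$ and $D$ (see the proof of Lemma \ref{lem:XF5-stable}), and analogously in $XF_6^{n}$; this is what the right-hand box of Figure \ref{fig:InfinitePerm} depicts. Every vertex of such a graph, in particular any pivot in an end gadget, is adjacent to at least one dominating vertex, which then lies in $N(v)$ while most of the path lies in $\noN{v}$; the complementation therefore toggles the edges between that dominator and \emph{every} far path vertex, and (for pivots like the vertex $A$ of $XF_5^n$) also destroys path edges such as the one incident to the path endpoint in $N(v)$. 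The rewiring is global, running the whole length of the graph --- exactly as in the proof of Lemma \ref{lem:XF5-stable}, where one complementation reassembles the path out of different vertices. So the ``finite, local check'' you postpone is in fact the entire proof. Relatedly, your premise that ``the only essential difference between $XF_5$ and $XF_6$ is the parity of the dashed path'' misreads the caption of Figure \ref{fig:InfinitePerm}: parity is the reason these families fail to be permutation graphs, not a statement that the two families have the same end gadgets. Bridging their genuinely different gadgets while controlling the global edge toggling is the actual content of the proposition, and it is the part that remains undone.
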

\begin{proposition}[]
\label{prop:XF1n}
$XF_{5}^{2n+1}$ is a Seidel minor of $XF_{1}^{2n+3}$.
\end{proposition}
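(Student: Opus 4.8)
The plan is to mirror exactly the strategy of the companion Proposition~\ref{prop:XF6n}: exhibit a single vertex $v$ of $XF_{1}^{2n+3}$ such that, after one Seidel complementation at $v$ followed by the deletion of a bounded number of vertices, the resulting graph is isomorphic to $XF_{5}^{2n+1}$. By Definition~\ref{def:SeidelMinor} this suffices to conclude $XF_{5}^{2n+1} \leqslant_{S} XF_{1}^{2n+3}$.

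First I would fix an explicit labeling of $XF_{1}^{2n+3}$ as drawn in Figure~\ref{fig:XF1}: the long induced ``dashed'' path $P = p_{0}p_{1}\cdots p_{2n+3}$ together with the constant-size head gadgets attached at its two ends. The point emphasized in the caption of Figure~\ref{fig:InfinitePerm} is that $XF_{1}$ and $XF_{5}$ share the very same parity-sensitive path and differ only in the fixed gadget glued to the ends; hence the transformation we must realize is entirely local to those gadgets, and the interior of the path should be carried over untouched. I would then select as pivot $v$ the head vertex of $XF_{1}$ whose neighborhood separates its gadget from the first one or two vertices of $P$.

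The key observation is the locality of the operation. By Definition~\ref{def:SeidelSwitchN}, a Seidel complementation at $v$ alters \emph{only} the edges running between $N(v)$ and $\noN{v}$; every edge lying inside $N(v)$ or inside $\noN{v}$ is preserved. Choosing $v$ in a head gadget, all interior path vertices $p_{2},\dots,p_{2n+1}$ sit in $\noN{v}$, so the induced path survives unchanged, while the edges joining the gadget vertices to the first path segment get flipped into the adjacency pattern of the $XF_{5}$ head (\cf Figure~\ref{fig:XF5}). Finally I would delete the two now-redundant endpoint/gadget vertices, which shortens the effective path and accounts precisely for the drop of the superscript from $2n+3$ to $2n+1$. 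The result is then checked to be isomorphic to $XF_{5}^{2n+1}$.

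The main obstacle is purely the bookkeeping of the flip: one must verify that every edge between the head gadget and the leading path vertices switches exactly as required, so that (i) no spurious chord is introduced along $P$ that would destroy its inducedness and upset the parity, and (ii) the two remaining end gadgets coincide with those of $XF_{5}$. As for Propositions~\ref{prop:XF6n}, \ref{prop:Hole-SM-XF3XF4XF2} and the finite cases, this verification is cleanest to certify by the accompanying figure rather than by an edge-by-edge computation.
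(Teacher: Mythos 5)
You should know at the outset that the paper itself states Proposition~\ref{prop:XF1n} without any proof (the same is true of Propositions~\ref{prop:XF6n} and~\ref{prop:C2n+1}); the only models available are the figure-proofs of Propositions~\ref{prop:XF40-SM-T2X31} and~\ref{prop:Hole-SM-XF3XF4XF2}, which exhibit an explicit pivot vertex and explicit deletions. Your proposal never reaches that level of concreteness: the pivot is described only as ``the head vertex whose neighborhood separates its gadget from the first one or two vertices of $P$,'' the two deleted vertices are never identified, the vertex count of $XF_{1}^{2n+3}$ (hence the fact that exactly two deletions are needed) is never established, and the entire verification is delegated to an ``accompanying figure'' that you do not supply. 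As it stands this is a plan for a proof, not a proof.

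More seriously, the structural premise on which the plan rests is false. You describe $XF_{1}$ and $XF_{5}$ as a long path with constant-size ``head gadgets attached at its two ends,'' so that a pivot inside a gadget only flips edges to ``the first path segment'' while the interior is carried over untouched. But by the paper's own description (proof of Lemma~\ref{lem:XF5-stable}), $XF_{5}^{n}$ is a path \emph{dominated} by two non-adjacent vertices $C$ and $D$: the extra vertices are adjacent to every path vertex, not only to the ends, and the same is true of $XF_{1}$ --- this is precisely why the right-hand families of Figure~\ref{fig:InfinitePerm} are AT-free and why the parity of the path is the decisive parameter. Consequently a Seidel complementation at a head vertex is not local: by Definition~\ref{def:SeidelSwitchN}, every vertex of $N(v)$ has its adjacency toggled against all of $\noN{v}$, i.e.\ along the entire path. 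The paper's own computation makes this explicit: complementing $XF_{5}^{n}$ at the degree-2 vertex $A$, whose neighborhood is $\{D,1\}$, makes vertex $1$ adjacent to $\{3,\ldots,n+1,B\}$ and strips $D$ of its domination of the path --- global changes, not changes confined to the first path segment. Any correct proof of Proposition~\ref{prop:XF1n} must track exactly this global exchange of the domination pattern, since that is the only mechanism by which the two path-dominating vertices $C$ and $D$ of the target $XF_{5}^{2n+1}$ can be produced, and must then verify that the two deletions leave an induced path of the correct odd length with the two degree-2 ears attached as in Figure~\ref{fig:XF5}. If your locality argument were carried out literally, the resulting graph would have no vertex dominating the interior of the path, so it could not be isomorphic to $XF_{5}^{2n+1}$; the gap is not bookkeeping, it is the main content of the statement.
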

\begin{proposition}[]
\label{prop:C2n+1}
$XF_{5}^{2n+1}$ is a Seidel minor of $C_{2n+3}$.
\end{proposition}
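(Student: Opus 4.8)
The plan is to give, exactly as in the companion Propositions~\ref{prop:C6XF40} and~\ref{prop:XF40-SM-T2X31}, an explicit short witnessing sequence consisting of a single Seidel complementation followed by a vertex deletion, together with a figure. Label the vertices of $C_{2n+3}$ cyclically as $v_0,v_1,\dots,v_{2n+2}$ with $v_iv_{i+1}\in E$ (indices read modulo $2n+3$). Since the odd cycle is vertex-transitive, I may fix the pivot to be $v_0$ without loss of generality, so there is really only one Seidel complementation to analyse.

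First I would compute $C_{2n+3}*v_0$ directly from Definition~\ref{def:SeidelSwitchN}. Here $N(v_0)=\{v_1,v_{2n+2}\}$ and $\noN{v_0}=\{v_2,\dots,v_{2n+1}\}$, so the operation toggles precisely the edges of the cut between these two sets and leaves everything else untouched. Consequently the induced path $v_2v_3\cdots v_{2n+1}$ on the non-neighbours survives unchanged, the only cut-edges $v_1v_2$ and $v_{2n+2}v_{2n+1}$ are deleted, and every remaining cut-pair becomes an edge; that is, in $C_{2n+3}*v_0$ the vertex $v_1$ is adjacent to $v_3,\dots,v_{2n+1}$, the vertex $v_{2n+2}$ is adjacent to $v_2,\dots,v_{2n}$, while $v_0$ keeps exactly its two neighbours $v_1,v_{2n+2}$ and $v_1v_{2n+2}\notin E$. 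I then delete the pivot $v_0$. The surviving graph is a ``spine'' path $v_2\cdots v_{2n+1}$ on $2n$ vertices together with two apex vertices $v_1,v_{2n+2}$, each adjacent to all of the spine except the spine endpoint nearest to it, and with $v_1\not\sim v_{2n+2}$.

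The last step is to identify this graph with $XF_5^{2n+1}$ up to isomorphism, and possibly up to one further relabelling to match the exact drawing in Figure~\ref{fig:XF5}. The essential point -- and the one I expect to require the most care -- is the parity bookkeeping: the family $XF_5$ is forbidden only for a fixed parity of its dashed path, so I must check that an \emph{odd} cycle $C_{2n+3}$ produces exactly the odd member $XF_5^{2n+1}$ and not a shifted or even-indexed member. The spine has $2n$ vertices ($2n-1$ edges), and this count is forced by the odd length $2n+3$ of the cycle; tracing how this length enters the definition of $XF_5^{n}$ is where the argument has genuine content. The main obstacle, then, is not the Seidel complementation itself -- which is the routine computation above -- but a uniform verification, valid simultaneously for all admissible $n$, that the adjacency pattern produced coincides with the definition of $XF_5^{2n+1}$, anchored by one explicit small base case to pin down the indexing.
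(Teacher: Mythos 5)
The paper itself states this proposition with no proof at all (it is one of three companion propositions left as bare statements), so there is nothing to compare line by line; your one-complementation-plus-one-deletion construction is certainly the intended witness, and your computation of $C_{2n+3}*v_0$ and of the graph remaining after deleting $v_0$ is correct. The genuine gap is in the single step you explicitly defer -- the identification of that graph with $XF_5^{2n+1}$ -- because, carried out against the paper's own description of $XF_5^{n}$ (proof of Lemma~\ref{lem:XF5-stable}: a path $1,\dots,n+1$ dominated by two non-adjacent vertices $C,D$, plus $A$ adjacent only to $D$ and $1$, and $B$ adjacent only to $C$ and $n+1$, hence $n+5$ vertices in all), it fails. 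Your graph is indeed an $XF_5$ graph: take the inner spine $v_3,\dots,v_{2n}$ as the dominated path, the apexes $v_1,v_{2n+2}$ as the dominators, and the spine endpoints $v_2,v_{2n+1}$ as $A,B$. But the dominated path then has $2n-2$ vertices, so what you have produced is $XF_5^{2n-3}$, not $XF_5^{2n+1}$.

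Moreover, no repair of the construction can close this gap: $XF_5^{2n+1}$ has $2n+6$ vertices, strictly more than the $2n+3$ vertices of $C_{2n+3}$, and Seidel complementation preserves the vertex set while deletion shrinks it, so $XF_5^{2n+1}$ cannot be a Seidel minor of $C_{2n+3}$ under that definition. What your argument actually proves is $XF_5^{2n-3}\leqslant_{S} C_{2n+3}$, equivalently $XF_5^{2n+1}\leqslant_{S} C_{2n+7}$ -- an index shift that still serves the purpose of Theorem~\ref{th:SeidelMinorPermutation} (every odd hole on at least $9$ vertices contains a forbidden odd-indexed $XF_5$ as a Seidel minor), and which is closer to the parenthetical there, where odd holes are said to contain $XF_5^{2n-1}$. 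The ``explicit small base case'' you promise but never carry out would have exposed all of this immediately: for $n=3$ your construction turns the $9$-vertex $C_9$ into the $8$-vertex $XF_5^{3}$, whereas $XF_5^{7}$ has $12$ vertices. So the proof, as written, asserts an isomorphism that is false under the paper's own conventions, precisely at the point you yourself flag as the one carrying the ``genuine content''; to be complete you must either perform the identification and correct the indices, or exhibit (and justify) a different indexing convention for $XF_5$ -- and none consistent with Lemma~\ref{lem:XF5-stable} makes the printed statement literally true.
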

\subsection{Main Theorem}
\begin{definition}[Seidel Complement Stable]
\label{def:SeidelStable}
A graph $G=(V,E)$ is said to be Seidel complement  stable if:
$ \forall v \in V: G \cong G*v$
\end{definition}
Few small graphs are Seidel complement stable, for instance, 
$P_4$, $C_5$, and more trivially $K_{n}$ the clique on 
$n$ vertices and $S_{n}$ the stable on $n$ vertices.
\begin{lemma}[]
\label{lem:XF5-stable}
The graph $XF_{5}^{n}$ is Seidel complement stable.
\end{lemma}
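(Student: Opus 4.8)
The plan is to work directly from the explicit drawing of $XF_5^n$ in Figure \ref{fig:XF5}. Since $XF_5^n$ is \emph{not} a permutation graph, the permutation-diagram description (Operation $S$, Theorem \ref{th:SeidelPermutation}) is unavailable; I must instead compute each Seidel complement from Definition \ref{def:SeidelSwitchN} by flipping the edges and non-edges between $N(v)$ and $\overline{N(v)}$, and then exhibit an isomorphism back to $XF_5^n$. First I would fix a labelling of the vertices: the long ``dashed'' spine $p_0, p_1, \dots, p_m$ (its length a function of $n$) together with the two symmetric end-gadgets attached at $p_0$ and at $p_m$. The whole graph carries the reflection automorphism $\rho$ that reverses the spine and interchanges the two end-gadgets; using $\rho$ I reduce the verification ``$G \cong G*v$ for every $v$'' to one representative $v$ per orbit, namely an interior spine vertex, the extreme spine vertices, and each type of gadget vertex.

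For each representative I would carry out the same two-step check. Step one is automatic from the definition: $G[N(v)]$ and $G[\overline{N(v)}]$ are never touched by a Seidel complement, so only the cross-adjacencies change. Step two is to read off $N(v)$ from the figure, flip exactly the cross-edges, and then name an explicit isomorphism $\phi \colon G*v \to G$. For a spine vertex $v = p_i$ we have $N(p_i)=\{p_{i-1},p_{i+1}\}$ in the interior, so the flip only rewires how $p_{i-1}$ and $p_{i+1}$ see the rest of the graph; I expect $\phi$ to be a ``re-rooting'' of the path, a shift along the spine possibly composed with $\rho$, under which the reconnected gadget on one side is carried onto the gadget on the other side. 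For a gadget vertex the neighbourhood is the fixed local configuration drawn in Figure \ref{fig:XF5}, and the flip should reproduce, after applying $\phi$, precisely the opposite end-gadget. In every case the verification reduces to comparing two finite adjacency patterns (the gadget edges) plus one uniform statement about the spine.

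The hard part will be the gadget cases together with the parity bookkeeping emphasised in the caption of Figure \ref{fig:InfinitePerm}: the obstruction in this family is governed by the parity of the dashed path, so I must check that flipping the cross-edges at $v$ neither shortens nor lengthens the induced spine in a way that changes its parity, \ie that no unwanted chord survives and no required edge is destroyed. Concretely, this amounts to proving that the Seidel flip at each representative is \emph{exactly} self-reproducing for the $XF_5$ end-gadget; this is the structural feature that distinguishes $XF_5^n$ from the neighbouring families, where (\cf Propositions \ref{prop:XF6n}--\ref{prop:C2n+1}) the flip instead collapses the graph onto a proper Seidel minor. Establishing this self-reproduction uniformly in $n$, rather than case-checking a single fixed graph, is the main content of the argument; once it is in place, assembling the orbit representatives into the statement $\forall v,\ G \cong G*v$ is routine.
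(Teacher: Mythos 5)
Your high-level strategy (exploit the reflection symmetry to cut down to orbit representatives, then compute $G*v$ and exhibit an explicit isomorphism for each representative) is the same as the paper's, but your proposal rests on a misreading of the graph, and that misreading breaks the crucial case. $XF_5^n$ is \emph{not} a long induced path with two localized end-gadgets: it is a path $1,\ldots,n+1$ \emph{entirely dominated} by two non-adjacent vertices $C$ and $D$ (each adjacent to every path vertex), plus two degree-2 vertices $A$ (adjacent to $D$ and $1$) and $B$ (adjacent to $C$ and $n+1$). So an interior spine vertex $k$ has $N(k)=\{k-1,k+1,C,D\}$, not $\{p_{i-1},p_{i+1}\}$ as you assert, and the Seidel flip at $k$ is a global rearrangement, not a local rewiring of two neighbours. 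Consequently the isomorphism $\phi$ you expect --- ``a shift along the spine possibly composed with $\rho$'' --- does not exist. In the paper's proof the isomorphism after complementing at $k$ sends $C$ and $D$ \emph{into} the path, promotes $k-1$ and $k+1$ to the dominating role, moves $A$ and $B$ onto the path, and makes $k-2$ and $k+2$ the new degree-2 extremities; the new path reads $k-2,\ldots,1,A,C,k,D,B,n+1,\ldots,k+2$. Nothing in your plan anticipates this role-exchange, and a proof attempt that looks for a spine shift would simply fail to close.

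The domination by $C$ and $D$ is exactly the structural feature that makes the family Seidel stable, so it cannot be abstracted away as an ``end-gadget''; note for contrast that a graph matching your mental picture more closely, the hole, is \emph{not} stable by itself (Lemma \ref{lem:HoleXF4}: its Seidel class is $\{C_n, XF_4^{n-6}\}$). Two smaller inaccuracies: your claim that $XF_5^n$ is never a permutation graph is false (only odd-indexed members are obstructions; $XF_5^{2n}$ is a permutation graph, which is what the well-quasi-order corollary uses), and the parity bookkeeping you flag as the hard part is a red herring here --- stability holds for all $n$ uniformly, parity only matters later when deciding which members of the family are obstructions.
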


\begin{proof}
\label{pr:XF5-stable}
$XF_{5}^{n}$ is a path of length $n$ dominated by two non-adjacent vertices 
$C$ and $D$. In addition to that,  a vertex $A$ is connected to $D$ and $1$, and 
a vertex $B$ is connected to $C$ and $n+1$. This graph is represented in Figure 
\ref{fig:XF5}.

~

The degree sequence for this graph for $n \geqslant 1$ is 
$[2;2;4 \times n; n+2; n+2]$. Except for $n=3$ the degree sequence 
allows us to ``identify'' the vertices. $A$ and $B$ are the vertices 
of degree $2$, $C$ and $D$ are the vertices of degree $n+1$ 
and the vertices of the path $[1,n+1]$ are the vertices of degree $4$.

~

Now let us formulate two easy observations. Since the graph 
presents of lot of symmetries, \ie $A$ is equivalent to $B$; 
$C$ is equivalent to $D$. It suffices to check that the 
graph obtained after a Seidel complement on the following vertices 
will preserve the desired properties. So the set of vertices 
to consider is $\{A,D,1,\ldots ,\lceil n+1\rceil \}$.

~

\noindent
Now two easy observations: $G$ denotes $XF_{5}^{n}$. 
$G \cong G *  D$. Since $D$ is connected to $\{A,1,\ldots,n+1\}$. 
After the Seidel complement it means that $C$ is now connected to 
only $B$ and $A$. And it also means that $B$ is connected to $C$, 
and since $B$ was only connected to $n+1$ in the original graph, 
it is now connected to $\{A,1,\ldots,n\}$. 
So now the path consists of the vertices $\{A,1,\ldots,n\}$, 
$B$ and $D$ dominate this path and $C$ and $n+1$ constitute the
extremities. The function $\varphi$ is given by this permutation.
$$\sigma =
\begin{pmatrix}
A & B & C & D & 1 &2 &\ldots & n+1 \\
1 & D & A & C & 2 & 3 &\ldots & ~B~ 
\end{pmatrix}
$$

\noindent
Let us show now that $G \cong G*A$.
By definition, the subgraph induced by $\{B,C,2,\ldots,n+1\}$ remains unchanged.
The vertex $1$ is now connected to $\{3,\ldots,n+1,B\}$, and is 
still connected to $A$ and $D$. Concerning $D$, it is now only connected 
to $B$ and $C$ in $G[\noN{A}]$. 
So the bijection $\varphi$ is given by the following permutation:
$$
\sigma =
\begin{pmatrix}
A &  B & C & ~D~ & 1 & 2 & \ldots & n+1 \\
A &  n & C & n+1  & D & B & \ldots & n-1 
\end{pmatrix}
$$
~\\
It is easy to see that $G\cong G*1$. 
The path is $3,4,\ldots,n+1,B,D,1,C$. 
The vertex $A$ is connected to $\{3,4,\ldots,n+1,B,D,1\}$
 and the vertex $2$ is connected to $\{4,\ldots,n+1,B,D,1,C\}$.
~\\
Let  us consider the case for the vertex $2$.
Actually $G \cong G * 2$ 
The path is $\{4,5,\ldots,n+1,B,D,2,C,A\}$.
and the vertex $1$ is connected to $\{4,5,\ldots,n+1,B,D,2,C\}$. 
And the vertex $3$ is connected to $\{5,\ldots,n+1,B,D,2,C,A\}$.

~
\\
Concerning the vertices on the path, let us consider the case 
of their vertex $k$ such that $k \in [3,n-1]$. It is 
clear that the graph $G[\{C,D,k-1,k,k+1\}]$ remains unchanged
as for the graph $G[V \setminus \{C,D,k-1,k,k+1\}]$. 
The vertex $C$ is now connected to $A,k-1,k \text{~and~} k+1$. So it is 
$4$. A similar thing happens for $D$. It is now connected to 
$B,k-1,k\text{~and~}k+1$. Concerning $k-1$ and $k+1$, 
$k-1$ is connected to every vertex except $k-2$ and $k+1$,
so its degree is $n+2$. And $k+1$ is connected to every vertex 
except $k-1$ and $k+2$. Concerning $A$ and $B$ their degrees
are now equal to $4$ (because of $k-1$ and $k+1$). And 
concerning the vertices $k-2$ and $k+2$ their degrees 
equal  $2$ because they are no longer connected to 
$C,D,k\pm 1$ but are now connected to $k\pm 1$ (\ie $k-1$ and $k+1$ swap roles).
\\
Now the extremities of the path are $k-2$ and $k+2$.
The path is of the form:
$k-2, \ldots, 1, A,C,k,D,B,n+1,n,\ldots,k+2$
\\
Consequently the graph $XF_{5}^{n}$ is Seidel complement stable.
\end{proof}

\begin{lemma}[]
\label{lem:HoleXF4}
The Seidel stable class of the hole $C_{n}$ consists of 
$C_{n}$, $XF_{4}^{n-6}$.
\end{lemma}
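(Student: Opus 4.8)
The plan is to read ``Seidel stable class'' as the Seidel complement equivalence class taken up to isomorphism, and to prove the statement by showing that the two-element set $\{C_n, XF_4^{n-6}\}$ is closed under Seidel complementation and that both members are reachable from $C_n$. Recall (\cf Figure \ref{fig:InfinitePerm-XF4}) that $XF_4^{m}$ has $m+6$ vertices and the following shape: an induced path $p_0 p_1 \cdots p_{m+2}$ on $m+3$ vertices, two non-adjacent vertices $C$ and $D$ such that $C$ is adjacent to every path vertex except the endpoint $p_0$ and $D$ to every path vertex except the opposite endpoint $p_{m+2}$, together with one further vertex $w$ whose only neighbours are $C$ and $D$. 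Writing $m = n-6$ this graph has exactly $n$ vertices, so it is a genuine candidate to be Seidel equivalent to $C_n$.

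First I would exploit that $C_n$ is vertex-transitive, so $C_n * u \cong C_n * u'$ for all vertices $u,u'$ and a single computation suffices. Labelling the cycle $1,2,\dots,n$ and complementing at $1$, the sets $N(1)=\{2,n\}$ and $\noN{1}=\{3,\dots,n-1\}$ are left internally untouched (so the path $3-4-\cdots-(n-1)$ survives); only the two edges $2\text{-}3$ and $(n-1)\text{-}n$ are deleted and all other pairs between $\{2,n\}$ and $\{3,\dots,n-1\}$ are added. A direct inspection then shows that $2$ and $n$ become the two non-adjacent high-degree vertices, each missing exactly one end of the surviving path, while $1$ is left adjacent only to $2$ and $n$. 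Matching $(2,n,1)$ with $(C,D,w)$ identifies $C_n * 1$ with $XF_4^{n-6}$. Since Seidel complementation is an involution ($G*v*v=G$), this simultaneously yields $XF_4^{n-6} * w \cong C_n$: complementing at the apex $w$ returns the hole, which is exactly the generalisation of Proposition \ref{prop:C6XF40} (here $w$ is the degree-$2$ vertex of the induced $C_4$ formed by $w,C,p_i,D$).

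It remains to prove closure in the other direction, i.e. that $XF_4^{n-6} * v \in \{C_n, XF_4^{n-6}\}$ for every $v$. The graph $XF_4^{n-6}$ admits the automorphism fixing $w$, swapping $C\leftrightarrow D$ and reversing the path $p_i \leftrightarrow p_{(n-4)-i}$, so it suffices to treat one representative of each orbit: the apex $w$, a dominating vertex $C$, a path-end $p_0$, and an internal path vertex $p_i$. The apex case is settled above; for the other three I would compute $G*v$ directly from the definition, splitting the vertices into $N(v)$ and $\noN{v}$ and tracking which of the three roles (dominating pair, spine path, apex) each vertex acquires. In every case the degree sequence $[2,2,2,(n-3),(n-3),4^{\,n-5}]$ is reproduced, and one recovers the structural signature of $XF_4^{n-6}$: a non-adjacent pair of degree-$(n-3)$ vertices, a common degree-$2$ neighbour of that pair, and an induced path on the remaining $n-3$ vertices, each of whose ends is missed by exactly one member of the pair. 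A clean shortcut is that complementing at any vertex other than $w$ leaves at least one vertex of degree $4$ present (when $v=p_i$ the neighbourhood $N(p_i)$ is unchanged, and for $v\in\{C,p_0\}$ the computation still exhibits a degree-$4$ vertex), so the outcome cannot be the $2$-regular graph $C_n$ and must therefore be $XF_4^{n-6}$.

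The main obstacle is the bookkeeping of the internal case $XF_4^{n-6} * p_i$, where every vertex changes role and one must verify that the images of the spine, of the pair $\{C,D\}$, and of $w$ reassemble into the same pattern for every admissible $i$; the reflection symmetry halves the work but the adjacencies still demand care. Once the four orbit computations are complete, $\{C_n, XF_4^{n-6}\}$ is closed under $*$, and an induction on the length of the defining word of Seidel complementations shows that the whole equivalence class is contained in this set. Since $C_n$ is $2$-regular whereas $XF_4^{n-6}$ has a vertex of degree $n-3>2$, the two graphs are non-isomorphic; hence the Seidel equivalence class of $C_n$ is exactly $\{C_n, XF_4^{n-6}\}$, as claimed.
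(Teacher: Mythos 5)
Your overall strategy (show the two-element set is closed under Seidel complementation and that both members are reachable from $C_n$) is sound, and it is essentially the case analysis the paper gestures at: the paper's own proof is omitted, with only a remark that it relies on the ``regular'' structure and on an argument in the style of Lemma \ref{lem:XF5-stable}. Your reachability half is correct and complete: $C_n$ is vertex-transitive, your computation of $C_n * 1$ is right, matching $(2,n,1)$ with $(C,D,w)$ gives $C_n * v \cong XF_4^{n-6}$, and involutivity of $*$ gives $XF_4^{n-6} * w \cong C_n$, generalizing Proposition \ref{prop:C6XF40}. The genuine gap is in the closure half. For the remaining cases $v \in \{C,\, p_0,\, p_i\}$ you never carry out the computation of $XF_4^{n-6} * v$: you assert that ``the structural signature is recovered'' and you yourself flag the internal case as ``the main obstacle'' without resolving it. The ``clean shortcut'' does not repair this, because it is circular: from ``the outcome has a vertex of degree $4$, hence is not $C_n$'' you conclude it ``must therefore be $XF_4^{n-6}$'', but that inference presupposes the outcome already lies in $\{C_n, XF_4^{n-6}\}$ --- which is exactly the closure property being proven. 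So the degree observation buys nothing until the dichotomy is available, and the burden falls back on the missing case analysis. (The asserted facts are in fact true: with $m=n-6$, complementing at $C$ produces the spine $p_1 \cdots p_{m+2}\, D$ with dominating pair $\{C,p_0\}$ and apex $w$, and complementing at an internal $p_i$ produces the spine $p_{i-2} \cdots p_0\, C\, p_i\, D\, p_{m+2} \cdots p_{i+2}$ with dominating pair $\{p_{i-1},p_{i+1}\}$ and apex $w$; but exhibiting these re-labelings for every orbit is the actual content of the proof, exactly as in the paper's proof of Lemma \ref{lem:XF5-stable}.)

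There is also a clean way to close the gap using only material already in the paper, which makes the whole closure analysis unnecessary. Lemma \ref{lem:distance1} states that two graphs are Seidel complement equivalent if and only if they are at distance at most one; its proof uses only Proposition \ref{prop:pivot} and Remark \ref{rem:pivotnn}, so there is no circularity in invoking it here. Consequently the equivalence class of $C_n$ is, up to isomorphism, $\{C_n\} \cup \{C_n * v : v \in V(C_n)\}$, and by vertex-transitivity together with your single computation this set equals $\{C_n, XF_4^{n-6}\}$; since $C_n$ is $2$-regular and $XF_4^{n-6}$ is not, the class has exactly these two members. This reduces the lemma to the one computation you did perform, and it also retroactively legitimizes your degree shortcut, since it supplies precisely the dichotomy that the shortcut needs.
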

 Due to lack of space the proof is omitted, but in a few words, it relies 
on the ``regular'' 
structure of $XF_{5}^{n}$ and Lemma \ref{lem:XF5-stable}.
\begin{theorem}[Main Theorem]
\label{th:SeidelMinorPermutation}
A graph is a permutation graph \Iff it does not contain 
as finite graphs 
\FList and their complements  and as infinite families \IFList and 
their complements as Seidel minor.
\end{theorem}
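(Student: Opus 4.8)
The plan is to prove the two implications separately, deriving both from Theorem \ref{th:SeidelPermutation} together with Gallai's induced-subgraph characterization (Theorem \ref{th:GallaiPermutation}), and to use the reduction statements of this section to collapse Gallai's long list onto the short list $\mathcal{F}=\{C_5,C_7,XF_6^2\}\cup\{XF_5^{2n+3}\}\cup\{C_{2n}:n\geqslant 6\}$. Before anything else I would record one uniform device for the ``and their complements'' clause: a direct check on Definition \ref{def:SeidelSwitchN} shows that Seidel complementation commutes with graph complementation, $\overline{G*v}=\overline{G}*v$, since the set of cross-pairs flipped at $v$ is the same in $G$ and in $\overline{G}$ (indeed $N_{\overline G}(v)=\noN{v}$). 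Hence $H\leqslant_{S}G$ iff $\overline H\leqslant_{S}\overline G$, so every reduction established for a graph automatically yields the corresponding reduction for its complement, and I never have to argue about complements by hand.

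For the forward implication, suppose $G$ is a permutation graph. By Theorem \ref{th:SeidelPermutation} the class of permutation graphs is closed under Seidel complementation, and it is trivially closed under vertex deletion; hence it is closed under the Seidel minor relation. Since this class is closed under complementation (Even, Lempel and Pnueli), it suffices to observe that each member of $\mathcal{F}$ fails to be a permutation graph: $C_5$, $C_7$ and the even holes $C_{2n}$ are holes, while $XF_6^2$ and the $XF_5^{2n+3}$ are members of Gallai's families. Consequently $G$ can contain none of these, nor any of their complements, as a Seidel minor.

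For the backward implication I would argue by contraposition. If $G$ is not a permutation graph then, by Theorem \ref{th:GallaiPermutation}, $G$ contains as an induced subgraph one of the finite graphs $T_2,X_2,X_3,X_{30},X_{31},X_{32},X_{33},X_{34},X_{36}$, or a member of one of the families $XF_1^{2n+3},XF_5^{2n+3},XF_6^{2n+2},XF_2^{n+1},XF_3^{n},XF_4^{n}$, or a hole, or a complement of one of these. Every induced subgraph is in particular a Seidel minor, so by transitivity of $\leqslant_{S}$ it is enough to show that each of these Gallai obstructions has a member of $\mathcal{F}$ as a Seidel minor. This is assembled from the results of the section: the finite graphs collapse by Propositions \ref{prop:X3X2X36} and \ref{prop:X30X32X33X34} and then, via Propositions \ref{prop:XF40-SM-T2X31} and \ref{prop:C6XF40}, onto small holes and hence onto $\mathcal{F}$; the families $XF_2,XF_3,XF_4$ and the holes are handled through Proposition \ref{prop:Hole-SM-XF3XF4XF2} and the Seidel-equivalence $C_n\equiv XF_4^{\,n-6}$ of Lemma \ref{lem:HoleXF4}; and $XF_6^{2n+2}$, $XF_1^{2n+3}$ and the odd holes all reduce to the Seidel-stable graphs $XF_5^{2n+1}$ by Propositions \ref{prop:XF6n}, \ref{prop:XF1n} and \ref{prop:C2n+1}, with Lemma \ref{lem:XF5-stable} certifying that no further reduction is possible inside that family.

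The step I expect to be the real obstacle is the bookkeeping that pins the target list down to exactly $\mathcal{F}$, and in particular the parity and boundary analysis. One has to check that the odd members of the $XF_5$ and hole families land on the odd obstructions $XF_5^{2n+3}$, while the even holes land on $C_{2n}$ with $n\geqslant 6$, and to isolate precisely the small cases — $C_5$, $C_7$ and $XF_6^2$ — that survive as irreducible obstructions. For $XF_6^2$ this means verifying that its only family reduction, Proposition \ref{prop:XF6n} at $n=0$, produces $XF_5^{1}$, which lies below the threshold of the $XF_5^{2n+3}$ family and is a permutation graph; for $C_5$ and $C_7$ one checks directly that they contain no smaller member of $\mathcal{F}$ as a Seidel minor, so that dropping them would falsify the characterization. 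Symmetric care is needed for the small even holes $C_6,C_8,C_{10}$ and for the small family members, which must be routed to some element of $\mathcal{F}$ rather than left orphaned; here Lemmas \ref{lem:XF5-stable} and \ref{lem:HoleXF4} carry most of the load by fixing the Seidel-equivalence classes, and the commutation identity above keeps the complementary half of every family in step.
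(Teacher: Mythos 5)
Your skeleton is the same as the paper's: one direction from closure of permutation graphs under Seidel complementation and vertex deletion (Theorem \ref{th:SeidelPermutation}), the other by contraposition from Gallai's list (Theorem \ref{th:GallaiPermutation}) collapsed onto the short list by the reduction propositions, with the stability lemmas certifying minimality. Your explicit identity $\overline{G*v}=\overline{G}*v$ is correct (complementation and symmetric difference with the fixed set of cross-pairs at $v$ commute, since $v$ has the same partition of $V\setminus\{v\}$ into neighbours/non-neighbours in $G$ and $\overline{G}$ up to swapping the sides), and it is a genuine improvement in rigor: the paper never says how the ``and their complements'' half of Gallai's list is handled.

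The gap sits exactly at the step you call ``the real obstacle,'' and your proposed resolution there cannot work. You read the even-hole obstructions literally as $\{C_{2n}: n\geqslant 6\}=\{C_{12},C_{14},\dots\}$ and plan to \emph{route} $C_6$, $C_8$, $C_{10}$ to some element of $\mathcal{F}$, citing Lemma \ref{lem:HoleXF4} as the workhorse. That lemma proves the opposite: the Seidel-equivalence class of $C_6$ is exactly $\{C_6, XF_4^{0}\}$, so the only $6$-vertex Seidel minors of $C_6$ are these two graphs, and every proper Seidel minor has at most $5$ vertices. The only member of $\mathcal{F}$ (or of its complement closure, as $\overline{C_5}=C_5$) on at most $5$ vertices is $C_5$, which is Seidel-stable; but the one-vertex-deleted subgraphs of $C_6$ and of $XF_4^{0}$ are $P_5$, the house $\overline{P_5}$, and other small permutation graphs, none isomorphic to $C_5$. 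So under your reading, $C_6$ is a non-permutation graph containing no forbidden Seidel minor: the theorem itself would be false, and no bookkeeping can complete your plan. The index must be read as forbidding \emph{all} even holes of length at least $6$ (the paper's notation is sloppy here); this reading is forced by the paper's own finite-graph reductions, since $T_2$ and $X_{31}$ are eliminated by landing on $C_6$ via $XF_4^{0}$ (Propositions \ref{prop:XF40-SM-T2X31} and \ref{prop:C6XF40}), which is pointless unless $C_6$ is itself forbidden --- and for the same reason your claim that the finite graphs collapse ``onto small holes and hence onto $\mathcal{F}$'' fails under your reading, the small hole reached being precisely $C_6$. A secondary looseness, inherited from the paper's sketch but which you also do not close: Propositions \ref{prop:X3X2X36} and \ref{prop:X30X32X33X34} only give Seidel equivalence \emph{within} the groups $\{X_2,X_3,X_{36}\}$ and $\{X_{30},X_{32},X_{33},X_{34}\}$; the two propositions you invoke afterwards concern only $T_2$ and $X_{31}$, so an explicit reduction of these two groups into $\mathcal{F}$ still has to be supplied.
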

\begin{proof}[Sketch of Proof]
This theorem relies on Gallai's result (\cf Theorem \ref{th:GallaiPermutation}). 
If $G$ is not a permutation graph then it contains one of the 
graphs listed in Theorem  \ref{th:GallaiPermutation} as an induced 
subgraph. Thanks to previous propositions 
\ref{prop:X3X2X36}-
\ref{prop:C6XF40} concerning the finite families, and 
propositions 
\ref{prop:Hole-SM-XF3XF4XF2}-
\ref{prop:C2n+1} concerning the infinite families.
We are able to reduce these induced subgraphs into 
a smaller set of graphs which are now forbidden Seidel minor.
It remains to prove that this list is minimal. 
Concerning the infinite families, Lemma \ref{lem:XF5-stable} proves that 
it is not possible to get rid of this families since it is 
Seidel stable. Concerning even holes (since odd holes 
are dismissed because they contain $XF_{5}^{2n-1}$ as Seidel minors) 
Lemma \ref{lem:HoleXF4} says that it is not possible 
to get rid of them. The same kind of argument holds for 
the finite graphs.
\end{proof}
\begin{corollary}[]
The class of permutation graphs is not well quasi ordered under Seidel minor relation.
\end{corollary}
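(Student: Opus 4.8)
The plan is to establish non–well-quasi-ordering by exhibiting an infinite antichain of permutation graphs for the relation $\leqslant_S$. A preliminary observation simplifies the task: a Seidel complementation leaves the number of vertices unchanged, while a deletion strictly decreases it, so $\leqslant_S$ admits no infinite strictly descending chain. Consequently non-WQO can be witnessed only by an infinite antichain, and producing one suffices.

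For the antichain I would use the Seidel-stable family of Lemma \ref{lem:XF5-stable}. For each $n$ take the minimal forbidden obstruction $XF_5^{2n+3}$ and delete a single vertex $w_n$ whose removal yields a permutation graph; such a vertex exists because $XF_5^{2n+3}$ is a \emph{minimal} forbidden Seidel minor, so every proper Seidel minor — in particular every single-vertex deletion — is a permutation graph. Write $G_n := XF_5^{2n+3} \setminus \{w_n\}$. These are permutation graphs whose vertex counts are strictly increasing in $n$, which already rules out $G_i \leqslant_S G_j$ whenever $i > j$, since $G_i \leqslant_S G_j$ forces $|V(G_i)| \leqslant |V(G_j)|$.

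The core of the argument is to show $G_i \not\leqslant_S G_j$ for $i < j$ as well. Here I would introduce the invariant $\mu(G) = \max\{k : XF_5^{k} \leqslant_S G\}$, which is monotone under $\leqslant_S$ by transitivity: if $G \leqslant_S G'$, then $XF_5^k \leqslant_S G$ implies $XF_5^k \leqslant_S G'$. Using that $XF_5^n$ is Seidel complement stable, so that no Seidel complementation can destroy or shrink the embedded $XF_5$-structure, together with the degree sequence $[2;2;4\times n;n+2;n+2]$ recorded in the proof of Lemma \ref{lem:XF5-stable}, I would pin down $\mu(G_n)$ exactly and verify that a Seidel minor embedding $G_i \hookrightarrow G_j$ is incompatible with the monotonicity of $\mu$. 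An alternative, more structural route is to argue that any Seidel minor chain from $G_j$ to a copy of $G_i$ would, after tracking the distinguished dominating vertices $C,D$ and the endpoints $A,B$, contradict the minimality of $XF_5^{2j+3}$ established in the Main Theorem.

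The main obstacle is precisely this incomparability step. Because Seidel complementation can alter degrees and adjacencies globally, naive invariants such as the raw degree sequence are not preserved along a Seidel minor chain, and the delicate point is that re-attaching the deleted vertex $w_n$ is not itself a Seidel minor operation, so one cannot simply lift a relation $G_i \leqslant_S G_j$ to $XF_5^{2i+3} \leqslant_S XF_5^{2j+3}$ without care. The whole difficulty is to isolate a quantity that is simultaneously invariant under Seidel complementation and monotone under deletion; the stability of $XF_5^n$ is the key leverage, since it guarantees that the $XF_5$-substructure behaves rigidly under the complementation part of the operation, leaving deletion as the only genuine source of change and thereby letting the vertex count and $\mu$ control the comparison.
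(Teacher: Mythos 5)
There is a genuine gap, and it sits exactly where you placed it: the incomparability step $G_i \not\leqslant_{S} G_j$ for $i<j$ is never established, and the tools you propose cannot establish it. Your invariant $\mu(G)=\max\{k : XF_{5}^{k} \leqslant_{S} G\}$ is monotone in the wrong direction for this purpose: from $G_i \leqslant_{S} G_j$ monotonicity only gives $\mu(G_i)\leqslant\mu(G_j)$, which is what one expects to hold anyway when $i<j$ (the larger graph should carry the larger invariant), so no contradiction can ever come out of it; to refute $G_i \leqslant_{S} G_j$ you would need a quantity forced to be strictly \emph{larger} on the \emph{smaller} graph. Worse, your choice of antichain discards the one property that makes incomparability tractable: $XF_{5}^{2n+3}$ is Seidel complement stable (Lemma \ref{lem:XF5-stable}), but $G_n = XF_{5}^{2n+3}\setminus\{w_n\}$ in general is not --- deleting $w_n$ breaks the symmetry on which the stability proof rests --- so you have no control over the orbit of $G_n$ under Seidel complementation, and the claim that ``the $XF_5$-substructure behaves rigidly'' inside $G_n$ is precisely what you cannot invoke. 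You flagged this yourself, but flagging the obstacle does not remove it: as written, the proof does not go through.

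The paper sidesteps all of this by choosing the antichain inside the same family but at \emph{even} indices: $XF_{5}^{2n}$ is itself a permutation graph (only the odd-indexed members $XF_{5}^{2n+3}$ are obstructions), no vertex deletion is needed, and Lemma \ref{lem:XF5-stable} applies to $XF_{5}^{n}$ for every $n$, so each antichain member is Seidel complement stable. Stability is then decisive: since deletions commute with complementations at surviving vertices ($(G*v)-x=(G-x)*v$ for $x\neq v$), any Seidel minor of $G$ is an induced subgraph of a graph obtained from $G$ by complementations alone, and when $G$ is stable that graph is isomorphic to $G$; hence Seidel minor containment collapses to induced subgraph containment. The paper then only needs the easy check that $XF_{5}^{2k}$ is not an induced subgraph of $XF_{5}^{2l}$ for $k<l$ (vertex counts handle the other direction), giving the infinite antichain of permutation graphs. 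If you want to salvage your construction, the simplest repair is to drop the deletion entirely and take $G_n = XF_{5}^{2n}$, which is exactly the paper's proof.
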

\begin{proof}
$XF_{5}^{2n+3}$ constitutes an obstruction for permutation graphs. 
But for even values $XF_{5}^{2n}$ is a 
permutation graph. Furthermore, it is easy to check 
that for $k$ and $l$ two positives integers such that $k<l$, 
$XF_{5}^{2k}$ is not an induced subgraph of $XF_{5}^{2l}$.
Consequently the family $XF_{5}^{2n}$ is an infinite family 
of finite permutation graphs. Since $XF_{5}^{n}$  is Seidel stable 
by Lemma \ref{lem:XF5-stable}, these graphs are not 
comparable each other with the Seidel minor relation.
It is thus an infinite anti-chain for the Seidel minor relation 
and consequently permutation graphs are not well quasi ordered under Seidel minor 
relation.
\end{proof}
\section{Distance between Seidel complement equivalent graphs}
\label{sec:equivalence}
In this section we show that if two graphs $G$ and $H$ are Seidel complement 
equivalent, they are at distance at most $1$ from each other.

\subsection{General remarks}

\begin{lemma}
\label{lem:distance1}
 Let $G$ and $H$ be two graphs, $G$ and $H$ are Seidel complement equivalent 
if and only if they are at distance at most $1$.
\end{lemma}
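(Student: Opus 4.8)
The plan is to show that the Seidel-complement equivalence class of $G$, taken up to isomorphism, is exactly
$$ S(G) = \{G\} \cup \{G*v : v \in V\}, $$
which is precisely the assertion that any two equivalent graphs are at distance at most $1$ (reachable by at most one Seidel complementation up to isomorphism). The backward implication is immediate from the definitions: if $H\cong G$ the empty word witnesses equivalence, and if $H\cong G*v$ the one-letter word $v$ does, so in both cases $G$ and $H$ are Seidel equivalent. The whole content therefore lies in the forward implication, for which I would prove that $S(G)$ is closed under a single Seidel complementation up to isomorphism, and then iterate.

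First I would record the equivariance of Seidel complementation under relabelling: for a permutation $\pi$ of $V$, writing $\pi\cdot G$ for the relabelled graph, one has $(\pi\cdot G)*u = \pi\cdot(G*\pi^{-1}(u))$ for every vertex $u$. This is immediate, since a pair $\{x,y\}$ is toggled at $u$ in $\pi\cdot G$ exactly when $\{\pi^{-1}(x),\pi^{-1}(y)\}$ is toggled at $\pi^{-1}(u)$ in $G$. The key step is then the reduction $G*v*w\cong G*w$ for distinct $v,w$. Using $G*v*v=G$ together with the pivot identity of Proposition \ref{prop:pivot} (valid for non-edges as well by Remark \ref{rem:pivotnn}),
$$ G*v*w = (G*v*w*v)*v = (G\star vw)*v. $$
By the remark following Proposition \ref{prop:pivot}, $G\star vw$ is just $G$ with $v$ and $w$ interchanged, i.e. $G\star vw = (v\,w)\cdot G$; applying equivariance with $\pi=(v\,w)$ gives $(G\star vw)*v = (v\,w)\cdot(G*w)\cong G*w$, as claimed.

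With this in hand the closure of $S(G)$ is routine. Given a graph $H$ isomorphic to a member of $S(G)$ and a vertex $u$, write $H=\psi\cdot G$ or $H=\psi\cdot(G*t)$ for a permutation $\psi$ of $V$, and set $s=\psi^{-1}(u)$. In the first case equivariance yields $H*u=\psi\cdot(G*s)\cong G*s\in S(G)$. In the second case $H*u=\psi\cdot((G*t)*s)$, and either $s=t$, whence $(G*t)*s=G$ and $H*u\cong G$, or $s\neq t$, whence the reduction gives $(G*t)*s\cong G*s$ and $H*u\cong G*s$; in every case $H*u$ is isomorphic to a member of $S(G)$. Since $G\in S(G)$, an induction on the length of $\omega$ shows that $G*\omega$ is always isomorphic to a member of $S(G)$, so any graph Seidel equivalent to $G$ is isomorphic to $G$ or to some $G*v$, i.e. is at distance at most $1$.

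The main obstacle is the reduction $G*v*w\cong G*w$: everything hinges on interpreting $\star$ as the transposition of $v$ and $w$ and combining it correctly with $G*v*v=G$ and equivariance. The one point to be careful about is that $\{v,w\}$ need not be an edge of the intermediate graph, so I must invoke Remark \ref{rem:pivotnn} to guarantee that both the pivot identity and the swap interpretation of $\star vw$ remain valid in the non-edge case.
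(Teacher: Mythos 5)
Your proof is correct and follows essentially the same route as the paper's: both hinge on the reduction $G*v*w \cong G*w$, obtained from the pivot identity $G*v*w*v = G\star vw$ (extended to non-edges via Remark~\ref{rem:pivotnn}) and the fact that $\star vw$ merely transposes $v$ and $w$, and then iterate to shorten any word of complementations to length at most one. The only difference is that you make explicit the relabelling equivariance $(\pi\cdot G)*u = \pi\cdot(G*\pi^{-1}(u))$ needed to carry the isomorphism through subsequent complementations, a detail the paper leaves implicit in its ``reduce the sequence by one'' step.
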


\begin{proof}
If $G$ and $H$ are isomorphic they are Seidel complement equivalent by Definition 
\ref{def:SeidelEquivalent}. 
Let us assume that $G$ and $H$ are not isomorphic.
As an observation of the proof of Lemma \ref{prop:pivot} and Remark \ref{rem:pivotnn} 
we can notice that for any graph $G$ and any pair of vertices $v$ and $w$ of $G$ 
we have the following:
$$ G_1 = G *v *w  \cong G * w = G_2$$
This equality allows us to reduce, when given a sequence of Seidel complementation, 
by one. It is not longer reducible when the sequence is only of length $1$ and 
by hypothesis since $G$ and $H$ are not isomorphic it is the best we can do.
\end{proof}

\begin{corollary}[]
The number of graphs that are Seidel complement equivalent to a graph $G$ 
is at most $n+1$.
\end{corollary}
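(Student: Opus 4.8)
The plan is to obtain the bound directly from Lemma \ref{lem:distance1}, which already does the heavy lifting: it asserts that every graph $H$ that is Seidel complement equivalent to $G$ lies at distance at most $1$ from $G$. Once this is in hand, the corollary becomes a pure counting argument over the distance-$\leqslant 1$ neighborhood of $G$, so I would present it as such rather than reasoning about words $\omega$ directly.

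First I would unpack what ``distance at most $1$'' means in terms of the operation of Definition \ref{def:SeidelSwitchN}. A graph at distance $0$ from $G$ is, up to isomorphism, $G$ itself; a graph at distance $1$ is by definition isomorphic to $G * v$ for some single vertex $v \in V$. Hence, invoking Lemma \ref{lem:distance1}, the entire Seidel equivalence class of $G$ is contained, up to isomorphism, in the set
$$\{G\} \cup \{\, G * v : v \in V \,\}.$$
The step I would emphasize here is that Definition \ref{def:SeidelEquivalent} a priori permits arbitrarily long words $\omega \in V^{*}$, so it is not obvious that the class is finite at all; the reduction $G * v * w \cong G * w$ established in the proof of Lemma \ref{lem:distance1} is precisely what collapses every such word to length at most $1$ and thereby confines the class to the displayed finite list.

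Next I would simply count the right-hand side. Since $|V| = n$, there are at most $n$ graphs of the form $G * v$, and adjoining $G$ itself gives at most $n + 1$ graphs in total. I would note that several of these may coincide up to isomorphism --- indeed $G$ may equal some $G * v$, as happens for the Seidel complement stable graphs of Definition \ref{def:SeidelStable} --- but such collisions only decrease the count, so $n + 1$ remains a valid upper estimate, which is exactly the claim. There is no genuine obstacle in this corollary once Lemma \ref{lem:distance1} is granted; the only point demanding care is the passage from ``distance $\leqslant 1$'' to the explicit enumeration $\{G\} \cup \{G * v : v \in V\}$, and that passage is immediate from the definition of distance.
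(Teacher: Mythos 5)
Your proof is correct and follows exactly the paper's (implicit) route: the paper states this corollary without proof as an immediate consequence of Lemma \ref{lem:distance1}, and your argument---confining the equivalence class to $\{G\} \cup \{G * v : v \in V\}$ via the distance-at-most-one lemma and counting $n+1$ candidates---is precisely that intended deduction.
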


\begin{corollary}[]
Seidel complement equivalence is polynomially reducible to Graph Isomorphism 
problem.
\end{corollary}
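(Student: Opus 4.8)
The plan is to turn Lemma~\ref{lem:distance1} directly into an algorithm that decides Seidel complement equivalence using a polynomial number of calls to a Graph Isomorphism oracle. Let $G=(V,E)$ and $H=(V,F)$ be the two input graphs and set $n=|V|$. By Lemma~\ref{lem:distance1}, $G$ and $H$ are Seidel complement equivalent if and only if they are at distance at most $1$, that is, if and only if either $G\cong H$ or there is a single vertex $v\in V$ with $G*v\cong H$. Thus it suffices to search over the at most $n+1$ graphs in the Seidel equivalence class of $G$ identified by the previous corollary.

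Concretely, I would first query the oracle on $(G,H)$ to test whether $G\cong H$, and accept if the answer is positive. Otherwise I would iterate over the $n$ vertices $v\in V$; for each one I compute $G*v$ explicitly from Definition~\ref{def:SeidelSwitchN} and query the oracle on $(G*v,H)$. The algorithm accepts as soon as one query returns ``isomorphic'' and rejects if all $n+1$ queries fail. Correctness is immediate from the reduction step inside Lemma~\ref{lem:distance1}: any witnessing word $\omega\in V^{*}$ can be shortened to length at most $1$, so whenever $G$ and $H$ are equivalent, $H$ is isomorphic to $G$ or to some $G*v$, which is exactly what the queries test.

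It then remains only to bound the running time outside the oracle calls. Each graph $G*v$ is obtained by swapping the edges and non-edges between $N(v)$ and $\noN{v}$, which costs $O(n^{2})$ time (as noted in the remark following Corollary~\ref{cor:SeidelPermConstantTime}); we build at most $n$ such graphs and issue $n+1$ queries in total, so the procedure is a polynomial-time Turing reduction to Graph Isomorphism. I do not expect any genuine obstacle here: the whole difficulty has already been absorbed into Lemma~\ref{lem:distance1}, whose distance bound collapses the a~priori unbounded search over words $\omega\in V^{*}$ into the finite list $\{G\}\cup\{G*v : v\in V\}$. The only points to state carefully are that this list has size $n+1$ and that each member is computable in polynomial time, both of which are routine.
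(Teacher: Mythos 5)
Your proof is correct and follows exactly the argument the paper intends: Lemma~\ref{lem:distance1} collapses the search to the $n+1$ graphs $\{G\}\cup\{G*v : v\in V\}$, each computable in polynomial time, so $n+1$ oracle queries to Graph Isomorphism suffice. This matches the paper's (implicit) derivation of the corollary from Lemma~\ref{lem:distance1} and the preceding corollary, with no substantive difference.
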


\begin{lemma}[]
 Let $G$ and $H$ be two prime graphs \wrt modular decomposition, to decide 
if $G$ is isomorphic to $H$ is polynomially reducible to Seidel complement 
equivalence problem.
\end{lemma}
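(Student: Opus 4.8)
The plan is to combine Lemma~\ref{lem:distance1} with an isomorphism-preserving gadget that forces Seidel complement equivalence to coincide with isomorphism on the constructed instances. By Lemma~\ref{lem:distance1} the entire Seidel complement equivalence class of a graph $G$ on $n$ vertices is, up to isomorphism, the set $\{G\}\cup\{G*v : v\in V\}$, so two graphs are Seidel complement equivalent \Iff one is isomorphic to a \emph{single} Seidel complement of the other. Hence, for prime $G$ and $H$, the naive query ``is $G$ Seidel complement equivalent to $H$?'' is too weak: it answers \emph{yes} not only when $G\cong H$, but also whenever $H\cong G*v$ for some $v$, and by Theorem~\ref{th:SeidelMDPrime} each such $G*v$ is again prime, so these spurious equivalences really occur (cf.\ the boxes of Figure~\ref{fig:FiniteFPerm}). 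The whole difficulty is to suppress them.

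First I would dispatch the trivial bookkeeping: if $|V(G)|\neq|V(H)|$ the graphs are not isomorphic, so assume both have $n$ vertices. The reduction then maps $(G,H)$ to $(G',H')$ via a fixed, polynomial-time, isomorphism-preserving gadget $\Gamma$ (so that $G\cong H \Leftrightarrow G'\cong H'$). The role of $\Gamma$ is to make a polynomially computable invariant $\mu$ --- I would try the number of edges --- \emph{separate} the isomorphism types inside the Seidel class of $G'$. A direct count gives
\[
|E(G'*v)| - |E(G')| \;=\; |N(v)|\cdot|\noN{v}| - 2\,e(v),
\]
where $e(v)=|E[N(v),\noN{v}]|$ counts the edges of $G'$ between the neighborhood and the non-neighborhood of $v$. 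If $\Gamma$ can be arranged so that this right-hand side is nonzero for every $v$ with $G'*v\not\cong G'$, then distinct isomorphism types in the Seidel class of $G'$ have distinct edge counts. In that case, ``$G'$ and $H'$ Seidel complement equivalent'' together with the oracle-free check $|E(G')|=|E(H')|$ forces $H'\cong G'$ (any $H'\cong G'*v$ with $G'*v\not\cong G'$ would have a different edge count), and conversely $G'\cong H'$ yields both conditions. The reduction outputs ``isomorphic'' exactly when the oracle reports equivalence \emph{and} the edge counts agree; this is a polynomial Turing reduction to the Seidel complement equivalence problem.

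The hard part is the design and correctness of $\Gamma$, \ie guaranteeing $|N(v)|\cdot|\noN{v}|-2\,e(v)\neq 0$ whenever $G'*v\not\cong G'$. A pure parity argument --- forcing $|N(v)|\cdot|\noN{v}|$ odd --- is blocked by the handshake lemma, since all degrees odd forces an even number of vertices, so $\Gamma$ must instead control the quantities $e(v)$ themselves, for instance by grafting onto each original vertex a private, asymmetric pendant structure whose contribution to the difference above is distinct for each $v$; the Seidel-neutral vertices of the gadget (universal or isolated vertices, for which $\noN{v}=\varnothing$ or $N(v)=\varnothing$ and hence $G'*v=G'$) then cause no interference. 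Here primality is what makes the analysis tractable: by Theorems~\ref{th:SeidelMDPrime} and~\ref{th:MDtreeSeidel} the prime core is preserved and every nontrivial Seidel complement acts \emph{within} it, so the effect on $\mu$ can be tracked node by node along the path to the root without the equivalence class folding unexpectedly. Once such a $\Gamma$ is in hand, correctness and polynomiality are immediate, and together with the earlier corollary that Seidel complement equivalence reduces to Graph Isomorphism this shows the two problems are polynomially equivalent, \ie Seidel complement equivalence is \Iso-complete.
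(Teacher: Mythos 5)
Your framework is sound as far as it goes---Lemma \ref{lem:distance1} does reduce everything to separating $G'$ from its at most $n$ one-step Seidel complements, and your observation that the naive query is too weak (spurious equivalences between non-isomorphic prime graphs really occur) is correct---but the proof has a genuine gap: the gadget $\Gamma$, which you yourself call ``the hard part,'' is never constructed. Everything after ``if $\Gamma$ can be arranged so that\dots'' is conditional on an object whose existence is precisely the content of the lemma; a ``private, asymmetric pendant structure'' is not specified, and it is far from clear that such structures can force $|N(v)|\cdot|\noN{v}|-2e(v)\neq 0$ for every $v$ with $G'*v\not\cong G'$ (note also that grafting gadgets changes which complements are isomorphic to $G'$, so the condition you must verify moves as you modify the graph). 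As it stands, this is a reduction schema, not a proof.

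The missing idea is that no numeric invariant is needed, and the gadget can be trivial: the paper simply adds one universal vertex $x$ to $G$ (and $y$ to $H$). You even note in passing that universal vertices are Seidel-neutral ($\noN{x}=\varnothing$, so $G'*x=G'$) without noticing that this single vertex is the whole gadget. The point where primality is genuinely used is this: for every other vertex $v$, Theorem \ref{th:MDtreeSeidel} shows that the modular decomposition tree of $G'*v$ has a prime node as its root, so $G'*v$ has no universal vertex at all. Hence ``possesses a universal vertex'' is an isomorphism invariant isolating $G'$ inside its Seidel equivalence class. Since $H'$ also has a universal vertex, Lemma \ref{lem:distance1} gives: $G'$ and $H'$ are Seidel equivalent \Iff $G'\cong H'$ \Iff $G\cong H$ (for the last step, a prime graph on at least three vertices has no universal vertex, so any isomorphism of $G'$ onto $H'$ must send $x$ to $y$ and restricts to an isomorphism of $G$ onto $H$). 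This yields a many-one reduction with no side check on edge counts, whereas your program still requires a nontrivial construction to be carried out before it proves anything.
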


\begin{proof}
The reduction is as follows: $G'$ (resp. $H'$) is obtained from $G$ (resp. $H$)
 by adding a universal vertex $x$ (resp. $y$). As a consequence $G'$ has only 
one non-trivial module: $G$. The modular decomposition tree of $G'$ is 
composed of two internal nodes, the root is a series node with two children: 
the universal vertex and the prime node labeled by $G$.

\noindent 
{\bf Claim: } $G$ is isomorphic to $H$ iff $G'$ is Seidel complement equivalent to $H'$.
\begin{proof}
~\\
$(\Rightarrow)$ is obvious. \\
$(\Leftarrow)$  Any graph that is Seidel complement equivalent 
to $G'$ is actually isomorphic to $G'$. Since $G$ is a prime graph, 
its modular decomposition tree $T(G)$ is only a prime node labeled $G$. 
Since $G'$ is obtained from $G$ by adding a universal vertex to $G$, 
its modular decomposition tree is simply a series node as the root, his 
first child is the universal vertex and the second child is the modular decomposition 
tree of $G$. Thanks to Theorem \ref{th:MDtreeSeidel} and \ref{lem:distance1}, 
it is easy to realize that only one graph in the Seidel complement equivalence 
class of $G'$ possess a universal vertex. Actually  performing a Seidel complement 
at any vertex attached to the prime node of $T(G')$  will result in graph without 
a universal vertex since the root of modular decomposition tree obtained will be 
a prime node. And since all the graph that are Seidel equivalent to 
a given graph are at distance at most $1$, $G'$ and $H'$ are Seidel equivalent iff $G'$ and 
$H'$ are isomorphic iff $G$ and $H$ are isomorphic. 
\end{proof}
It is enough to conclude.
\end{proof}

\subsection{The case of cographs and permutation graphs}
In this section we show that for the class of cographs and permutation graphs. 
We can decide, in linear time for cographs and in quadratic time 
for permutation graphs if the graphs are given with their co-tree for cographs 
or with their intersection model for permutation graphs.

\begin{lemma}[]
\label{lem:SeidelCographEquivLinear}
To decide if two cographs $G$ and $H$ are Seidel complement equivalent 
can be computed in linear time $O(n)$.
\end{lemma}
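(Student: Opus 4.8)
The plan is to reduce the problem to a canonical-form comparison on co-trees, exploiting the very tight structure of Seidel complement equivalence classes established earlier. By Lemma \ref{lem:distance1}, any graph Seidel equivalent to $G$ is at distance at most $1$, so the entire equivalence class of $G$ is $\{G\} \cup \{G * v : v \in V\}$, which has at most $n+1$ elements. Hence $G$ and $H$ are Seidel equivalent \Iff $H \cong G * v$ for some vertex $v$ (or $H \cong G$). The naive approach would test isomorphism $n+1$ times, but cograph isomorphism is linear, giving $O(n^2)$; the work is therefore to avoid recomputing from scratch and instead read off all equivalence-class members directly from a single pass over the co-tree.

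First I would invoke Remark \ref{rem:exchange} (the Exchange property): a Seidel complement at $v$ on a cograph amounts to swapping the leaf $v$ with the root of the co-tree, leaving the number and types of all internal nodes unchanged. This means every member $G * v$ of the equivalence class shares the same multiset of internal nodes with the same parent/child structure, reorganized only along the root-to-$v$ path. Consequently I would compute a \emph{canonical form} of the co-tree $T$ that is invariant under this root-reattachment operation: an \emph{unrooted} canonical encoding in which the special role of the root is factored out. The key observation is that exchanging $v$ with the root reverses the series/parallel labels along the path from $P(v)$ to $R(T)$ and re-hangs $v$; so two co-trees are Seidel equivalent exactly when they yield the same canonical unrooted encoding.

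Concretely, the algorithm I would present is: (i) compute the co-trees $T_G$ and $T_H$ in linear time (\cf \cite{TedderCHP08}); (ii) canonically encode each co-tree by the standard bottom-up AHU-style labeling, sorting children labels at each internal node, which runs in $O(n)$; and (iii) compare the resulting canonical strings, but modulo the root-swap described above. To handle step (iii) in linear time rather than $O(n^2)$, I would precompute for $T_G$ the canonical label of each subtree in one bottom-up sweep, then in one top-down sweep compute, for each leaf $v$, the canonical encoding of $G * v$ using the precomputed subtree labels along the root-to-$v$ path — each such encoding being assembled in amortized constant additional work because only the path changes. Storing these $n+1$ canonical strings (or rather a single normalized representative obtained by choosing the lexicographically minimum over the class) lets me decide equivalence by a single canonical-form equality test.

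The hard part will be ensuring step (iii) genuinely runs in total linear time rather than $O(n^2)$: recomputing a full canonical string for each of the $n+1$ class members independently would be quadratic. The resolution is to observe that consecutive members along the co-tree differ only by the reversal of labels on one root-to-leaf path and the reattachment of a single leaf, so by a heavy-path / persistent-hashing argument one can maintain the canonical encoding incrementally. Alternatively, and more cleanly for exposition, I would define a single root-independent canonical form — e.g., root the co-tree canonically at its \emph{centroid} or normalize by always attaching the former root as the label-flipped counterpart — so that $G$ and $H$ are Seidel equivalent \Iff their centroid-rooted canonical co-trees coincide, which is computed and compared in one $O(n)$ pass each. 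Establishing that this root-independent form is a complete invariant for the Seidel equivalence relation, using Remark \ref{rem:exchange} to show no internal nodes merge or split, is the crux of the argument.
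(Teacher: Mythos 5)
Your overall strategy --- use Lemma \ref{lem:distance1} plus the exchange property to build a root-invariant canonical form of the co-tree --- is in the right spirit, but it rests on a factual error that breaks both of your concrete constructions. A Seidel complementation at $v$ does \emph{not} flip the series/parallel labels along the path from $P(v)$ to $R(T)$: Remark \ref{rem:exchange} states explicitly that the number and the types of internal nodes are preserved. What actually happens (Theorem \ref{th:SeidelCograph}) is that the ancestor order along that path is reversed, and, crucially, the leaf $v$ is detached from $P(v)$ and re-attached to the former root $R(T)$. Because $v$ moves to a different attachment point, $T(G)$ and $T(G*v)$ are in general not isomorphic even as \emph{unrooted} labeled trees, so no canonicalization of the co-tree alone (centroid rooting included) can be a complete invariant. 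Concretely, take $G=P_3$ with co-tree $1(x,0(y,z))$, writing $1$ for series and $0$ for parallel; then $G*y$ has co-tree $0(z,1(x,y))$. Viewed unrooted, the series node of the first tree has exactly one leaf neighbor while that of the second has two, so your canonical form would declare these two Seidel-equivalent graphs inequivalent; your claim that ``two co-trees are Seidel equivalent exactly when they yield the same canonical unrooted encoding'' is false as stated. Your fallback --- computing canonical encodings of all $n+1$ class members by persistent hashing and taking the lexicographic minimum --- is not established either: materializing $n+1$ canonical strings of length $\Theta(n)$ already costs $\Theta(n^2)$, the ``amortized constant additional work'' claim is asserted rather than proved, and hash-based comparison is randomized, so the deterministic $O(n)$ bound does not follow.

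The missing idea, and the way the paper proceeds, is a one-element augmentation that turns the exchange property into honest tree isomorphism: attach a \emph{dummy leaf} to the root of each co-tree, obtaining $T'(G)$ and $T'(H)$. Since Seidel complementation at $v$ exactly exchanges the roles of the leaf $v$ and the root, on the augmented tree it amounts to swapping the leaf $v$ with the dummy leaf; hence every member of the equivalence class of $G$ yields the same augmented tree up to labeled isomorphism (internal labels $0/1$ preserved, the dummy treated as an ordinary leaf, no root distinguished). Conversely, an isomorphism $\varphi\colon T'(G)\to T'(H)$ hands you the vertex $f=\varphi(du_{G})$ at which to complement, so that $H*f$ has the same co-tree as $G$. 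One linear-time tree-isomorphism test \cite{AhoHU74}, plus a constant-time Seidel complementation, then decides equivalence in $O(n)$. If you replace your centroid/label-flip canonicalization by this dummy-leaf augmentation (you may still canonicalize $T'(G)$ and $T'(H)$ bottom-up in AHU style and compare), the rest of your outline goes through.
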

\begin{proof}
Let us consider the co-trees $T(G)$ and $T(H)$. We modify 
$T(G)$ and $T(H)$ as follows: 
Let $T'(G)$ be the co-tree of $G$ on which we add a dummy vertex 
attached to the root of $T(G)$. We proceed in a similar manner 
for $T'(H)$. 

$G$ and $H$ are Seidel complement equivalent \Iff $T'(G)$ and $T'(H)$ 
are isomorphic.
\\
$\Rightarrow$ This direction is easy, since according to  
Remark  \ref{rem:exchange}, Theorem \ref{th:SeidelCograph} 
 and Lemma \ref{lem:distance1}, if 
$G$ and $H$ are Seidel complement equivalent then $T'(G)$ and 
$T'(H)$ are isomorphic.
\\
$\Leftarrow$
Let us assume now that $T'(G)$ and $T'(H)$ are isomorphic and 
let $\varphi : V(T'(G)) \mapsto V(T'(H))$ be the mapping function.
The isomorphism considered here is the labeled isomorphism, \ie 
labels of the internal nodes, 0 or 1, are preserved. 

Using the result of Theorem \ref{lem:distance1}, we know 
that cographs are at distance at most 1. It is thus sufficient 
to find the actual vertex to transform one co-tree into another.

Let us call the dummy vertices added to turn $T(G)$ (\resp $T(H)$) 
into $T'(G)$ (\resp $T'(H)$) $du_{G}$ and $du_{H}$. 
Now  since we want to transform $T(H)$ into $T(G)$ it suffices 
to pick a vertex $f$ in $T(H)$ such that it is the image 
by $\varphi$ of $du_{G}$ \ie $f = \varphi(du_{G})$. Once we have obtained 
this vertex in $T(G)$ it is sufficient to proceed to a Seidel complement 
on $f$, $H*f$, so now $P(f)$ is the root of $T(H*f)$ as requested since 
$f$ was an image of $du_{G}$ and $f$ is now attached to the former root 
$R(H)$. Consequently we have shown that when $T'(G)$ and $T'(H)$ are 
isomorphic we can find a vertex permitting us to transform $T(H)$ into $T(G)$
and hence proving that they are Seidel complement equivalent.

This procedure can be achieved in linear time, since deciding if 
two given trees are isomorphic is well known to be linear \cite{AhoHU74}, 
and finding the actual vertex and performing the Seidel complementation 
is done in constant time.
\end{proof}
\begin{lemma}
 Let $G$ and $H$ be two permutation graphs given with their representation 
then we can decide in $O(n^2)$ if $G$ is Seidel complement equivalent to $H$.
\end{lemma}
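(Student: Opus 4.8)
The plan is to combine Lemma \ref{lem:distance1} with the cheap description of Seidel complementation on permutation representations. By Lemma \ref{lem:distance1}, $G$ and $H$ are Seidel complement equivalent \Iff they are at distance at most $1$, \ie \Iff $H$ is isomorphic to one of the $n+1$ graphs $G, G*v_1, \dots, G*v_n$ obtained by a single Seidel complementation (together with the trivial case $G \cong H$). Hence it suffices to test, for every vertex $v$ of $G$, whether $G * v \cong H$. By Theorem \ref{th:SeidelPermutation} each $G*v$ is again a permutation graph, represented by $R * v = \{\sigma_1 * v, \sigma_2 * v\}$, and by Corollary \ref{cor:SeidelPermConstantTime} this representation is produced from $R$ by Operation $S$ in $O(1)$ incremental time; writing out one full candidate representation costs $O(n)$, so producing all $n+1$ of them fits comfortably within the $O(n^2)$ budget.

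First I would reduce the remaining work to an isomorphism test between two permutation graphs given by their representations, and I would implement that test through modular decomposition. From a representation the modular decomposition tree is computable in linear time (\cf \cite{TedderCHP08}), and two permutation graphs are isomorphic \Iff their modular decomposition trees coincide under a label-preserving tree isomorphism whose prime nodes carry isomorphic prime permutation graphs. The crucial point is that, by Gallai's theorem, a prime permutation graph is uniquely representable up to the reversal symmetries of the two words, so comparing a pair of prime nodes reduces to comparing canonical forms of their (essentially unique) restricted representations. This makes a single isomorphism test run in $O(n)$ time after an $O(n)$ preprocessing of $H$, and therefore testing $H$ against all $n+1$ candidates costs $O(n^2)$ in total, which is the claimed bound.

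The step I expect to be the main obstacle is the complexity accounting of the isomorphism subroutine. A generic permutation-graph isomorphism test costs $O(n^2)$ per pair, which would give $O(n^3)$ overall and destroy the bound; staying linear per test requires genuinely exploiting the uniqueness of prime permutation-graph representations at the prime nodes of the modular decomposition tree, while correctly handling the non-uniqueness at the series and parallel nodes. An alternative route, worth developing if the direct enumeration proves awkward, is to mimic the cograph argument of Lemma \ref{lem:SeidelCographEquivLinear}: use Theorem \ref{th:MDtreeSeidel} to describe the whole Seidel equivalence class as the set of trees obtained from $T(G)$ by reversing a root-to-leaf path and Seidel-complementing the prime nodes lying on it, and then match this structured family against $T(H)$ directly; here the $O(n^2)$ arises from the $O(n)$ candidate paths, each checkable in $O(n)$ because the off-path subtrees are left untouched.
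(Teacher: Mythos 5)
Your proposal follows essentially the same route as the paper: invoke Lemma \ref{lem:distance1} to reduce equivalence to distance at most one, enumerate the $n+1$ single-complement candidates (constant time each on the diagram, by Corollary \ref{cor:SeidelPermConstantTime}), and test each against the other graph with an $O(n)$-time permutation-graph isomorphism check, giving $O(n^2)$ overall. The only difference is that the paper simply asserts the linear-time isomorphism test on permutation diagrams, whereas you sketch how to realize it via modular decomposition and the uniqueness of prime representations --- a detail the paper leaves implicit, and which you handle correctly.
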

\begin{proof}
From Lemma \ref{lem:distance1} we know that every Seidel complement 
equivalent graphs are at distance at most one from each other. It suffices, 
\Wlog to apply on $H$ every possible Seidel complementation, and check 
if one the graph obtained is isomorphic to $G$. We can decide 
in $O(n)$ time if two permutation diagrams (and hence the corresponding 
permutation graphs). Finally, in the worst case we must 
try on all the vertices of $H$ if $G \cong H * v$. Seidel complementation 
on the permutation diagram can be performed in constant time, 
the final complexity is $O(n^2)$.
\end{proof}
\section{Tournaments}
\label{sec:tournaments}
We present in this section a notion of Seidel complementation applied to 
tournaments.

Let $T=(V,A)$ be a tournament. And let $v$ a vertex of $T$. 

\begin{definition}
\label{def:SeidelCtournament}
The Seidel complementation on a tournament $T$ applied on a vertex $v$ is defined as follows:
\begin{itemize}
 \item reverse the direction of all the arcs lying between $N^{+}(v)$ and $N^{-}(v)$.
 \item reverse all the arcs incident to $v$, \ie $N^+(v)$ becomes $N^-(v)$ and conversely.
\end{itemize}
\end{definition}

\begin{lemma}
Let $T$ be a tournament and let $v$ be a vertex of $T$.
$T$ is prime \wrt modular decomposition iff $T*v$ is prime.
\end{lemma}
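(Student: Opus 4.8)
The plan is to mirror the proof of Theorem \ref{th:SeidelMDPrime}, transported to the directed setting. First I would record the involutivity of the operation: a direct inspection of Definition \ref{def:SeidelCtournament} shows that $T*v*v = T$, since applying the operation twice reverses each affected arc an even number of times (arcs inside $N^+(v)$ and inside $N^-(v)$ are never touched, while arcs incident to $v$ and arcs between $N^+(v)$ and $N^-(v)$ are reversed exactly twice). Consequently it suffices to prove one implication, say that $T$ prime implies $T*v$ prime; the converse then follows by applying this to $T*v$ together with $(T*v)*v = T$. So I would argue by contradiction: assume $T$ is prime and that $T*v$ admits a non-trivial module $M$, and split into the cases $v \in M$ and $v \notin M$.

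\textbf{Case $v \in M$.} Here $|M| \geqslant 2$ and $|\overline{M}| \geqslant 1$, and the goal is to show that $\overline{M} \cup \{v\}$ is a non-trivial module of $T$. Since $M$ is a module of $T*v$ and $v \in M$, every $x \in \overline{M}$ is oriented towards $v$ exactly as it is towards the rest of $M$ in $T*v$; and because the arcs incident to $v$ were reversed, an $x \in N^-(v)$ loses to all of $M$ in $T*v$, while an $x \in N^+(v)$ beats all of $M$ in $T*v$. I would then translate each arc back to $T$ according to Definition \ref{def:SeidelCtournament} (arcs inside $N^+(v)$ or inside $N^-(v)$ kept, arcs across $N^+(v)$ and $N^-(v)$ reversed, arcs at $v$ reversed). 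A short bookkeeping then shows that in $T$ every vertex of $M \cap N^-(v)$ beats all of $\overline{M}\cup\{v\}$ and every vertex of $M \cap N^+(v)$ loses to all of $\overline{M}\cup\{v\}$; hence $\overline{M}\cup\{v\}$ is a module of $T$, non-trivial because $|\overline{M}| \geqslant 1$ and $|M\setminus\{v\}| \geqslant 1$, contradicting primality.

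\textbf{Case $v \notin M$.} Now the module property in $T*v$ forces $v$ to be oriented uniformly towards $M$, so $M$ lies entirely in $N^+_{T*v}(v)$ or entirely in $N^-_{T*v}(v)$, that is, after undoing the reversal of the arcs at $v$, entirely inside $N^-(v)$ or inside $N^+(v)$ in $T$; using the symmetry supplied by the involution I may assume \Wlog that $M \subseteq N^+(v)$. Then the arcs inside $M$ and the arcs between $M$ and $N^+(v)\setminus M$ are unchanged, the arcs between $M$ and $N^-(v)$ are reversed as a block, and $v$ is oriented uniformly towards $M$; since reversing a uniformly oriented set of arcs keeps it uniform, $M$ is still a module of $T$, again non-trivial, a contradiction.

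The routine but error-prone part — and the only real obstacle — is the arc-direction bookkeeping of the first case: one must check that the two reversal rules of Definition \ref{def:SeidelCtournament} combine so that the \emph{non}-uniform incidences of $M\cap N^+(v)$ versus $M\cap N^-(v)$ collapse into a single uniform behaviour of $\overline{M}\cup\{v\}$ in $T$. Drawing the four-vertex gadget analogous to Figure \ref{fig:VinM}, with representatives $A \in N^-(v)\cap M$, $B \in N^-(v)\cap\overline{M}$, $C \in N^+(v)\cap\overline{M}$ and $D \in N^+(v)\cap M$, makes this verification transparent, and is how I would present it.
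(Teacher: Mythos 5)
Your proof is correct and takes essentially the same approach as the paper: the paper's proof of this lemma is just the remark that the argument of Theorem \ref{th:SeidelMDPrime} carries over verbatim with edges and non-edges replaced by arcs, and your two cases ($v \in M$ producing the module $\overline{M}\cup\{v\}$ of $T$, and $v \notin M$ showing $M$ itself survives) are exactly that transported argument. You additionally spell out the involutivity $T*v*v = T$ and the directed arc-bookkeeping, which the paper leaves implicit, and both are verified correctly.
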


\begin{proof}
The proof is almost the same as the  proof of Theorem \ref{th:SeidelMDPrime}, 
it suffices to replace the edges and non edges by the arcs.
\end{proof}

The modular decomposition tree of a tournament is modified in the same way 
as described in Theorem \ref{th:MDtreeSeidel} for undirected graphs. It suffices 
to use Definition \ref{def:SeidelCtournament} instead of \ref{def:SeidelSwitchN}.

\section{Conclusion and Perspectives}

We have shown that the new paradigm of Seidel minor provides 
a nice and compact characterization of permutation graphs. 

A lot of questions remain open. 
\\
A natural question lies in the fact that Theorem \ref{th:SeidelMinorPermutation} 
is obtained using Gallai's result on forbidden induced subgraphs. Is it 
possible to give a direct proof of  Theorem \ref{th:SeidelMinorPermutation} without using 
Gallai's result?
\\
Another direction concerns graph decomposition. Oum
\cite{Oum05} has shown that local complementation preserves rank-width. Is 
there a graph decomposition that is preserved by Seidel complementation?
\\
Finally, it could be interesting to generalize the Seidel complement operator 
to directed graphs, and possibly to hypergraphs.

We hope that this Seidel minor will be relevant in the future as a tool 
to study graph decomposition and to provide similar characterizations, as 
the one presented for permutation graphs,  to other graph classes.

\paragraph{\bf Acknowledgement}
The author is grateful to M. Bouvel, B. Courcelle, D. Corneil, M.C. Golumbic, 
M. Habib, M. Kant\'e, F. de Montgolfier and M. Rao for fruitful discussions and for pointing out 
relevant references. 

\bibliography{Seidel}
\bibliographystyle{amsplain}
\end{document}